\newtheorem{theorem}{Theorem}[section]\newtheorem{lemma}[theorem]{Lemma}\newtheorem{remark}[theorem]{Remark}\newtheorem{corollary}[theorem]{Corollary}\newtheorem{proposition}[theorem]{Proposition}
    \newcommand\cD{\mathcal{D}}          \newcommand\cS{\mathcal{S}}  \newcommand\cI{\mathcal{I}}\newcommand\cN{\mathcal{N}}\newcommand\cM{\mathcal{M}}\newcommand\cK{\mathcal{K}}\newcommand\cJ{\mathcal{J}}
\DeclareMathOperator{\tr}{Tr}\DeclareMathOperator{\Ker}{Ker}\DeclareMathOperator{\re}{Re}\DeclareMathOperator{\dist}{dist}\let\div\relax\DeclareMathOperator{\div}{div}
\def\d{{\rm d}}
\let\C\relax\newcommand{\C}{\mathbb{C}}\newcommand{\Z}{\mathbb{Z}}\newcommand{\R}{\mathbb{R}}\newcommand{\N}{\mathbb{N}}\newcommand{\Q}{\mathbb{Q}}
\renewcommand{\ge}{\geqslant}\renewcommand{\le}{\leqslant}
\newcommand\cC{\mathcal{C}}
\newcommand{\ld}{L^2_{\textup{per}}} 
\newcommand{\nor}[2]{ \left\| #1 \right\|_{#2} } 
\newcommand{\pa}[1]{\left( #1 \right)} 
\newcommand{\acs}[1]{\left\{ #1 \right\}} 
\newcommand{\ab}[1]{\left|#1\right|} 
\newcommand{\ps}[1]{\left< #1 \right>} 
\newcommand\vp{\varphi} 
\def\eps{\varepsilon}\newcommand{\ep}{\varepsilon} 
\let\p\relax\newcommand{\p}{\psi} 
\newcommand{\na}{\nabla} 
\newcommand{\f}[2]{\frac{#1}{#2}}
\newcommand{\mymax}[1]{\underset{\substack{#1}}{\text{\normalfont{max}}}\quad}
\newcommand{\nak}{\nabla_{x,k}}
\def\L{\mathbb L}
\def\bH{{\bold H}}
\def\bS{{\bold S}}
\newcommand{\1}{\mathds 1}
\newcommand{\sC}{\mathscr{C}}
\newcommand{\au}{\gamma}
\newcommand{\cu}{\chi_1}
\newcommand{\cd}{\chi_2}
\newcommand{\cdep}{\pa{\chi_2}_\ep}
\newcommand{\Au}{A_2 }
\newcommand{\Auep}{\pa{A_2 }_\ep}
\newcommand{\Mz}{\widetilde\cM^{\ep,(2)}_{z,k}}
\newcommand{\Md}{\cM^{\ep,(2)}_{k}}
\newcommand{\Mzu}{\cM^{\ep,(1)}}
\newcommand{\Mze}{\widetilde\cM^{\ep,(2)}_{E^0_{\ell,k},k}}
\newcommand{\pE}{\p^\ep_{\ell,k}}
\newcommand{\pz}{\p^0_{\ell,k}}
\newcommand{\per}{{\rm per}}
\newcommand{\dps}{\displaystyle }
\renewcommand{\paragraph}[1]{\medskip {\bf #1}}
\title[Homogenization of periodic Schrödinger operators]{Second-order homogenization \\ of periodic Schrödinger operators \\ with highly oscillating potentials}
\author{Éric Cancès}
\address[Éric Cancès]{CERMICS, \'Ecole des Ponts and Inria Paris, 6 and 8 av. Pascal, 77455 Marne-la-Vallée, France} 
\email{eric.cances@enpc.fr}
\author{Louis Garrigue}
\address[Louis Garrigue]{CERMICS, \'Ecole des Ponts and Inria Paris, 6 and 8 av. Pascal, 77455 Marne-la-Vallée, France} 
\email{louis.garrigue@enpc.fr}
\author{David Gontier}
\address[David Gontier]{CEREMADE, University of Paris-Dauphine, PSL University, 75016 Paris, France \& ENS/PSL University, Département de Mathématiques et Applications, F-75005, Paris, France.
}
\email{gontier@ceremade.dauphine.fr}
\date{\today}
\begin{document} 
\maketitle

\begin{abstract}
We consider the homogenization at second-order in $\eps$ of $\L$-periodic Schr\"odinger operators with rapidly oscillating potentials of the form $H^\eps =-\Delta + \eps^{-1} v(x,\eps^{-1}x ) + W(x)$ on $L^2(\R^d)$, where $\L$ is a Bravais lattice of $\R^d$, $v$ is $(\L \times \L)$-periodic, $W$ is $\L$-periodic, and $\eps \in \N^{-1}$. We treat both the linear equation with fixed right-hand side and the eigenvalue problem, as well as the case of physical observables such as the integrated density of states. We illustrate numerically that these corrections to the homogenized solution can significantly improve the first-order ones, even when $\ep$ is not small.
\end{abstract}


\section{Introduction}

The homogenization of Schr\"odinger operators of the form
\begin{equation}\label{eq:main_op}
H^\eps = -\Delta + \eps^{-1} v\pa{x,\eps^{-1} x} + W(x) \quad \mbox{on } L^2(\cD),
 \end{equation}
 where $\cD$ is a domain of $\R^d$, $W \in L^\infty(\cD;\R)$, $\ep > 0$, and the potential $v \in \cC^0(\cD \times \R^d;\R)$ is periodic with zero mean in its second variable, has an already long history. To our knowledge, such operators have been first studied in the reference textbook \cite[Section 12]{BenLioPap78} in the case when $\cD$ is a bounded domain and $H^\varepsilon$ has form domain $H^1_0(\Omega)$ (Dirichlet boundary conditions).
 The authors identified a homogenized Schrödinger operator of the form
 \[
    H^0 := - \Delta + V_0 + W \quad \text{on} \quad L^2(\cD),
 \]
and a corrector $\chi_1 : \R^d \to \R$, and showed that, for $\mu < \min(\sigma(H^0))$, 
 \[
 \lim_{\eps \to 0} \, {\nor{u^\ep_{\mu,f} - \pa{1 + \eps \cu \pa{\ep^{-1} \cdot}}  u^0_{\mu,f}}{H^1(\cD)} = 0},
 \]
 where the functions $u_{\mu,f}^\eps$ and  $u_{\mu,f}^0$ are respectively the solutions to the equations $(H^\eps-\mu)u_{\mu,f}^\eps=f$ and $(H^0-\mu)u_{\mu,f}^0=f$ with Dirichlet boundary conditions.
 They also proved error bounds on the ground-state eigenvalue: ${\ab{E_1^\eps - E_1^0} \le c \ep}$, where $E_\ell^\eps$ and $E_\ell^0$ denote the lowest $\ell^{\rm th}$ eigenvalue of $H^\eps$ and~$H^0$, respectively, counting multiplicities. 
 Then in \cite{Zhang21}, using techniques of \cite{KenLinShe13}, Zhang proved that there is a constant $c \in \R_+$ independent of $\eps$ such that
 \[
 \nor{u^\ep_{\mu,f} - \pa{1 + \eps \cu\pa{\ep^{-1} \cdot}}  u^0_{\mu,f}}{H^1(\cD)} \le c \ep, 
 \quad  \quad
 \ab{E_{\ell}^\eps - E_{\ell}^0} \le c \pa{E^0_{\ell}-\mu}^{\f 32} \ep.
 \]
 
 The present article is a contribution to the study of Schr\"odinger operators of the form \eqref{eq:main_op}, whose originality is threefold. First, we consider a fully periodic setting, that is $\cD=\R^d$, $v$ is $\L \times \L$-periodic, $W$ is $\L$-periodic, and $\eps \in \N^{-1}$, where $\L$ is a given Bravais lattice of $\R^d$.  Our results can be extended to some point to rational values of $\eps$ (see Remark~\ref{rem:rational} and Section~\ref{sec:numerical_res}); on the other hand, the irrational case corresponding to incommensurate potentials remains out of the scope of this study. As will be clarified in a future work, Schr\"odinger operators similar to these ones are encountered in the study of the electronic structure of emerging moiré materials~\cite{Andrei_2021}. Possible applications to cold-atom lattices or optical lattices could also be considered. Second, we provide $H^1$-approximations of eigenmodes of order $\eps^2$, and approximations of eigenvalues up to order $\eps^4$. Third, we use a slightly different strategy of proof based on resolvent estimates.
 
 \medskip
 
Let us briefly explain our strategy in the simple case of first-order approximations. When $\eps \in \N^{-1}$, both operators $H^\eps$ and $H^0$ commute with $\L$-translations, and one can consider their Bloch transforms~\cite[Chapter XIII]{ReeSim4}. For $k$ in the first Brillouin zone $\Omega^*$, we denote their respective Bloch fibers by $H^\eps_k$ and $H^0_k$. First, we identify a first order corrector $\cM^{\eps, (1)}$ so that, for all $\eps \in \N^{-1}$, and all $z \in \C$,
 \[
    \left\| \left( H^\eps_k - z \right) \cM^{\eps, (1)} - \left( H^0_k - z \right) \right\|_{H^2_{\per} \to H^{-1}_\per} \le C (1 + | z |) \eps,
 \]
 for a constant $C \ge 0$ independent of $z \in \C$, $k \in \Omega^*$ and $\eps \in \N^{-1}$ (see Lemma~\ref{th:key_th}). Note that the operators are seen as bounded operators from the $H^2_\per$ to the $H^{-1}_\per$ Sobolev spaces, instead of the {\em usual} $H^2_\per$ and $L^2_\per$ ones (in some sense, we loose one derivative because of the highly-oscillatory potential). Next, we transform the previous inequality into a resolvent estimate, of the form
 \[
    \nor{\pa{H^\ep_k -z}^{-1} - \Mzu \pa{H^0_k -z}^{-1}}{L^2_\per \rightarrow H^1_\per} \le C(k,z) \ep,
 \]
 see Theorem~\ref{thm:res}. From this estimate and usual functional calculus, we derive {\em a priori} estimates on the Bloch functions of $H^\eps$ (Section~\ref{sec:homog_eig}) as well as on several quantities of interest (QoI) such as the kinetic and potential energies of the fermionic ground states, and the integrated density of states (IDOS) of the operator $H^\eps$ (Section~\ref{subsec:phys_quantities}).

\medskip

 We believe that our technique could be used to address other cases. Let us mention for instance the analysis of the scattering and interface properties of Schr\"odinger operators with highly-oscillatory potentials~\cite{DucWei11}, or the homogenization of operators in divergence form $- \div A(x,\f x\ep) \na$, which have been thoroughly studied in~\cite{MosVog97,Zhuge20,Kesavan79,SanVog93} for first order corrections, and in~ \cite{AllAma99,ConOriVan08,BelFehFis17} for second order ones (see also \cite{AllHab13,HawAhm21} for other models).

\medskip

The article is organized as follows. The main theoretical results are presented in Section~\ref{sec:main_result}, and are illustrated numerically in Section~\ref{sec:numerical_res}. We show in particular that our second-order corrector greatly improve the first-order one, even in a regime where $\ep$ is not so small.
Interestingly, the Ansatz for second-order homogenization given by the formal two-scale expansion (Section~\ref{sec:formal_expansions})  turns out to be unsuitable for some applications. We propose a variant of it, better suited to e.g. compute spectral projectors and related physical properties, or deal with degenerate eigenvalues. 
All the proofs are postponed until Section~\ref{sec:proofs}.

\medskip


\subsection*{Acknowledgement}
This project has received funding from the European Research Council (ERC) under the European Union's Horizon 2020 research and innovation programme (grant agreement EMC2 No 810367) and from the Simons foundation (Targeted Grant Moir\'e Materials Magic).

\section{Main results}\label{sec:main_result}

We introduce some notation in Section~\ref{sec:notation}, and define first- and second-order corrector operators in Section~\ref{sec:def_correctors}. We then state in Section~\ref{sec:resolvent} estimates on the error between the resolvent $(H^\eps_k-z)^{-1}$ of the Bloch fiber $H^\eps_k$ of $H^\eps$ and computable first and second-order approximations of it given by homogenization theory. These estimates provide error bounds on the solution to the linear equation $(H^\eps_k-z)u^\eps=f$. They also pave the way to the derivation of error estimates on eigenvalues and eigenfunctions (Section~\ref{sec:homog_eig}) via Cauchy's residue formula. Error bounds on physical quantities of interest are given in Section~\ref{subsec:phys_quantities}.

\subsection{Notation and mathematical framework}
\label{sec:notation}

We consider a Bravais lattice $\L= \Z e_1 + \cdots + \Z e_d$ of $\R^d$ with unit cell 
$\Omega = (0,1) e_1 + \cdots + (0,1) e_d$, and denote by $\L^*$ the dual lattice of $\L$ and $\Omega^*$ the first Brillouin zone. 
We also denote by
\begin{align*}
	& \qquad\qquad\qquad L^p_{\rm per}(\Omega) := \{ f \in L^p_{\rm loc}(\R^d) \; | \; f \mbox{ $\L$-per}\}, \\
	&W^{s,p}_\per(\Omega) :=  \{ f \in W^{s,p}_{\rm loc}(\R^d) \; | \; f \mbox{ $\L$-per}\}, \qquad H^s_{\rm per}(\Omega):=W^{s,2}_{\rm per}(\Omega),
\end{align*}
the $\L$-periodic Lebesgue and Sobolev spaces for all $1 \le p \le \infty$ and $s \in \R$. We endow $H_\per^s(\Omega)$ with its natural inner product
\[
\langle f, g \rangle_{H^s_\per} := \sum_{n \in \L^*} \left( 1 + | n |^2 \right)^s \overline{c_n(f)} c_n(g),
\]
where, for $n \in \L^*$, $c_n(f) :=  | \Omega |^{-1/2} \int_\Omega f(x) e^{ - i n \cdot x} \d x$ is the $n$-th  Fourier coefficient of the $\L$-periodic distribution $f$.

\medskip

We also introduce the following Banach spaces of $\L \times \L$-periodic functions: for all $m,n \in \N$,
\begin{align*}
	\cC^0_\per(\Omega;\cC^0_\per(\Omega))&:= \left\{ f \in \cC^0(\R^d \times \R^d) \; \big| \; f \mbox{ $\L \times \L$-periodic} \right\}, 
\\
\cC^m_\per(\Omega;\cC^n_\per(\Omega))&:= \bigg\{ f \in \cC^0_\per(\Omega;\cC^0_\per(\Omega)) \; \big| \; \\
\forall \alpha,&\beta \in \N^d \; \mbox{s.t.} \; |\alpha| \le m,  \; |\beta| \le n, \; \frac{\partial^{|\alpha|+|\beta|} g}{\partial x^\alpha \partial^\beta y} \in \cC^0(\R^d \times \R^d)  \bigg\}, 
\\
	\cC^m_\per(\Omega;\cC^n_{\per,0}(\Omega))&:= \left\{ f \in \cC^m_\per(\Omega;\cC^n_\per(\Omega)) \; \big| \; \fint_\Omega f(\cdot ,y) \, dy = 0 \right\}, 
\end{align*}
where $\fint_\Omega:=\frac{1}{|\Omega|}\int_\Omega$. We denote the two variables of $f \in \cC^0_{\rm per}(\Omega;\cC^0_\per(\Omega))$ by $x$ and $y$ respectively, and call $x$ the macroscopic variable and $y$ the microscopic one. We endow $\cC^m_\per(\Omega;\cC^n_\per(\Omega))$ with the norm
 \begin{align*}
\|g\|_{\cC^m_\per(\cC^n_\per)} := \mymax{\alpha \in \N^d \\|\alpha| \le m} \mymax{\beta \in \N^d \\|\alpha| \le n}  \mymax{(x,y) \in \R^d \times \R^d}  \left| \frac{\partial^{|\alpha|+|\beta|} g}{\partial x^\alpha \partial y^\beta} (x,y) \right|. 
 \end{align*}

\medskip

We are interested in studying the Schr\"odinger operator
\begin{equation} \label{eq:cH}
\boxed{ H^\ep= -\Delta + \eps^{-1} v \left( x, \eps^{-1} x  \right) + W(x) 
\quad \mbox{on } L^2(\R^d),}
\end{equation}
where $\eps > 0$ and 
\begin{align}\label{hypo:on_v_W}
	 v \in \cC^2_\per(\Omega,\cC^0_{\per,0}(\Omega)), \qquad W \in W^{1,\infty}_\per\pa{\Omega}.
 \end{align}
These smoothness conditions may probably be weakened, but we do not explore this question. Note that some smoothness properties are required for the function $v$ so that $x \mapsto \eps^{-1} v(x, \eps^{-1} x)$ is a well-defined function (this may not hold for instance if $v$ is only defined {\em almost everywhere}).

\medskip

We will repeatedly use the next Lemma, which is a straightforward consequence of the continuous embeddings $\cC^0_\per(\Omega) \hookrightarrow L^p_\per(\Omega)$ and $W^{2,p}(\Omega) \hookrightarrow \cC^1_\per(\Omega)$ for all $1 \le p < \infty$, together with elliptic regularity.

\begin{lemma} \label{lem:regularity_assumption}
    Let $g \in \cC^m_\per(\Omega ; \cC^0_\per(\Omega))$, and set $M_g(x) := \fint_\Omega g(x,y) \, dy$. The function $g - M_g$ is in $\cC^m_\per(\Omega ; \cC^0_{\per,0}(\Omega))$, and there is a unique solution $h_g$ in $ \cC^m_\per(\Omega ; \cC^1_\per(\Omega))$ to the family of elliptic problems
    \[
        (- \Delta_y h_g)(x,y) = (g - M_g)(x,y).
    \]
    In addition, there is $C \in \R_+$ independent of $g$ so that
    \begin{equation}\label{eq:elliptic_reg}
        \|h_g\|_{\cC^m_\per(\cC^1_\per)} \le C_m \| g \|_{\cC^m_\per(\cC^0_\per)}.
    \end{equation}
\end{lemma}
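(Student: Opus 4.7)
My plan is to proceed in three steps, matching the structure of the statement.

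First, I would show that $M_g \in \cC^m_\per(\Omega)$, which immediately implies $g - M_g \in \cC^m_\per(\Omega; \cC^0_{\per,0}(\Omega))$. Since $g \in \cC^m_\per(\Omega; \cC^0_\per(\Omega))$, uniform continuity of $\partial_x^\alpha g$ on $\Omega \times \Omega$ (for $|\alpha| \le m$) allows to differentiate under the integral sign, giving $\partial_x^\alpha M_g(x) = \fint_\Omega \partial_x^\alpha g(x,y)\,\d y$ continuous in $x$, and the zero-mean condition in $y$ on $g - M_g$ is immediate by construction.

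Second, I would establish the base case $m = 0$. For each fixed $x \in \R^d$, the datum $f_x := (g - M_g)(x,\cdot) \in \cC^0_{\per,0}(\Omega) \subset L^p_\per(\Omega)$ has zero mean, so the Poisson equation $-\Delta_y h = f_x$ on the torus admits the unique zero-mean distributional solution
\[
    h_g(x, y) = \sum_{n \in \L^* \setminus \{0\}} |n|^{-2} \, c_n(f_x) \, e^{i n \cdot y}.
\]
Standard $L^p$-elliptic regularity on the torus gives $\|h_g(x,\cdot)\|_{W^{2,p}_\per} \le C_p \|f_x\|_{L^p_\per}$, and fixing any $p > d$ together with the Sobolev embedding $W^{2,p}_\per(\Omega) \hookrightarrow \cC^1_\per(\Omega)$ yields
\[
    \|h_g(x,\cdot)\|_{\cC^1_\per} \le C \|f_x\|_{L^p_\per} \le C' \|g\|_{\cC^0_\per(\cC^0_\per)}
\]
uniformly in $x$. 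Joint continuity of $h_g$ and $\nabla_y h_g$ in $(x,y)$ then follows from continuity of $x \mapsto f_x$ valued in $L^p_\per(\Omega)$ (itself a consequence of uniform continuity of $g$) combined with boundedness of $(-\Delta_y)^{-1}$ from $L^p_\per$ to $\cC^1_\per$.

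Third, I would bootstrap to $m \ge 1$ by differentiating the equation in $x$. For each $\alpha \in \N^d$ with $|\alpha| \le m$, applying the base case to the datum $\partial_x^\alpha(g - M_g) \in \cC^0_\per(\Omega;\cC^0_{\per,0}(\Omega))$ produces a unique zero-mean function $h^{(\alpha)} \in \cC^0_\per(\Omega; \cC^1_\per(\Omega))$ solving $-\Delta_y h^{(\alpha)} = \partial_x^\alpha(g - M_g)$, with $\|h^{(\alpha)}\|_{\cC^0_\per(\cC^1_\per)} \le C \|g\|_{\cC^m_\per(\cC^0_\per)}$. Since $\partial_x^\alpha$ commutes term-by-term with $(-\Delta_y)^{-1}$ on zero-mean data, as is immediate from the Fourier expression above, one identifies $\partial_x^\alpha h_g = h^{(\alpha)}$; summing over $|\alpha| \le m$ and $|\beta| \le 1$ yields $h_g \in \cC^m_\per(\Omega; \cC^1_\per(\Omega))$ together with \eqref{eq:elliptic_reg}. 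The only genuine subtlety is a gauge ambiguity in the Poisson equation: two $\cC^1_\per$ solutions of $-\Delta_y h = f_x$ differ by a function of $x$ only, so \emph{uniqueness} in the statement must be understood under the normalization $\fint_\Omega h_g(x,y)\,\d y = 0$ for all $x$, which is exactly the solution produced by the Fourier construction; this is a minor point but worth making explicit in the write-up.
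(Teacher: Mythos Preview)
Your proof is correct and follows exactly the approach the paper indicates: the authors do not spell out a proof but simply note that the lemma is ``a straightforward consequence of the continuous embeddings $\cC^0_\per(\Omega) \hookrightarrow L^p_\per(\Omega)$ and $W^{2,p}(\Omega) \hookrightarrow \cC^1_\per(\Omega)$ for all $1 \le p < \infty$, together with elliptic regularity,'' which is precisely the chain of estimates you implement. Your observation that uniqueness requires the zero-mean normalization $\fint_\Omega h_g(x,y)\,\d y = 0$ is a useful clarification that the paper leaves implicit.
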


\subsection{Definition of the correctors} \label{sec:def_correctors}
From now on, we use the notation
\begin{align*}
\boxed{f_\ep(x) := f\pa{x,\eps^{-1} x}.}
\end{align*}
The study of the operator $H^\eps$ in~\eqref{eq:main_op} is numerically difficult, due to the large and fast oscillating potential $\eps^{-1} v_\eps$. In order to define the homogenized operator, we introduce the function $\cu \in \cC^2_\per(\Omega;\cC^1_{\per,0}(\Omega))$ solution to
\begin{equation} \label{eq:def_chi}
    \Delta_y \cu(x,y) = v(x,y).
\end{equation}
The function $\cu$ is involved in the definition of the first-order corrector of the homogenized problem and will play a key role in the sequel. We also define the homogenized (macroscopic) potential $V \in \cC^2_{\rm per}(\Omega)$ by
\begin{equation} \label{eq:def_V}
    V(x) := \fint_\Omega v(x,y) \cu(x,y) \, \d y = - \fint_\Omega |\nabla_y\cu(x,y)|^2 \, \d y.
\end{equation}
The above formulae for $V_0$ and $\cu$ result from a formal two-scale expansion presented in Section~\ref{subsec:TSE_eigenvalue}. As will be shown later, and as expected from the results in \cite{BenLioPap78, Zhang21}, the (first-order) homogenized Schr\"odinger operator is given by
\begin{equation} \label{eq:cH0}
    \boxed{ H^0 := -\Delta +V_0(x)  \quad \text{on } L^2(\R^d), \quad \mbox{with} \quad  V_0(x):=V(x)+W(x)}
\end{equation}

Since $V_0$ and $W$ are $\L$-periodic, $H^0$ is decomposed by the Bloch transform (see e.g.~\cite[Chapter XIII]{ReeSim4}): we denote by $\nabla_k:=\nabla+ik$,  and
\begin{equation}\label{eq:Bloch_dec_of_H0}
    H_k^0 := (-i\nabla_k)^2+ V_0 = - \nabla_k^2 + V_0 = -\Delta - 2ik \cdot \nabla + |k|^2 + V_0,
\end{equation} 
the Bloch fiber of $H^0$ at quasi-momentum $k \in \R^d$. The operator $H^0_k$ is self-adjoint on $L^2_{\rm per}(\Omega)$ with domain $H^2_{\rm per}(\Omega)$. It is bounded from below and has compact resolvent. 

\medskip

In general, the operator $H^\eps$ is not $\L$-periodic. However, it is $\L$-periodic if $\eps \in \N^{-1}$, in which case we can also define its Bloch decomposition:
\begin{equation}\label{eq:Bloch_dec_of_H}
    H^\eps_k: = - \nabla_k^2 + \eps^{-1} v_\eps +W, \quad \mbox{ for } \eps \in \N^{-1}.
\end{equation} 

\begin{remark} [Supercells]
\label{rem:rational}
When $\eps = p/q \in \Q$ is rational, the operators $H^0$ and $H^\eps$ are both $p\L$-periodic. The results below allow one to study the limit $\eps \to 0$ for a family of rational numbers with uniformly bounded numerators. The constants appearing in our estimates depend {\em a priori} on $p$, the numerical simulations in Section~\ref{sec:numerical_res} suggest that they do not however.
\end{remark}

The operator $H^\eps_k$ is difficult to handle numerically for small values of $\ep$ by brute force approaches because of the highly-oscillatory, large magnitude, potential $\ep^{-1} v_\ep$. Our goal is to approach the eigenmodes of $H^\eps_k$ by post-treatment of the ones of the homogenized $H^0_k$ (or of a perturbation of $H^0_k$ {\em at the macroscopic scale}, see below). The corresponding expressions are first inferred formally by a two-scale expansion, and then justified by rigorous error bounds. 


We already defined the function $\chi_1$ in~\eqref{eq:def_chi}. 
We now introduce the first-order corrector as the multiplication operator $\cM^{\eps,(1)}$ on $L^2_{\rm per}(\Omega)$ defined by
\begin{equation}\label{eq:M1}
    \boxed{ \Mzu := 1 + \eps (\chi_1)_\eps.}
\end{equation}
In order to define our second order correctors, we first define the macroscopic potential $V_1 \in \cC^1_{\rm per}(\Omega)$ by
\begin{equation} \label{eq:def_V1}
     V_1(x) := 2  \fint_\Omega \left( \nabla_x\cu(x,y) \cdot \nabla_y\cu(x,y) - \cu(x,y)|\nabla_y\cu(x,y)|^2 \right) \, \d y.
\end{equation}
We also define the two-scale vector field $A_2  \in \left( \cC^2_\per(\Omega;\cC^1_{\per,0}(\Omega)) \right)^d$ solution to 
\begin{equation} \label{eq:def_A}
    \Delta_y A_2 (x,y) := - 2 \left( \nabla_y \chi_1 \right) (x,y), \quad   \mbox{for a.e. } x \in \R^d, 
\end{equation}
and the second-order corrector $\chi_2 \in  \cC^1_\per(\Omega;\cC^1_{\per,0}(\Omega))$ as the solution to
\begin{equation} \label{eq:def_chi2}
    \Delta_y \chi_2(x,y) :=  \left[ \chi_1(x,y) v(x,y) - V(x) \right]  - 2 [(\nabla_x \cdot \nabla_y) \chi_1] (x,y).
\end{equation}
Whereas there is only one natural way to define a homogenized operator and an associated corrector for first-order homogenization, there are (at least) two natural ways to do that at second order. The first way is to use the pair $(H^0_k,\Mz)$ suggested by the formal two-scale expansion (see Section~\ref{subsec:TSE_eigenvalue}); the second one is to use the pair $(H^0_k+\eps V_1,\Md)$ where $\Md$ is defined as follows:
\begin{equation} \label{eq:def_M2}
    \boxed{ \Md := 1 + \eps (\chi_1)_\eps
    + \ep^2 \left( \pa{\chi_2}_\eps + \Auep \cdot \na_k \right).}
\end{equation}
The operators $\Md$ and $\Mz$ are related by the formula
\begin{equation} \label{eq:def_M2z}
	\boxed{\Mz = \Md -\ep\pa{1+\ep (\chi_1)_\eps} \pa{H^0_k - z}^{-1}_\perp V_1,}
\end{equation}
where $(H^0_k - z)^{-1}_\perp$ denotes the pseudo-inverse of $(H^0_k - z)$ (i.e. $(H^0_k - z)^{-1}_\perp$ coincides with the inverse of $(H^0_k - z)$ on $(\Ker (H^0_k -z))^\perp$ and is identically equal to $0$ on $\Ker (H^0_k -z)$). The respective merits of these two approaches are discussed in Remark~\ref{rem:M2vstM2} below.

\subsection{Approximations of the resolvent and solutions to the linear equation}
\label{sec:resolvent}

The following theorem is key for deriving all the results in this article.

\begin{theorem}[Approximations of the resolvent]\label{thm:res} Let $v$ and $W$ satisfying \eqref{hypo:on_v_W}, $\omega^*$ a compact subset of $\overline{\Omega^*}$, and $\sC$ a compact subset of $\C$ such that
$$
\dist \left( \sC, \overline{\bigcup_{k \in \omega^*}\sigma(H^0_k)}\right) > 0.
$$
Then, there exist $C \in \R_+$ and $\eps_0 > 0$ such that, for all $\eps \in (0, \eps_0] \cap \N^{-1}$, $k \in \omega^*$ and $z \in \sC$, we have  $z \notin \sigma(H^\eps_k) \cup \sigma(H^0_k+\eps V_1)$ and
    \begin{align}
     \left\| \pa{H^\ep_k -z}^{-1} - \pa{H^0_k -z}^{-1} \right\|_{L^2_\per \to L^2_\per}  &\le C \ep,  \label{eq:res_L2} \\
        \nor{\pa{H^\ep_k -z}^{-1} - \Mzu \pa{H^0_k -z}^{-1}}{L^2_\per \rightarrow H^1_\per} &\le C \ep,\label{eq:resest1} \\
        \nor{\pa{H^\ep_k -z}^{-1} - \cM^{\eps, (2)}_{k} \pa{(H^0_k+\eps V_1) -z}^{-1}}{H^1_\per \rightarrow H^1_\per} &\le C \ep^2\label{eq:resest2}, \\
        \nor{\pa{H^\ep_k -z}^{-1} - \Mz \pa{H^0_k -z}^{-1}}{H^1_\per \rightarrow H^1_\per} &\le C \ep^2\label{eq:resest3}.
    \end{align}
    
\end{theorem}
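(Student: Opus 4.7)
The common backbone of all four estimates is the resolvent-type identity
\begin{equation*}
\cM \, (L - z)^{-1} - (H^\ep_k - z)^{-1} \;=\; (H^\ep_k - z)^{-1} \bigl[ (H^\ep_k - z)\cM - (L - z) \bigr] (L - z)^{-1},
\end{equation*}
where $L$ denotes the appropriate reference operator ($H^0_k$ for \eqref{eq:res_L2}, \eqref{eq:resest1}, \eqref{eq:resest3}; $H^0_k + \ep V_1$ for \eqref{eq:resest2}) and $\cM$ is the matching corrector. Exploiting this identity requires two ingredients: (i) a \emph{consistency lemma} bounding the bracketed operator in a suitable Sobolev operator norm; and (ii) a uniform-in-$(\ep,k,z)$ bound on $(H^\ep_k - z)^{-1}$ between the corresponding negative and positive Sobolev spaces. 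Compactness of $\omega^*$ and $\sC$ is used only to make all constants uniform in $k$ and $z$.

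For ingredient (i) at first order, I would compute $(H^\ep_k - z)\cM^{\ep,(1)} - (H^0_k - z)$ directly using the two-scale chain rule $\nabla[(f)_\ep] = (\nabla_x f)_\ep + \ep^{-1}(\nabla_y f)_\ep$. The most singular contribution $\ep^{-1} v_\ep$ cancels thanks to $\Delta_y \chi_1 = v$, and after subtracting $(H^0_k - z) = -\nabla_k^2 + V + W - z$ the remainder is a sum of (a) terms explicitly prefactored by $\ep$, (b) zero-mean-in-$y$ oscillatory contributions (such as $v_\ep (\chi_1)_\ep - V$ or $(\nabla_y \chi_1)_\ep \cdot \nabla_k$), and (c) a term $-\ep z (\chi_1)_\ep$ responsible for the $(1+|z|)$ factor. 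By Lemma~\ref{lem:regularity_assumption}, every such zero-mean function $g$ is $\Delta_y h$ for a $\cC^1_\per$-bounded periodic $h$; writing $g_\ep = \ep^2 \Delta h_\ep - \ep^2 (\Delta_x h)_\ep - 2\ep (\nabla_x \cdot \nabla_y h)_\ep$ and integrating by parts once, one shows that multiplication by $g_\ep$ has norm $O(\ep)$ as an operator $H^1_\per \to H^{-1}_\per$. This yields
\begin{equation*}
\| (H^\ep_k - z)\cM^{\ep,(1)} - (H^0_k - z) \|_{H^2_\per \to H^{-1}_\per} \;\le\; C(1+|z|)\ep.
\end{equation*}
The second-order analog is a longer but structurally identical calculation: the definitions \eqref{eq:def_chi2}, \eqref{eq:def_A}, \eqref{eq:def_V1} of $\chi_2$, $A_2$ and $V_1$ are exactly what is required to cancel all $O(1)$ and $O(\ep)$ residuals of $(H^\ep_k - z)\cM^{\ep,(2)}_k - (H^0_k + \ep V_1 - z)$, yielding an $O((1+|z|)\ep^2)$ bound in $H^3_\per \to H^{-1}_\per$.

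For ingredient (ii) I would first derive a uniform Gårding inequality. Rewriting $\ep^{-1} v_\ep = \ep \Delta(\chi_1)_\ep - \ep(\Delta_x \chi_1)_\ep - 2(\nabla_x\cdot\nabla_y\chi_1)_\ep$ and integrating by parts in $\langle \ep^{-1} v_\ep \phi,\phi\rangle$, noting that $\ep\nabla(\chi_1)_\ep$ is $L^\infty$-bounded uniformly in $\ep$, one obtains $|\langle \ep^{-1} v_\ep \phi,\phi\rangle| \le \delta\|\nabla\phi\|_{L^2}^2 + C_\delta \|\phi\|_{L^2}^2$ for any $\delta > 0$, uniformly in $\ep$. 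Hence, for $\lambda$ sufficiently large, $H^\ep_k + \lambda \ge c\|\cdot\|_{H^1_\per}^2$ uniformly in $\ep, k$, giving $\|(H^\ep_k + \lambda)^{-1}\|_{H^{-1}_\per \to H^1_\per} \le C$. A Fredholm/Neumann-series argument combining this with the first-order consistency bound and the hypothesis that $\sC$ avoids $\overline{\bigcup_k \sigma(H^0_k)}$ then shows $\sC \cap \sigma(H^\ep_k) = \emptyset$ for $\ep$ small and promotes the resolvent bound to $\|(H^\ep_k - z)^{-1}\|_{H^{-1}_\per \to H^1_\per} \le C$ uniformly for $z \in \sC$; the corresponding statement for $H^0_k + \ep V_1$ is an immediate small-perturbation argument since $V_1 \in L^\infty_\per$.

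With (i) and (ii) in hand, the four estimates follow by composition. Applying the identity with $\cM = \cM^{\ep,(1)}$ and reading the norms as $L^2_\per \to H^2_\per$ for $(H^0_k-z)^{-1}$, $H^2_\per \to H^{-1}_\per$ for the bracket, and $H^{-1}_\per \to H^1_\per$ for $(H^\ep_k - z)^{-1}$, yields \eqref{eq:resest1}; \eqref{eq:res_L2} then follows by adding $\|\cM^{\ep,(1)}(H^0_k-z)^{-1} - (H^0_k-z)^{-1}\|_{L^2_\per \to L^2_\per} = \|\ep(\chi_1)_\ep(H^0_k-z)^{-1}\|_{L^2_\per \to L^2_\per} \le C\ep$. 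Estimate \eqref{eq:resest2} is the analogous argument with $\cM^{\ep,(2)}_k$ and $L = H^0_k + \ep V_1$. Finally, \eqref{eq:resest3} is deduced from \eqref{eq:resest2} using the second resolvent identity $((H^0_k + \ep V_1) - z)^{-1} = (H^0_k - z)^{-1} - \ep (H^0_k - z)^{-1} V_1 ((H^0_k + \ep V_1) - z)^{-1}$ and the defining relation \eqref{eq:def_M2z}; the residual discrepancy is $O(\ep^2)$ in $H^1_\per \to H^1_\per$ thanks to the uniform boundedness of $V_1$, $\chi_1$ and the resolvents involved. The main obstacles I anticipate are the bookkeeping of the second-order two-scale expansion—roughly a dozen terms must be grouped and matched against the chosen correctors—and the Fredholm step of (ii), where uniformity in $(k,z)$ over a compact set and smallness in $\ep$ must be established simultaneously.
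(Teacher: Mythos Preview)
Your proposal is correct and follows essentially the same route as the paper: the resolvent identity, the consistency estimates (the paper's Lemma~\ref{th:key_th}, proved exactly via the two-scale chain rule and the $H^1_\per\to H^{-1}_\per$ oscillatory-multiplication bound you describe, which is Lemma~\ref{lem:geps}), the uniform G\aa rding inequality yielding $\|(H^\ep_k+\mu)^{-1}\|_{H^{-1}_\per\to H^1_\per}\le C$ (Lemma~\ref{basiclemma}), and the bootstrap to $z\in\sC$ via spectral convergence (Lemma~\ref{basiclemma2}).

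The one place where you diverge is the derivation of \eqref{eq:resest3}. The paper does \emph{not} deduce it from \eqref{eq:resest2}; instead it computes a separate operator expansion $(H^\ep_k-z)\Mz-(H^0_k-z)=\ep P^0_{k,z}V_1+\ep\cK^{\ep,(1)}_{k,z}+\ep^2\cK^{\ep,(2)}_{k,z}$ (equation~\eqref{eq:expansion_mz}; here $P^0_{k,z}=0$ since $z\notin\sigma(H^0_k)$) and obtains a direct consistency bound \eqref{eq:op_approx3}, which it then feeds into the same resolvent identity. Your route via the second resolvent formula and the relation \eqref{eq:def_M2z} also works, but the ``residual discrepancy'' you mention is not quite immediate: after expanding you must handle terms such as $\ep^3\bigl[(\chi_2)_\ep+(A_2)_\ep\cdot\nabla_k\bigr](H^0_k-z)^{-1}V_1((H^0_k+\ep V_1)-z)^{-1}$, and the multiplication operators $(\chi_2)_\ep$, $(A_2)_\ep$ are only $O(\ep^{-1})$ as maps $H^1_\per\to H^1_\per$ (their gradients blow up); one needs the extra smoothing $(H^0_k-z)^{-1}:H^1_\per\to H^3_\per$ to recover an overall $O(\ep^2)$. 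This is fine, but it is precisely the kind of bookkeeping the paper's direct expansion avoids.
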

The inequality~\eqref{eq:res_L2} implies in particular that the spectrum of $H_k^\eps$ converges to the one of $H_k^0$. In particular, the operator $H_k^\eps$ is bounded from below with a bound independent of $\eps$.

\medskip

The following corollary is a straightforward reformulation of Theorem~\ref{thm:res}. We state it as an independent result because of its importance for practical applications, and skip its proof for the sake of brevity. For $f$ in $L^2_\per(\Omega)$, we define by $u^\eps_{k, z, f}$, $u^0_{k, z, f}$ and $u^{0, \eps}_{k, z, f}$ the solutions, when they exist, to
\[
    (H_k^\eps - z) u = f, \quad (H_k^0 - z) u = f, \quad \text{and} \quad (H_k^0 + \eps V_1 - z) u = f.
\]

\begin{corollary}[First and second-order homogenization of the linear equation]\label{cor:main_LE}
	Let $v$ and $W$ satisfying \eqref{hypo:on_v_W}, $\omega^*$ a compact subset of $\overline{\Omega^*}$, and $\sC$ a compact subset of $\C$ such that
$$
    \dist \left( \sC, \overline{\bigcup_{k \in \omega^*}\sigma(H^0_k)}\right) > 0.
$$
Then, there exists $ C \in \R_+$ and $\eps_0 > 0$ such that for all $k \in \omega^*$, $z \in \sC$, and $\eps \in (0, \eps_0] \cap \N^{-1}$, $z$ is in the resolvent sets of both $H^\eps_k$ and $H^0_k+\eps V_1$, and
    \begin{itemize}
        \item for all $f \in L^2_\per(\Omega)$, 
        \[
		\nor{u^\eps_{k,z,f} - u^0_{k,z,f}}{L^2_\per} +
		\nor{u^\eps_{k,z,f}- \cM^{\eps, (1)} u^0_{k,z,f} }{H^1_\per} \le C \eps \nor{f}{L^2_\per};
        \]
        \item if in addition $f \in H^1_\per(\Omega)$, then
		\begin{align}
		 \nor{u^\eps_{k,z,f} - \Mz u^0_{k,z,f}}{H^1_\per} &\le C \eps^2 \nor{f}{H^{1}_\per}, \label{eq:approx_lin_eq_2} \\
		\nor{u^\eps_{k,z,f} - \cM^{\eps, (2)}_{k} u^{0,\eps}_{k,z,f} }{H^1_\per} &\le C \eps^2 \nor{f}{H^{1}_\per}.  \label{eq:approx_lin_eq}
	 \end{align}
    \end{itemize}
\end{corollary}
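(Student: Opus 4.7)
The plan is simply to unwind the definitions of $u^\eps_{k,z,f}$, $u^0_{k,z,f}$ and $u^{0,\eps}_{k,z,f}$ and apply the four operator-norm bounds of Theorem~\ref{thm:res} to the vector $f$. The well-posedness of the three linear equations for $\eps \in (0,\eps_0] \cap \N^{-1}$ is already part of the conclusion of Theorem~\ref{thm:res}, which asserts that $z$ belongs simultaneously to the resolvent sets of $H^\eps_k$ and of $H^0_k + \eps V_1$ (and by hypothesis $\dist(\sC, \sigma(H^0_k))>0$, so $z$ is also in the resolvent set of $H^0_k$). Thus the identities
\[
  u^\eps_{k,z,f} = (H^\eps_k - z)^{-1} f, \qquad
  u^0_{k,z,f}   = (H^0_k - z)^{-1} f, \qquad
  u^{0,\eps}_{k,z,f} = (H^0_k + \eps V_1 - z)^{-1} f,
\]
hold unambiguously, and all the differences that appear in the corollary can be rewritten as $A_\eps f$ for operators $A_\eps$ whose norms between the relevant Sobolev spaces are controlled by Theorem~\ref{thm:res}.

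For the first bullet, I take $f \in L^2_\per(\Omega)$ and use \eqref{eq:res_L2} to obtain $\|u^\eps_{k,z,f} - u^0_{k,z,f}\|_{L^2_\per} \le C\eps \|f\|_{L^2_\per}$, and then \eqref{eq:resest1} to get $\|u^\eps_{k,z,f} - \cM^{\eps,(1)} u^0_{k,z,f}\|_{H^1_\per} \le C\eps \|f\|_{L^2_\per}$; summing the two bounds gives the inequality stated in the corollary. For the second bullet, if $f \in H^1_\per(\Omega)$, estimate \eqref{eq:resest3} applied to $f$ yields \eqref{eq:approx_lin_eq_2}, while estimate \eqref{eq:resest2} applied to the same $f$ yields \eqref{eq:approx_lin_eq}. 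In each case the constant $C$ and the threshold $\eps_0$ inherited from Theorem~\ref{thm:res} are independent of $k \in \omega^*$ and $z \in \sC$, as required.

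There is essentially no obstacle: the whole content of the corollary sits inside Theorem~\ref{thm:res}. The only micro-check to perform is that the action of $\Mz$ on $u^0_{k,z,f}$ is meaningful in $H^1_\per$; this follows from the definition \eqref{eq:def_M2z} together with the observation that $V_1 \in \cC^1_\per(\Omega)$, that $(H^0_k - z)^{-1}_\perp$ maps $L^2_\per$ to $H^2_\per$, and that multiplication by the smooth function $(\chi_1)_\eps$ preserves $H^1_\per$, so $\Mz u^0_{k,z,f} \in H^1_\per$ whenever $u^0_{k,z,f} \in H^2_\per$, which is the case since $f \in L^2_\per \subset H^{-1}_\per$ (hence {\em a fortiori} when $f \in H^1_\per$).
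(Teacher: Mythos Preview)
Your proof is correct and is exactly the approach the paper has in mind: the paper explicitly states that the corollary is ``a straightforward reformulation of Theorem~\ref{thm:res}'' and skips the proof, and your argument simply applies each of the four resolvent bounds \eqref{eq:res_L2}--\eqref{eq:resest3} to the vector $f$. The additional micro-check you give on the regularity of $\Mz u^0_{k,z,f}$ is harmless and correct.
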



\begin{remark} \label{rem:M2vstM2} There are two ways to construct second-order corrections, given by \eqref{eq:approx_lin_eq_2} and \eqref{eq:approx_lin_eq}.  In view of the definition of $\Mz$ in~\eqref{eq:def_M2z}, to compute the approximation $\Mz u^0_{k,z,f}$ in \eqref{eq:approx_lin_eq_2}, one needs to solve two macro-scale equations, independent of $\eps > 0$, namely
\[
    (H^0_k-z) u^0_{k,z,f} =f \quad \text{and} \quad (H^0_k-z)v^1_{k,z,f}=V_1u^0_{k,z,f}.
\]
(note that $(H^0_k-z)_\perp^{-1}=(H^0_k-z)$ since $z \notin \sigma(H^0_k)$). Instead, for the computation of $\cM^{\eps, (2)}_{k} u^{1,\eps}_{k,z,f}$, one only needs to solve one macro-scale equation, which depends on $\eps > 0$, namely
\[
    (H^0_k+\eps V_1-z)u^{0,\eps}_{k,z,f}=f.
\]
%
%
%
Using \eqref{eq:approx_lin_eq_2} is computationally more efficient if we are interested in computing the solution for many values of $\eps$ since only two macroscopic elliptic PDEs independent of $\eps$ have to be solved. On the other hand, if one is interested in the solution for a single value of $\eps$, it is better to use \eqref{eq:approx_lin_eq} since only one macroscopic PDE (depending on $\eps)$ has to be solved.
\end{remark}

\subsection{Homogenization of the eigenvalue problem}
\label{sec:homog_eig}
We now consider the approximation of the eigenmodes of $H^\eps_k$. The eigenmodes of $H^0_k$, $H^\eps_k$ and $H^0_k + \eps V_1$ are respectively denoted by 
\[
(E_{\ell, k}^0, \psi_{\ell, k}^0), \quad (E_{\ell, k}^\eps, \psi_{\ell, k}^\eps), 
\quad \text{and} \quad
(E_{\ell, k}^{0, \eps}, \psi_{\ell, k}^{0, \eps}),
\]
where $E_{1, k}^0 \le E_{2, k}^0 \le \cdots$ are the eigenvalues of $H^0_k$ counted with multiplicities and ranked in non-decreasing order, and where $(\psi_{\ell, k}^0)_{\ell \in \N^*}$ is a corresponding $L^2_\per(\Omega)$ orthonormal basis of eigenvectors (and similarly for the other eigenmodes, recall that all these operators have a compact resolvent). We assume that the functions $\psi_{\ell, k}^\eps$ and $\psi_{\ell, k}^{0, \eps}$ are oriented in such a way that
\[
\left\langle  \psi_{\ell, k}^\eps,\psi_{\ell, k}^{0} \right\rangle_{L^2_{\rm per}}  \in \R_+
\quad \text{and} \quad
\big\langle  \psi_{\ell, k}^{0, \eps}, \psi_{\ell, k}^{\eps} \big\rangle_{L^2_{\rm per}}  \in \R_+.
\]

First, we prove the convergence of the spectrum.

\begin{theorem} \label{thm:unif_CV_eigenvalues} Let $v$ and $W$ satisfying \eqref{hypo:on_v_W}. We have 
\begin{equation}\label{eq:asympt_eigenvalues}
\min_{k \in \Omega^*} \min_{\eps \in \N^{-1}} E^\eps_{\ell,k} \mathop{\longrightarrow}_{\ell \to \infty} +\infty 
\end{equation}
and for each $\ell \in \N^*$, there exists a constant $C_\ell \in \R_+$ such that
\begin{equation}\label{eq:unif_eigenvalues} 
\forall \eps \in \N^{-1}, \quad \max_{k \in \Omega^*} \left| E^\eps_{\ell,k} - E^0_{\ell,k} \right| \le C_\ell \eps.
\end{equation}
\end{theorem}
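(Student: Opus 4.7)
The plan splits into two parts that I would treat independently. The second part, \eqref{eq:unif_eigenvalues}, will follow from the resolvent estimate \eqref{eq:res_L2} via the classical perturbation inequality $|\lambda_\ell(A)-\lambda_\ell(B)| \le \|A-B\|$ for compact self-adjoint operators; the first part, \eqref{eq:asympt_eigenvalues}, requires a direct form argument using the first-order corrector $\chi_1$.

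For \eqref{eq:asympt_eigenvalues}, the goal is the uniform form lower bound $H^\eps_k \ge -\tfrac12 \Delta - C$ on $H^1_\per(\Omega)$, with $C$ independent of $\eps \in \N^{-1}$ and $k \in \Omega^*$; combined with the min-max principle and the Weyl asymptotics for $-\Delta|_{L^2_\per}$, it yields the desired divergence. The key observation is the rewriting of the singular potential made possible by $\Delta_y \chi_1 = v$ and the chain rule applied to $\Delta((\chi_1)_\eps)$, giving
\[
\eps^{-1} v_\eps = \eps\, \Delta\bigl((\chi_1)_\eps\bigr) - \eps (\Delta_x \chi_1)_\eps - 2 (\nabla_x \cdot \nabla_y \chi_1)_\eps.
\]
The last two terms are bounded multiplication operators with $L^\infty$ bounds independent of $\eps$ and $k$ by Lemma~\ref{lem:regularity_assumption}. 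For the first term, I would integrate by parts against a test function $\phi \in H^1_\per(\Omega)$, observing that $\eps \nabla((\chi_1)_\eps) = \eps (\nabla_x \chi_1)_\eps + (\nabla_y \chi_1)_\eps$ is uniformly bounded; Young's inequality then yields $|\langle\phi, \eps^{-1} v_\eps \phi\rangle| \le \delta \|\nabla\phi\|_{L^2}^2 + C_\delta \|\phi\|_{L^2}^2$ for any $\delta>0$, and absorbing standard bounds on $\|\nabla_k \phi\|_{L^2}^2$ for $k \in \Omega^*$ and on $\langle\phi,W\phi\rangle$ closes the coercivity argument.

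For \eqref{eq:unif_eigenvalues}, I would pick $z_0 \in \R$ strictly below the uniform lower bound obtained in the first part, so that $(H^\eps_k - z_0)^{-1}$ and $(H^0_k - z_0)^{-1}$ are compact, positive, self-adjoint operators on $L^2_\per(\Omega)$, with eigenvalues $\mu^\eps_{\ell,k} = (E^\eps_{\ell,k} - z_0)^{-1}$ and $\mu^0_{\ell,k} = (E^0_{\ell,k} - z_0)^{-1}$ arranged in non-increasing order. Theorem~\ref{thm:res} applied with $\omega^* = \overline{\Omega^*}$ and $\sC = \{z_0\}$ gives \eqref{eq:res_L2} with a constant uniform in $k$, hence $|\mu^\eps_{\ell,k} - \mu^0_{\ell,k}| \le C\eps$ for every $\ell$, $k$ and $\eps$. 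Inverting, I get
\[
|E^\eps_{\ell,k} - E^0_{\ell,k}| = \frac{|\mu^\eps_{\ell,k} - \mu^0_{\ell,k}|}{\mu^\eps_{\ell,k}\mu^0_{\ell,k}} \le C\eps\,(E^\eps_{\ell,k} - z_0)(E^0_{\ell,k} - z_0),
\]
which is $O(\eps)$ at fixed $\ell$: the factor $(E^0_{\ell,k}-z_0)$ is bounded uniformly in $k \in \Omega^*$ by continuity of Bloch eigenvalues on the compact set $\Omega^*$, and $(E^\eps_{\ell,k}-z_0)$ is then bounded via the $\mu$-inequality just obtained.

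The main obstacle is the coercivity estimate: the naive pointwise bound $\eps^{-1} v_\eps \ge -\eps^{-1}\|v\|_\infty$ is useless, so the rewriting of $\eps^{-1} v_\eps$ into divergence form via $\chi_1$ is the essential computation. A second, structural, reason to proceed via the perturbation inequality on a single compact resolvent rather than via a contour-integration argument enclosing the first $\ell$ bands is that for $d \ge 2$ the bands of $H^0$ may overlap, so no spectral gap above the $\ell$-th band can be guaranteed uniformly in $k$; the one-point resolvent approach avoids this obstruction entirely.
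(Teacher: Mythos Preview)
Your proposal is correct and follows essentially the same route as the paper: the form lower bound via the $\chi_1$ rewriting is exactly Lemma~\ref{basiclemma} (inequality~\eqref{deus}), and the eigenvalue comparison via a resolvent estimate at a point below the spectrum, followed by min--max and inversion, is precisely the paper's argument in Section~\ref{sec:proof_unif_CV_eigenvalues}, except that the paper cites the more direct estimate~\eqref{eq:1st_bound_resolvent} from Lemma~\ref{basiclemma2} rather than Theorem~\ref{thm:res}. One minor technical point: Theorem~\ref{thm:res} as stated carries an $\eps_0$ restriction, whereas \eqref{eq:1st_bound_resolvent} holds for all $\eps \in \N^{-1}$ because $-\mu$ lies below the uniform spectral lower bound; your choice of $z_0$ achieves the same thing, so the restriction is in fact harmless here, but it is cleaner to invoke \eqref{eq:1st_bound_resolvent} directly (and the paper bounds $E^\eps_{\ell,k}+\mu$ via the upper inequality in~\eqref{deus} rather than via the $\mu$-inequality as you do).
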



We then introduce the first- and second-order approximate eigenmodes
\begin{align}
	 \p^{\ep,(1)}_{\ell,k} 
        & := \frac{\Mzu \p^0_{\ell,k}}{\nor{\Mzu \p^0_{\ell,k}}{L^2_{\rm per}}}, 
        \qquad  E^{\ep,(1)}_{\ell,k}:=\langle  \p^{\ep,(1)}_{\ell,k} | H^\eps_k |  \p^{\ep,(1)}_{\ell,k} \rangle, \label{eq:first_order_app_eig} \\
	 \p^{\ep,(2)}_{\ell,k} 
        & := \frac{\cM_k^{\eps, (2)} \p^{0, \eps}_{\ell,k}}{\nor{\cM_k^{\eps, (2)} \p^{0, \eps}_{\ell,k}}{L^2_{\rm per}}},
        \qquad E^{\ep,(2)}_{\ell,k}:=\langle \p^{\ep,(2)}_{\ell,k} | H^\eps_k |  \p^{\ep,(2)}_{\ell,k} \rangle, \label{eq:second_order_app_eig} \\	
	\widetilde \p^{\ep,(2)}_{\ell,k} 
        &:= \frac{\Mze \pz}{\nor{\Mze \pz}{L^2_{\rm per}}},
        \qquad \widetilde E^{\ep,(2)}_{\ell,k}:=\langle \widetilde \p^{\ep,(2)}_{\ell,k} | H^\eps_k | \widetilde \p^{\ep,(2)}_{\ell,k} \rangle. \label{eq:second_order_app_eig2}
\end{align}

Let us first consider the case of simple eigenvalues.

\begin{theorem}[Homogenization of non-degenerate eigenmodes]\label{thm:eigenmode_nondeg} 
Let $v$ and $W$ satisfying \eqref{hypo:on_v_W}, $\omega^*$ a compact subset of $\overline{\Omega^*}$, and $\ell \in \N^*$ such that for all $k \in \omega^*$, 
$E^0_{\ell, k}$ is a non-degenerate eigenvalue of $H^0_k$. Then, there exists $C \in \R_+$ and $\eps_0 > 0$ such that for all $k \in \omega^*$ and $\eps \in (0, \eps_0] \cap \N^{-1}$, $E^{\eps}_{\ell, k}$ is a non-degenerate eigenvalue of $H^\eps_k$ while $E^{0, \eps}_{\ell, k}$ is a non-degenerate eigenvalue of $(H^0_k+\eps V_1)$, and 
	\begin{align}
	& \| \p^\ep_{\ell,k} -  \p^{\ep,(1)}_{\ell,k} \|_{H^1_{\rm per}} \le C \eps, \quad 
           |E^\ep_{\ell,k} - E^{\eps,(1)}_{\ell,k} |+   |E^\ep_{\ell,k} -  E^{0,\eps}_{\ell,k} | \le C \ep^2,  \label{eq:bounds_order_1} \\
   &  \| \p^\ep_{\ell,k} -  \p^{\ep,(2)}_{\ell,k} \|_{H^1_{\rm per}} \le C \eps^2, \quad      
           |E^\ep_{\ell,k} -  E^{\eps,(2)}_{\ell,k} |  \le C \ep^4,  \label{eq:bounds_order_1.5} \\
   &\| \p^\ep_{\ell,k} - \widetilde \p^{\ep,(2)}_{\ell,k} \|_{H^1_{\rm per}} \le C \eps^2, \quad      
  |E^\ep_{\ell,k} - \widetilde E^{\eps,(2)}_{\ell,k} | \le C \ep^4. \label{eq:bounds_order_2}
	\end{align}
    Moreover, we have the expansion
\begin{align}
E^{\eps}_{\ell,k} &= E^0_{\ell,k} + \ep \int_{\Omega} V_1 |\psi^0_{\ell,k}|^2 + \eps^2 \eta_{\ell,k}^{(2)} + \eps ^3 \eta_{\ell,k}^{(3)}  + O(\ep^4), \label{eq:second_order_app}
\end{align}
uniformly in $k \in \omega^*$, where $\eta_{\ell,k}^{(2)}$ and $\eta_{\ell,k}^{(3)}$ can be computed explicitly from $\psi^0_{\ell,k}$, $v$ and $W$ by solving non-oscillatory elliptic problems in the unit cell $\Omega$, and computing non-oscillatory integrals on~$\Omega$. 
\end{theorem}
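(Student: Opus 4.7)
The plan is to reduce everything to the resolvent estimates of Theorem~\ref{thm:res} via Cauchy's residue formula for spectral projectors. By Theorem~\ref{thm:unif_CV_eigenvalues} and the non-degeneracy hypothesis, the compactness of $\omega^*$ allows us to choose a single compact contour $\Gamma \subset \C$ that, for $\eps$ small enough, encloses exactly one eigenvalue of each of $H^0_k$, $H^\eps_k$ and $H^0_k + \eps V_1$ (namely $E^0_{\ell,k}$, $E^\eps_{\ell,k}$ and $E^{0,\eps}_{\ell,k}$), staying at positive distance from the rest of their spectra uniformly in $k \in \omega^*$. The three rank-one spectral projectors $P^\eps_{\ell,k}$, $P^0_{\ell,k}$, $P^{0,\eps}_{\ell,k}$ are then represented as contour integrals of the corresponding resolvents.

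Integrating \eqref{eq:resest1} along $\Gamma$ yields $\|P^\eps_{\ell,k} - \Mzu P^0_{\ell,k}\|_{L^2\to H^1}\le C\eps$. Applied to $\pz$, this gives $\|P^\eps_{\ell,k}\pz - \Mzu\pz\|_{H^1}\le C\eps$; since $P^\eps_{\ell,k}\pz$ is parallel to $\pE$, since $\|\Mzu\pz\|_{L^2} = 1 + O(\eps)$, and since the orientation convention fixes the phase, a short normalization argument produces the first bound of \eqref{eq:bounds_order_1}. The bound \eqref{eq:bounds_order_1.5} follows analogously from \eqref{eq:resest2} applied to $\psi^{0,\eps}_{\ell,k}$, using that $\Md$ does not depend on $z$. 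For \eqref{eq:bounds_order_2}, the subtlety is that $\Mz$ depends meromorphically on $z$ through the pseudo-inverse $(H^0_k-z)^{-1}_\perp$; applied to $\pz$, the integrand $\Mz(H^0_k-z)^{-1}\pz$ contains the factor $(E^0_{\ell,k}-z)^{-1}\pz$ multiplied by an operator with at most a simple pole at $z=E^0_{\ell,k}$, and a direct residue computation shows that the contour integral equals $\Mze\pz$, yielding the $H^1$-bound after normalization.

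For the eigenvalue estimates, I would use the elementary self-adjointness identity
\[
\langle \psi, H^\eps_k\psi\rangle - E^\eps_{\ell,k} = \langle \psi - \pE,\, (H^\eps_k - E^\eps_{\ell,k})(\psi - \pE)\rangle,
\]
valid for any $L^2$-normalized $\psi \in H^1_\per(\Omega)$. Although $\eps^{-1}v_\eps$ has $L^\infty$-norm of order $\eps^{-1}$, the zero-mean structure $v=\Delta_y\chi_1$ combined with an integration by parts shows that $H^\eps_k$ is bounded uniformly in $\eps$ as an operator $H^1_\per \to H^{-1}_\per$; hence the right-hand side is $O(\|\psi - \pE\|^2_{H^1_\per})$. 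Taking $\psi$ to be successively the three trial states \eqref{eq:first_order_app_eig}--\eqref{eq:second_order_app_eig2} converts the eigenvector rates of the previous step into the claimed $\eps^2$ and $\eps^4$ rates on the eigenvalues. The intermediate bound $|E^\eps_{\ell,k} - E^{0,\eps}_{\ell,k}|\le C\eps^2$ is obtained likewise, taking $\psi^{0,\eps}_{\ell,k}$ as trial vector for $H^\eps_k$ and invoking $\|\pE - \psi^{0,\eps}_{\ell,k}\|_{H^1_\per}\le C\eps$ (itself a consequence of \eqref{eq:resest1}).

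The explicit expansion \eqref{eq:second_order_app} is obtained by computing $\langle\tp{2}, H^\eps_k\tp{2}\rangle$ up to order $\eps^3$ and invoking the $\eps^4$ error estimate in \eqref{eq:bounds_order_2}. This is the most delicate step, and the main obstacle of the proof: one must show that the \emph{a priori} singular contributions coming from the $\eps^{-1}v_\eps$ factor multiplied by the $O(\eps^j)$ correctors combine, via the defining cell equations \eqref{eq:def_chi}, \eqref{eq:def_V}, \eqref{eq:def_V1}, \eqref{eq:def_A}, \eqref{eq:def_chi2} and $y$-averaging (Lemma~\ref{lem:regularity_assumption}), into bounded macroscopic integrals. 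Once this algebraic cancellation is carried out, the surviving terms identify $\int_\Omega V_1|\pz|^2$ as the order-$\eps$ correction and produce explicit expressions for $\eta^{(2)}_{\ell,k}$ and $\eta^{(3)}_{\ell,k}$ involving only non-oscillatory cell problems on $\Omega$.
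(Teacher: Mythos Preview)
Your overall strategy---Cauchy contour integration of the resolvent estimates to control spectral projectors, followed by a Rayleigh-quotient identity to transfer $H^1$ eigenvector rates into squared eigenvalue rates---is precisely that of the paper, and your handling of the three eigenvector bounds and of the estimates on $E^{\eps,(1)}_{\ell,k}$, $E^{\eps,(2)}_{\ell,k}$, $\widetilde E^{\eps,(2)}_{\ell,k}$ is correct.

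There is, however, a genuine gap in your argument for $|E^\eps_{\ell,k}-E^{0,\eps}_{\ell,k}|\le C\eps^2$. Plugging $\psi^{0,\eps}_{\ell,k}$ into the Rayleigh-quotient identity for $H^\eps_k$ controls $\langle\psi^{0,\eps}_{\ell,k},H^\eps_k\psi^{0,\eps}_{\ell,k}\rangle-E^\eps_{\ell,k}$, \emph{not} $E^{0,\eps}_{\ell,k}-E^\eps_{\ell,k}$; the two differ by $\int_\Omega(\eps^{-1}v_\eps-V-\eps V_1)|\psi^{0,\eps}_{\ell,k}|^2$, whose leading part $-\int_\Omega V|\psi^{0,\eps}_{\ell,k}|^2$ is an $O(1)$ quantity (the oscillatory term $\eps^{-1}\int v_\eps|\psi^{0,\eps}_{\ell,k}|^2$ is only $O(\eps)$, so there is no cancellation). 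Moreover, the premise $\|\psi^\eps_{\ell,k}-\psi^{0,\eps}_{\ell,k}\|_{H^1_\per}\le C\eps$ is itself false: from the first-order bound $\psi^\eps_{\ell,k}=(1+\eps(\chi_1)_\eps)\psi^0_{\ell,k}+O_{H^1}(\eps)$ and Kato's $\psi^{0,\eps}_{\ell,k}=\psi^0_{\ell,k}+O_{H^1}(\eps)$, the difference contains $\eps(\chi_1)_\eps\psi^0_{\ell,k}$, whose gradient carries a factor $\eps^{-1}$ and hence has $H^1$-norm of order one.

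The paper fixes this by going through the corrected trial state $\Md\psi^{0,\eps}_{\ell,k}$ rather than $\psi^{0,\eps}_{\ell,k}$ itself: the operator identity \eqref{eq:expansion_md} (equivalently \eqref{eq:op_approx2}) gives $H^\eps_k\Md\psi^{0,\eps}_{\ell,k}=E^{0,\eps}_{\ell,k}\psi^{0,\eps}_{\ell,k}+O_{H^{-1}_\per}(\eps^2)$, and combining this with the oscillatory bound $\langle(\chi_1)_\eps\psi^{0,\eps}_{\ell,k},\psi^{0,\eps}_{\ell,k}\rangle=O(\eps)$ one obtains $E^{\eps,(2)}_{\ell,k}=E^{0,\eps}_{\ell,k}+O(\eps^2)$. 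Since you already have $|E^\eps_{\ell,k}-E^{\eps,(2)}_{\ell,k}|\le C\eps^4$, the desired bound follows. Alternatively, once \eqref{eq:second_order_app} is established, you can simply compare it with the Kato expansion $E^{0,\eps}_{\ell,k}=E^0_{\ell,k}+\eps\int_\Omega V_1|\psi^0_{\ell,k}|^2+O(\eps^2)$.
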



\begin{remark} \label{rem:nonnormalized}
We will prove later that, uniformly on $\omega^*$, the three norms 
\[  
    \| \cM^{\eps, (1)} \psi^0_{\ell,k} \|_{L^2_\per}, \quad 
    \| \cM_k^{\eps, (2)} \psi_{\ell, k}^{0,\eps} \|_{L^2_\per}  \quad \text{and} \quad
    \| \Mze \pz \|_{L^2_\per}
\]
behave like $1 + O(\eps^2)$.
 As a consequence, we also have 
 \begin{align*}
     & \| \p^\ep_{\ell,k} -  \Mzu \p^0_{\ell,k} \|_{H^1_{\rm per}} \le C \eps, \\
      & \| \p^\ep_{\ell,k} -  \cM_k^{\eps, (2)} \p^{0,\ep}_{\ell,k} \|_{H^1_{\rm per}} + 
       \| \p^\ep_{\ell,k} -  \widetilde{\cM}_{E^0_{\ell, k},k}^{\eps, (2)} \p^0_{\ell,k} \|_{H^1_{\rm per}} \le C \eps^2.
 \end{align*}
 \end{remark}


Let us now turn to the case of degenerate eigenvalues of $H^0_k$.

\begin{theorem}[Homogenization of degenerate eigenmodes]\label{thm:eigenmode_deg}
	Assume that \eqref{hypo:on_v_W} holds true and that, for some $k \in \Omega^*$, $E^0_{\ell,k}= \cdots = E^0_{\ell+m-1,k}$ is a degenerate eigenvalue of $H^0_k$ of multiplicity ${m \ge 1}$. 
    Then, there exists $C \in \R_+$, $\delta > 0$ and $\ep_0 > 0$ such that for all $\ep \in (0,\eps_0] \cap \N^{-1}$, $H^\eps_k$ and $(H^0_k+\eps V_1)$ have exactly $m$ eigenvalues $E^\eps_{\ell,k} \le \cdots \le E^\ep_{\ell+m-1,k}$ and $E^{0,\eps}_{\ell,k} \le \cdots \le E^{0,\eps}_{\ell+m-1,k}$ (counted with their multiplicities) in the interval $ I_\delta := [E^0_{\ell,k}-\delta,E^0_{\ell,k}+\delta]$, and, setting
    \[
        P^0 :=   \1_{I_\delta}(H^0_k) = \1_{\{E^0_{\ell,k}\}}(H^0_k), \quad
        P^\eps :=  \1_{I_\delta}(H^\eps_k), \quad
        P^{0,\eps} :=  \1_{I_\delta}(H^0_k + \eps V_1),
    \]
    we have the convergences
\begin{align}
	 \nor{ P^\eps - P^0}{L^2_{\rm per} \to L^2_{\rm per}} &\le C \, \eps, \label{eq:deg_evec00} \\
\nor{ P^\eps - \cM^{\eps,(1)} P^0}{L^2_{\rm per} \to H^1_{\rm per}} &\le C \, \eps, \label{eq:deg_evec0} \\
\nor{ P^\eps - \Md P^{0, \eps} }{H^1_{\rm per} \to H^1_{\rm per}} &\le C \, \eps^2, \label{eq:deg_evec} \\
\nor{ \pa{P^\eps - \Mze P^0} P^0  }{L^2_{\rm per} \to H^1_{\rm per}} &\le C \, \eps^2. \label{eq:deg_evec2}
\end{align}
and
\begin{equation}  \label{eq:deg_eval}
 \sum_{j=0}^{m-1} \left| E^\eps_{\ell+j,k} - E^{0,\eps}_{\ell+j,k} \right| \le C \eps^2,
  \quad
 \sum_{j=0}^{m-1} \left| E^\eps_{\ell+j,k} - E^{0}_{\ell+j,k} - \eps \eta_j \right| \le C \eps^2,
\end{equation}
where $\eta_0 \le \eta_1 \le \cdots \le \eta_{m-1}$ are the eigenvalues of the $m \times m$ hermitian matrix $\left( \langle \psi_{\ell+i,k}^{0}, V_1 \psi_{\ell + j, k}^{0}\rangle \right)_{0 \le i, j \le m-1}$.
\end{theorem}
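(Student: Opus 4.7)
The strategy is to build spectral projectors as Riesz contour integrals and recycle the four resolvent estimates of Theorem~\ref{thm:res} uniformly on the contour. Since $\sigma(H^0_k)$ is discrete, fix $\delta > 0$ such that $E^0_{\ell,k}$ is the unique point of $\sigma(H^0_k)$ in the closed disc centred at $E^0_{\ell,k}$ of radius $2\delta$, and let $\gamma$ be the positively oriented circle centred at $E^0_{\ell,k}$ of radius $\delta$. Applying Theorem~\ref{thm:res} with $\omega^* = \{k\}$ and $\sC = \gamma$ gives $\eps_0 > 0$ such that for every $\eps \in (0, \eps_0] \cap \N^{-1}$, $\gamma$ lies in the resolvent sets of both $H^\eps_k$ and $H^0_k + \eps V_1$. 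Define the Riesz projectors
\[
P^\eps := -\frac{1}{2\pi i}\oint_\gamma (H^\eps_k - z)^{-1}\,dz, \qquad P^{0,\eps} := -\frac{1}{2\pi i}\oint_\gamma (H^0_k + \eps V_1 - z)^{-1}\,dz,
\]
and $P^0$ analogously. Integrating \eqref{eq:res_L2} along $\gamma$ yields \eqref{eq:deg_evec00}, as well as $\|P^{0,\eps} - P^0\|_{L^2_\per \to L^2_\per} \le C\eps$ (the latter via a one-line resolvent identity using $V_1 \in L^\infty_\per$). Since orthogonal projectors at operator-norm distance $< 1$ have the same rank, $P^\eps$ and $P^{0,\eps}$ both have rank $m$ for $\eps$ small enough, and since their ranges carry only real spectrum, the corresponding eigenvalues lie in $I_\delta$. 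The identities \eqref{eq:deg_evec0} and \eqref{eq:deg_evec} follow immediately by integrating \eqref{eq:resest1} and \eqref{eq:resest2} along $\gamma$, using that $\cM^{\eps,(1)}$ and $\Md$ do not depend on $z$.

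For \eqref{eq:deg_evec2}, take $\phi \in \ran(P^0)$ and apply \eqref{eq:resest3} to $\phi$; since on the finite-dimensional space $\ran(P^0)$ all norms are equivalent, $\|\phi\|_{H^1_\per} \le C\|\phi\|_{L^2_\per}$, and integrating along $\gamma$ produces $\|P^\eps \phi - I(\phi)\|_{H^1_\per} \le C \eps^2 \|\phi\|_{L^2_\per}$, where
\[
I(\phi) := -\frac{1}{2\pi i}\oint_\gamma \Mz (H^0_k - z)^{-1}\phi\,dz.
\]
It remains to identify $I(\phi)$ with $\Mze \phi$. Since $\gamma$ is contained in the resolvent set of $H^0_k$, the pseudo-inverse in \eqref{eq:def_M2z} coincides with $(H^0_k - z)^{-1}$, and $I(\phi)$ splits into two contributions. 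The $\Md$-part is $z$-independent and evaluates to $\Md P^0 \phi = \Md \phi$ by Cauchy's formula. The remaining part involves $\frac{1}{2\pi i}\oint_\gamma (H^0_k - z)^{-1} V_1 (H^0_k - z)^{-1}\phi\,dz$; since $(H^0_k - z)^{-1}\phi = (E^0_{\ell,k} - z)^{-1}\phi$ on $\ran(P^0)$, decomposing $V_1\phi = P^0 V_1 \phi + (1-P^0) V_1 \phi$ yields a double pole at $E^0_{\ell,k}$ with vanishing residue from the $P^0$-part and a simple pole contributing $-(H^0_k - E^0_{\ell,k})^{-1}_\perp V_1 \phi$ from the $(1-P^0)$-part. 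Reassembling and invoking \eqref{eq:def_M2z} gives $I(\phi) = \Mze \phi$, hence \eqref{eq:deg_evec2}.

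For \eqref{eq:deg_eval}, write $H^\eps_k P^\eps = -\frac{1}{2\pi i}\oint_\gamma z(H^\eps_k - z)^{-1}\,dz$ and analogously for $(H^0_k + \eps V_1) P^{0,\eps}$; multiplying \eqref{eq:resest2} by $z$ (bounded on $\gamma$) and integrating yields $\|H^\eps_k P^\eps - \Md (H^0_k + \eps V_1) P^{0,\eps}\|_{H^1_\per \to H^1_\per} \le C \eps^2$. Combining this with $\|\Md - \mathrm{Id}\| = O(\eps)$ and $\|P^\eps - P^{0,\eps}\| = O(\eps)$, and using a Kato--Sz.~Nagy isometry intertwining $\ran(P^{0,\eps})$ with $\ran(P^\eps)$ (available because these projectors are close in operator norm), the $m$-dimensional self-adjoint effective Hamiltonians $P^\eps H^\eps_k P^\eps$ and $P^{0,\eps}(H^0_k + \eps V_1) P^{0,\eps}$ can be identified on a common finite-dimensional subspace up to an $O(\eps^2)$ operator-norm error; the Weyl inequalities then give the first bound in \eqref{eq:deg_eval}. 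The second bound is the classical first-order degenerate Rayleigh--Schr\"odinger formula for the analytic self-adjoint family $\eps \mapsto H^0_k + \eps V_1$, whose degenerate perturbation matrix is precisely $(\langle\psi^0_{\ell+i,k}, V_1 \psi^0_{\ell+j,k}\rangle)_{0\le i,j\le m-1}$. The main obstacle is expected to be the residue identification in \eqref{eq:deg_evec2}---unpacking how the pseudo-inverse structure encoded in $\Mze$ emerges from the contour-integrated, $z$-dependent operator $\Mz$---together with the one-by-one eigenvalue matching in \eqref{eq:deg_eval}, for which resolvent bounds alone control only operator-norm differences and a careful rotation of projector ranges is needed.
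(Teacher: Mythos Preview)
Your treatment of the projector estimates \eqref{eq:deg_evec00}--\eqref{eq:deg_evec2} is correct and essentially identical to the paper's: contour-integrate the resolvent bounds of Theorem~\ref{thm:res}, and for \eqref{eq:deg_evec2} compute the residue of the $z$-dependent part of $\Mz$. The paper carries out that residue computation as an operator identity (obtaining the extra term $P^0 V_1 (H^0_k-E^0_{\ell,k})^{-1}_\perp$ and then killing it by right-multiplying by $P^0$), whereas you do it pointwise on $\ran P^0$; the content is the same.

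The gap is in your argument for the first bound in \eqref{eq:deg_eval}. You correctly obtain $\|H^\eps_k P^\eps - \Md(H^0_k + \eps V_1)P^{0,\eps}\|_{H^1\to H^1} \le C\eps^2$ by integrating $z$ times \eqref{eq:resest2}, but the next step --- transporting $P^\eps H^\eps_k P^\eps$ to $\ran P^{0,\eps}$ via a Kato--Sz.\ Nagy unitary $U$ --- does not close: one only has $\|P^\eps - P^{0,\eps}\|_{L^2\to L^2} = O(\eps)$, hence $U = I + O(\eps)$, and conjugating a bounded operator by $U$ introduces an $O(\eps)$ error, not $O(\eps^2)$. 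The missing ingredient is that $\Md$ is \emph{approximately isometric on $\ran P^{0,\eps}$ to order $\eps^2$}. Concretely, with $\bS_{jj'} := \langle \psi^{0,\eps}_{\ell+j,k}, \psi^\eps_{\ell+j',k}\rangle$, the paper shows $\bS\bS^* = I_m + O(\eps^2)$: from \eqref{eq:deg_evec} one gets $(\bS\bS^*)_{jj'} = \langle \psi^{0,\eps}_{\ell+j,k}, \Md \psi^{0,\eps}_{\ell+j',k}\rangle + O(\eps^2) = \delta_{jj'} + \eps\langle \psi^{0,\eps}_{\ell+j,k}, (\chi_1)_\eps \psi^{0,\eps}_{\ell+j',k}\rangle + O(\eps^2)$, and the $\eps$-term is itself $O(\eps)$ precisely because of the oscillatory cancellation $\|g_\eps\|_{H^1_\per\to H^{-1}_\per} \le C\eps$ of Lemma~\ref{lem:geps}. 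The paper then writes $\bH^\eps = \bS^* \bH^{0,\eps} \bS + O(\eps^2)$ with $\bH^{0,\eps}_{jj'} := \langle \Md \psi^{0,\eps}_{\ell+j,k}, H^\eps_k \Md \psi^{0,\eps}_{\ell+j',k}\rangle$, expands $\bH^{0,\eps}$ via \eqref{eq:expansion_md}, uses Lemma~\ref{lem:geps} again to kill the highly-oscillatory cross terms, and combines with $\bS^*\bS = I_m + O(\eps^2)$ to conclude. Your Kato--Sz.\ Nagy shortcut bypasses exactly this oscillatory-cancellation mechanism and therefore cannot reach the $O(\eps^2)$ eigenvalue bound as written.
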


An interpretation of \eqref{eq:deg_eval} is that the degeneracy can be lifted at order $\eps$ by the macroscopic perturbation potential $V_1$. To obtain more accurate approximations of the eigenvalues of $H^\eps_k$, we can proceed as follows. The families of functions $(\psi^{0}_{\ell + j,k})_{0 \le j \le m-1}$ and $(\psi^{0,\eps}_{\ell+j,k})_{0 \le j \le m-1}$ form $L^2_\per$-orthonormal bases of the ranges of the orthogonal projectors $P^0$ and $P^{0, \eps}$. Applying respectively the operators $\Mze$ and $\Md$ to these functions, we deduce from \eqref{eq:deg_evec}-\eqref{eq:deg_evec2} that the so-obtained two families of functions form second-order $H^1_\per$-approximations of $L^2_\per$-orthogonal bases of $P^\eps$. Solving the variational approximation of the eigenvalue problem $H^\eps_k \psi = E \psi$ in one of these two families, we can obtain approximations of the $m$ eigenvalues of $H^\eps_k$ close to $E^0_{\ell,k}$ with $O(\eps^4)$ accuracy.


\subsection{Homogenization of physical quantities}\label{subsec:phys_quantities}

Let us now explain how to use the above estimates to compute approximations of a few physical quantities of interest (QoI). We first consider the kinetic and potential contributions 
\begin{equation}\label{eq:kinetic_potential}
T^\eps_{\ell,k}:= \int_\Omega |\nabla_k \psi^\eps_{\ell,k}|^2  \quad \mbox{and} \quad V^\eps_{\ell,k}:= \int_\Omega (\eps^{-1} v_\eps +W) |\psi^\eps_{\ell,k}|^2
\end{equation}
to the total energy $E^\eps_{\ell,k}$. We set 
\begin{align*}
    & T^0_{\ell,k}:= \int_\Omega |\nabla_k \psi^0_{\ell,k}|^2, \quad V_{\ell,k}:= \int_\Omega V |\psi^0_{\ell,k}|^2, \\
     & W_{\ell,k}:= \int_\Omega W |\psi^0_{\ell,k}|^2, \quad V^0_{\ell,k}=V_{\ell,k}+W_{\ell,k}.
\end{align*}
and introduce the macroscopic potentials $F$ and $G$ defined as
\begin{align}
F(x) &:= -\frac 12 \fint_\Omega v(x,y) \nabla_xA_2 (x,y) \, \d y = -\fint_\Omega \nabla_x\cu(x,y)  \cdot \nabla_y\cu(x,y) \, \d y , \label{eq:def_F} \\
G(x)&:=  \fint_\Omega v(x,y) |\cu(x,y)|^2 \, \d y =  -2 \fint_\Omega \cu(x,y) |\nabla_y\cu(x,y)|^2 \, \d y, \label{eq:def_G}
\end{align}
which are related to $V_1$ through the equality
\begin{equation}\label{eq:V1=G-2F}
V_1=G-2F.
\end{equation}

\begin{theorem}[Kinetic and potential energies]\label{thm:phys_quantities1} Assume the $v$ and $w$ satisfy \eqref{hypo:on_v_W}. Let $\ell \in \N^*$ and $k \in \Omega^*$ be such that $E^0_{\ell,k}$ is a non-degenerate eigenvalue of $H^0_k$. Then for $\eps > 0$ small enough, $E^\eps_{\ell,k}$ is a non-degenerate eigenvalue of $H^\eps_k$ and the kinetic and potential components of the energy $E^\eps_{\ell,k}$ defined in \eqref{eq:kinetic_potential} satisfy
\begin{align*}
T^\eps_{\ell,k}&=  T^0_{\ell,k}-V_{\ell,k} + \eps \left( \int_\Omega \pa{2F-2G} \ab{\p^0_{\ell,k}}^2 - 2 \int_\Omega (2V+W)   {\rm Re}(\overline{\psi^0_{\ell,k}}\phi^1_{\ell,k}) \right) \\
&\qquad\qquad\qquad\qquad\qquad\qquad\qquad\qquad\qquad\qquad\qquad\qquad\qquad+ O(\ep^2), \\
 V^\eps_{\ell,k}&= V^0_{\ell,k} + V_{\ell,k} + \eps \left( \int_\Omega \pa{3G - 4F} \ab{\p^0_{\ell,k}}^2 + 2 \int_\Omega (2V+W)   {\rm Re}(\overline{\psi^0_{\ell,k}}\phi^1_{\ell,k})\right) \\
		&\qquad\qquad\qquad\qquad\qquad\qquad\qquad\qquad\qquad\qquad\qquad\qquad\qquad +  O(\ep^2),
\end{align*}
where $\phi^1_{\ell,k}:=-(H^0_k-E^0_{\ell,k})_\perp^{-1}(V_1\psi^0_{\ell,k})$.
\end{theorem}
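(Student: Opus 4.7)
The plan is to substitute the $H^1_\per$-accurate approximation of $\psi^\eps_{\ell,k}$ from Theorem~\ref{thm:eigenmode_nondeg} into $T^\eps_{\ell,k}=\|\nabla_k \psi^\eps_{\ell,k}\|^2_{L^2_\per}$, expand in $\eps$ by two-scale averaging, and then recover $V^\eps_{\ell,k}$ through the energy identity $V^\eps_{\ell,k} = E^\eps_{\ell,k} - T^\eps_{\ell,k}$ combined with~\eqref{eq:second_order_app} and the identity $V_1 = G - 2F$ of~\eqref{eq:V1=G-2F}.

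By~\eqref{eq:bounds_order_2} and Remark~\ref{rem:nonnormalized}, $\psi^\eps_{\ell,k} = R_\eps + O_{H^1_\per}(\eps^2)$ with $R_\eps := \widetilde\cM^{\eps,(2)}_{E^0_{\ell,k},k}\psi^0_{\ell,k}$. Since $\nabla_k R_\eps$ is uniformly bounded in $L^2_\per$, Cauchy--Schwarz gives $T^\eps_{\ell,k} = \|\nabla_k R_\eps\|^2_{L^2_\per} + O(\eps^2)$, so the task reduces to expanding $\|\nabla_k R_\eps\|^2$ to first order in $\eps$. Using~\eqref{eq:def_M2z} and the definition of $\phi^1_{\ell,k}$, one writes explicitly (suppressing subscripts $\ell,k$)
\[
R_\eps = \psi^0 + \eps\bigl[(\chi_1)_\eps\psi^0 + \phi^1\bigr] + \eps^2\bigl[(\chi_2)_\eps\psi^0 + (A_2)_\eps\cdot\nabla_k\psi^0 + (\chi_1)_\eps\phi^1\bigr].
\]
Applying $\nabla_k$ and using the chain rule $\nabla[(f)_\eps] = (\nabla_x f)_\eps + \eps^{-1}(\nabla_y f)_\eps$, the $\eps^{-1}$ factors from the correctors shift orders downward, producing $\nabla_k R_\eps = G_0 + \eps G_1 + O_{L^2_\per}(\eps^2)$ with $G_0 = \nabla_k\psi^0 + (\nabla_y\chi_1)_\eps\psi^0$ and
\[
G_1 = (\nabla_x\chi_1)_\eps\psi^0 + (\chi_1)_\eps\nabla_k\psi^0 + \nabla_k\phi^1 + (\nabla_y\chi_2)_\eps\psi^0 + (\nabla_y A_2)_\eps\cdot\nabla_k\psi^0 + (\nabla_y\chi_1)_\eps\phi^1.
\]
Hence $T^\eps_{\ell,k} = \|G_0\|^2_{L^2_\per} + 2\eps\,\re\langle G_0, G_1\rangle_{L^2_\per} + O(\eps^2)$.

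The core step is the two-scale averaging of these scalar products. The key quantitative fact is: for smooth $f(x,y)$ that is $\L$-periodic with $\fint_\Omega f(x,\cdot)\,dy = 0$ and smooth $g$, one has $\int_\Omega g(x) f(x,\eps^{-1}x)\,dx = O(\eps^2)$ (write $f = \Delta_y h$, integrate by parts in $x$ twice, and note that $\nabla_x\cdot\nabla_y h$ also has zero $y$-mean). Consequently, only products with nonzero $y$-mean contribute at first order, and the convergence to the mean is $O(\eps^2)$ via the decomposition $f = \fint f\,dy + (f - \fint f\,dy)$. Applied to $\|G_0\|^2$, the cross term vanishes and~\eqref{eq:def_V} yields $\|G_0\|^2 = T^0_{\ell,k} - V_{\ell,k} + O(\eps^2)$. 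For $\re\langle G_0, G_1\rangle$, out of the twelve pairings only four survive at leading order: $(\nabla_x\chi_1\cdot\nabla_y\chi_1)_\eps|\psi^0|^2$ with $y$-mean $-F$ by~\eqref{eq:def_F}; $(\nabla_y\chi_1\cdot\nabla_y\chi_2)_\eps|\psi^0|^2$ with $y$-mean $2F - G$ (obtained by integrating by parts in $y$, using~\eqref{eq:def_chi2} and~\eqref{eq:def_G} together with the auxiliary identity $\fint_\Omega\chi_1\,\nabla_x\cdot\nabla_y\chi_1\,dy = F$); $|(\nabla_y\chi_1)_\eps|^2\,\overline{\psi^0}\phi^1$ with $y$-mean $-V$; and the non-oscillatory term $\int\overline{\nabla_k\psi^0}\cdot\nabla_k\phi^1$, which equals $-\langle\psi^0, V_0\phi^1\rangle$ by self-adjointness of $-\nabla_k^2$ together with $H^0_k\psi^0 = E^0_{\ell,k}\psi^0$ and the orthogonality $\langle\psi^0,\phi^1\rangle = 0$.

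Summing these four contributions and multiplying by $2$ yields the stated expansion of $T^\eps_{\ell,k}$; the expansion for $V^\eps_{\ell,k}$ then follows at once from $V^\eps_{\ell,k} = E^\eps_{\ell,k} - T^\eps_{\ell,k}$ together with~\eqref{eq:second_order_app} (whose $O(\eps)$ coefficient is $\int V_1|\psi^0|^2$) and the identity $V_1 = G - 2F$. The main obstacle is the careful bookkeeping of the twelve pairwise cross products in $\re\langle G_0, G_1\rangle$ and the algebraic identification $\fint_\Omega\nabla_y\chi_1\cdot\nabla_y\chi_2\,dy = 2F - G$, whose derivation crucially combines~\eqref{eq:def_chi2} with integration by parts in $y$ and the auxiliary identity $\fint_\Omega\chi_1\,\nabla_x\cdot\nabla_y\chi_1\,dy = F$ (itself obtained from $\fint_\Omega\chi_1\,\partial_{y_j}\chi_1\,dy = 0$).
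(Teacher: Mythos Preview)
Your proof is correct and follows essentially the same route as the paper for $T^\eps_{\ell,k}$: substitute the second-order approximation $R_\eps=\widetilde\cM^{\eps,(2)}_{E^0_{\ell,k},k}\psi^0_{\ell,k}$, expand $\nabla_k R_\eps$, and evaluate the resulting oscillatory integrals by two-scale averaging (the identity $\fint_\Omega\nabla_y\chi_1\cdot\nabla_y\chi_2\,dy=2F-G$ is exactly what the paper uses). The one difference is that for $V^\eps_{\ell,k}$ you take the shortcut $V^\eps_{\ell,k}=E^\eps_{\ell,k}-T^\eps_{\ell,k}$ together with~\eqref{eq:second_order_app} and $V_1=G-2F$, whereas the paper recomputes $V^\eps_{\ell,k}$ directly from $\int_\Omega(\eps^{-1}v_\eps+W)|R_\eps|^2$; your route is slightly more economical and avoids a second round of bookkeeping.
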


In particular, the homogenization process leads to a non-negative energy transfer from kinetic to potential energy of magnitude $\delta E_{\ell,k}=-V_{\ell,k} \ge 0$ (recall that $V \le 0$ , see \eqref{eq:def_V}):
\begin{equation} \label{eq:lim_kinetic_eps}
    T^0_{\ell,k} =  \left(\lim_{\eps \to 0} T^\eps_{\ell,k} \right) - \delta E_{\ell,k} \quad \mbox{and} \quad V^0_{\ell,k} =  \left(\lim_{\eps \to 0} V^\eps_{\ell,k}\right)+\delta E_{\ell,k}.
\end{equation}
This is due to the fact that $\psi^\eps_{\ell,k}$ oscillates at scale $\eps$ and therefore has a higher kinetic energy than $\psi^0_{\ell,k}$. Note also that in view of \eqref{eq:V1=G-2F},
\begin{align}
	E^\eps_{\ell,k}&=T^\eps_{\ell,k}+V^\eps_{\ell,k} = T^0_{\ell,k}+V^0_{\ell,k} + \eps  \int_\Omega \pa{G-2F} \ab{\p^0_{\ell,k}}^2  +O(\eps^2) \nonumber \\
& = E^0_{\ell,k} + \eps \int_\Omega V_1 |\psi^0_{\ell,k}|^2 + O(\eps^2),
\end{align}
in agreement with \eqref{eq:second_order_app}. We also have (see Section~\ref{sec:proof_phys_quantities1})
\begin{align}\label{eq:H2ep_approx}
\ep^2 \int_\Omega \ab{\Delta \p^\ep_{\ell,k}}^2 = \int_\Omega \d x \ab{\p^0_{\ell,k}(x)}^2 \fint_\Omega  v(x,y)^2\d y + O(\ep),
\end{align}
so that $\ep \|\p^\ep_{\ell,k} \|_{H^2_\per}$ is bounded uniformly in $\eps$.

\medskip

The second QoI is the integrated density of states $\cI^\ep: \R \to \R_+$ defined by
\begin{align*}
	\cN^\ep(E) := \fint_{\Omega^*} \tr \1_{(-\infty,E]}\pa{H^\ep_k} \d k = \sum_{\ell \in \N^*} \fint_{\Omega^*} \1\pa{E^\ep_{\ell,k} \le E} \d k,
\end{align*}
which we want to compare to the homogenized integrated density of states 
\begin{align*}
	\cN^0(E) := \fint_{\Omega^*} \tr \1_{(-\infty,E]}\pa{H^0_k} \d k = \sum_{\ell \in \N^*} \fint_{\Omega^*} \1\pa{E^0_{\ell,k} \le E} \d k.
\end{align*}
The arguments we use are similar to the ones in~\cite{Canc_s_2020}. Consider an $\L$-periodic Schr\"odinger operator $H$, with Bloch eigenvalues $E_{\ell,k}$. For each $\ell \in \N^*$ and $\mu \in \R$, we define the level set  
\begin{align*}
	S^\mu_\ell := \acs{k \in \Omega^* \, \bigr\vert \, E_{\ell,k} = \mu}.
\end{align*}
We say that $\mu \in \R$ is a non-degenerate point of the spectrum of $H$ if the following conditions are satisfied:
\begin{description}
\item[(C1)] $S^\mu:=\bigcup_{\ell \in \N^*} S_{\ell}^\mu \neq \emptyset$ \qquad ($\mu \in \sigma(H)$), 
\item[(C2)] for all $\ell \in \N$, $S_\ell^\mu \cap S_{\ell +1}^\mu = \emptyset$ (no surface crossing at energy level~$\mu$),
\item[(C3)] $\partial_k E_{\ell,k} \neq 0$ for all $k \in S_\ell^\mu$ (no van Hove singularities at energy level~$\mu$). 
\end{description}
Here and below $\partial_k E_{\ell,k}$ denotes the gradient of the function $\R^d \ni k \mapsto E_{\ell,k} \in \R$, which is well-defined by Kato's perturbation theory on $S_\ell^\mu$ in view of the non-degeneracy condition (C2).

\begin{theorem}[Integrated density of states]\label{thm:phys_quantities2} Assume $v$ and $W$ satisfy \eqref{hypo:on_v_W}. Let $E \in \R$. 
\begin{enumerate}
\item If $E \notin \sigma(H^0)$, there exists $\eps_0 > 0$ such that for all $\eps \in (0,\eps_0] \cap \N^{-1}$, $\cN^\ep(E)=\cN^0(E)$;
\item If $E$ is a non-degenerate point of the spectrum of $H^0$, then we have, using the notation of Theorem~\ref{thm:eigenmode_nondeg},
\begin{align*}
	\cN^\ep(E) &= \cN^0(E)+ O(\eps)= \sum_{\ell \in \N^*} \fint_{\Omega^*} \1\pa{E^{0, \eps}_{\ell,k}  \le E} \, \d k +  O(\ep^2) \\
	&= \sum_{\ell \in \N^*} \fint_{\Omega^*} \1\pa{E^{\eps,(2)}_{\ell,k} \le E} \, \d k +  O(\ep^4) \\
    &= \sum_{\ell \in \N^*} \fint_{\Omega^*} \1\pa{\widetilde E^{\eps,(2)}_{\ell,k} \le E} \, \d k +  O(\ep^4).
 \end{align*}
\end{enumerate}
\end{theorem}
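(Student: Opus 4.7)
The plan is to analyze $\cN^\eps(E)$ band by band: for each fixed $\ell$, I compare $\fint_{\Omega^*} \1(E^\eps_{\ell,k} \le E)\, dk$ with the corresponding term involving one of the approximate eigenvalues, and sum over $\ell$. The first step is a reduction to a finite sum: by \eqref{eq:asympt_eigenvalues}, there exists $L = L(E)$ and $\eps_0 > 0$ such that $E^\eps_{\ell,k} > E+1$ and $E^0_{\ell,k} > E+1$ for all $\ell > L$, $k \in \overline{\Omega^*}$ and $\eps \in \N^{-1} \cap (0,\eps_0]$; the same holds for $E^{0,\eps}_{\ell,k}$ by min-max applied to the bounded perturbation $\eps V_1$, and for the two Rayleigh quotients $E^{\eps,(2)}_{\ell,k}$, $\widetilde E^{\eps,(2)}_{\ell,k}$ using that $\Md, \Mze$ act as perturbations of the identity that do not destroy the large kinetic energy of $\psi^{0,\eps}_{\ell,k}$ and $\psi^0_{\ell,k}$ for $\ell$ large. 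This leaves at most $L$ non-trivial terms.

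For part (1), $E \notin \sigma(H^0)$ implies $\dist(E, \sigma(H^0_k)) \ge \delta > 0$ uniformly in $k$, and in particular $|E^0_{\ell,k} - E| \ge \delta$ for all $\ell \le L$. Then \eqref{eq:unif_eigenvalues} forces $E^\eps_{\ell,k}$ to stay on the same side of $E$ as $E^0_{\ell,k}$ as soon as $C_\ell \eps < \delta$, so every indicator agrees and $\cN^\eps(E) = \cN^0(E)$.

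For part (2), I fix $\ell \le L$ with $S^E_\ell \ne \emptyset$. Conditions (C2)-(C3), together with the continuity of Bloch eigenvalues and compactness, yield a tubular neighborhood $U_\ell \subset \overline{\Omega^*}$ of $S^E_\ell$ on which $E^0_{\ell,k}$ is a simple eigenvalue of $H^0_k$ (isolated from neighboring bands by a uniform gap) and $|\na_k E^0_{\ell,k}| \ge c > 0$; outside $U_\ell$, $|E^0_{\ell,k} - E| \ge \delta_\ell > 0$ and the indicators agree for $\eps$ small, exactly as in part (1). On $U_\ell$, Theorem~\ref{thm:eigenmode_nondeg} applies and provides the pointwise eigenvalue errors of orders $\eps$, $\eps^2$, $\eps^4$, $\eps^4$ for the four approximations. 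Denoting by $E^a_{\ell,k}$ any of these approximations with error $C\eps^p$, a sandwich inclusion gives
\[
\left\{ k \in U_\ell : \1\!\left( E^\eps_{\ell,k} \le E \right) \ne \1\!\left( E^a_{\ell,k} \le E \right) \right\} \subset \left\{ k \in U_\ell : \left| E^\eps_{\ell,k} - E \right| \le C \eps^p \right\},
\]
and provided $|\na_k E^\eps_{\ell,k}| \ge c/2$ on $U_\ell$ for small $\eps$, the co-area formula bounds the measure of this set by $O(\eps^p)$. Summing over the finitely many $\ell \le L$ yields the four claimed estimates.

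The main technical point is therefore the uniform-in-$\eps$ lower bound $|\na_k E^\eps_{\ell,k}| \ge c/2$ on $U_\ell$ (and the analogous bounds needed to bypass the sandwich via each of $E^{0,\eps}_{\ell,k}$, $E^{\eps,(2)}_{\ell,k}$, $\widetilde E^{\eps,(2)}_{\ell,k}$, if one prefers to intersect the bad set with those instead). For $E^\eps_{\ell,k}$ and $E^{0,\eps}_{\ell,k}$, this follows from the Hellmann-Feynman identity
\[
    \na_k E^\eps_{\ell,k} = \left\langle \psi^\eps_{\ell,k}, (-2i\na + 2k)\, \psi^\eps_{\ell,k} \right\rangle_{L^2_\per},
\]
combined with the uniform $H^1_\per$-convergence $\psi^\eps_{\ell,k} \to \psi^0_{\ell,k}$ on $U_\ell$ granted by Theorem~\ref{thm:eigenmode_nondeg}, which gives $|\na_k E^\eps_{\ell,k} - \na_k E^0_{\ell,k}| = O(\eps)$. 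For the Rayleigh quotients $E^{\eps,(2)}_{\ell,k}$ and $\widetilde E^{\eps,(2)}_{\ell,k}$ defined in \eqref{eq:second_order_app_eig}-\eqref{eq:second_order_app_eig2}, I differentiate the Rayleigh quotient in $k$ using the product rule: the $k$-derivatives of the trial functions are controlled by explicit differentiation of $\Md$ (which depends on $k$ only through the $\na_k$ term with smooth coefficient $\Auep$) and of $\Mze$ (via the resolvent $(H^0_k - E^0_{\ell,k})^{-1}_\perp V_1$, whose $k$-derivative is bounded uniformly on $U_\ell$ thanks to the uniform spectral gap), so these gradients are also within $O(\eps)$ of $\na_k E^0_{\ell,k}$, closing the argument.
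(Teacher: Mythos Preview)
Your overall strategy matches the paper's: reduce to finitely many bands, show that away from the level sets $S^E_\ell$ the indicators agree for small $\eps$, and near $S^E_\ell$ bound the measure of the symmetric difference of sublevel sets using the transversality condition (C3). The paper's write-up is in fact slightly simpler for the leading-order estimate: rather than transferring the gradient lower bound to $E^\eps_{\ell,k}$, it observes that the symmetric difference $\{E^\eps_{\ell,k} \le E < E^0_{\ell,k}\} \cup \{E^0_{\ell,k} \le E < E^\eps_{\ell,k}\}$ is contained in the tube $\omega^*_{C_*\eps} := \{k : \dist(E,\sigma(H^0_k)) \le C_*\eps\}$, whose measure is $\le M C_*\eps$ by (C3) applied \emph{directly} to $E^0_{\ell,k}$. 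This sidesteps any need to differentiate the perturbed eigenvalues.

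There is, however, a genuine gap in your justification of the gradient transfer. You write that Theorem~\ref{thm:eigenmode_nondeg} grants ``uniform $H^1_\per$-convergence $\psi^\eps_{\ell,k} \to \psi^0_{\ell,k}$''. This is false: the theorem gives $\|\psi^\eps_{\ell,k} - \psi^{\eps,(1)}_{\ell,k}\|_{H^1_\per} \le C\eps$, where $\psi^{\eps,(1)}_{\ell,k}$ is the normalized $(1+\eps(\chi_1)_\eps)\psi^0_{\ell,k}$, and this corrector does \emph{not} go to zero in $H^1_\per$. Indeed $\nabla\big(\eps(\chi_1)_\eps\psi^0_{\ell,k}\big) = (\nabla_y\chi_1)_\eps\,\psi^0_{\ell,k} + O_{L^2_\per}(\eps)$, whose $L^2_\per$-norm is of order one. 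So the Hellmann--Feynman argument as you wrote it does not close.

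The fix is available in the paper's toolbox: insert the first-order corrector into the Hellmann--Feynman expression and use the oscillatory-integral estimate \eqref{eq:osc_int}. Writing $\psi^\eps_{\ell,k} = (1+\eps(\chi_1)_\eps)\psi^0_{\ell,k} + O_{H^1_\per}(\eps)$, one finds
\[
\big\langle \psi^\eps_{\ell,k}, (-2i\nabla+2k)\psi^\eps_{\ell,k}\big\rangle
= \big\langle \psi^0_{\ell,k}, (-2i\nabla+2k)\psi^0_{\ell,k}\big\rangle
+ \int_\Omega (\nabla_y\chi_1)_\eps\,|\psi^0_{\ell,k}|^2\cdot(-2i) + O(\eps),
\]
and the middle term is $O(\eps)$ by \eqref{eq:osc_int} since $\nabla_y\chi_1$ has zero $y$-mean. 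This recovers $|\nabla_k E^\eps_{\ell,k} - \nabla_k E^0_{\ell,k}| = O(\eps)$ and hence your co-area bound. The same repair is needed for the higher-order estimates, where some uniform gradient control on $E^\eps_{\ell,k}$ (or on one of the approximate eigenvalues) really is required to obtain measure $O(\eps^p)$ for $p>1$; the paper's own proof is terse here (``reasoning as above''), and your approach, once corrected, actually supplies the missing ingredient.
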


%
%
%

\section{Numerical results}\label{sec:numerical_res}

\subsection{Comparison between first-order and second-order approximations}

We first illustrate Theorem~\ref{thm:eigenmode_nondeg} on the error between the eigenmodes of $H^\eps_k$ and their first- and second-order approximations obtained by homogenization theory. In order to be able to compute accurate reference solutions for small values of $\eps$, we limit ourselves to the one-dimension case ($d=1$). In our simulations, we consider the lattice $\L=2\pi\Z$, the unit cell $\Omega=[0,2\pi)$, and the potentials
\begin{equation*}
    v(x,y) = \cos (2 x) \cos (2y) - \cos(x) \, \sin(y) \quad \mbox{and} \quad W(x)=0.
\end{equation*}
We aim at approximating the ground-state ($\ell=1$) of $H^\eps_k$ at quasi-momentum $k=0$. On Figure \ref{fig:norms_diff}, we plot the quantities
\begin{align} 
    & \ab{E^\ep_{\ell,k} - E^{0, \eps}_{\ell,k}}, \qquad  
         \ab{E^\ep_{\ell,k} - E^{\eps, (1)}_{\ell,k}}, \qquad
         \ab{E^\ep_{\ell,k} - E^{\ep,(2)}_{\ell,k}}, \qquad
        \ab{E^\ep_{\ell,k} - \widetilde{E}^{\ep,(2)}_{\ell,k}},  \nonumber \\
    & \qquad \nor{\p^\ep_{\ell,k} - \p^{\eps, (1)}_{\ell,k}}{H^1_\per}, \quad 
    \nor{\p^\ep_{\ell,k} -  \p^{\ep,(2)}_{\ell,k}}{H^1_\per}, \quad  
    \nor{\p^\ep_{\ell,k}  - \widetilde{\p}^{\ep,(2)}_{\ell,k} }{H^1_\per}.
    \label{eq:norms_diff}
\end{align}
When $\eps = p/q$ with $p,q \in \N^*$, $p > 1$, the above quantities are computed in a supercell of size $p$, see Remark~\ref{rem:rational}. For consistency, the $H^1_\per$-norm is defined by
$$
\nor{\psi}{H^1_\per} = \left( \frac 1 p \int_{p\Omega} \pa{ |\psi|^2+|\nabla\psi|^2} \right)^{1/2},
$$
and the eigenfunctions are normalized in such a way that
$$
\frac 1p \int_{p\Omega} \left|\p^\ep_{\ell,k} \right|^2 = \frac 1p \int_{p\Omega} \left|\p^{\ep,(1)}_{\ell,k} \right|^2 =\frac 1p \int_{p\Omega} \left|\p^{\ep,(1)}_{\ell,k} \right|^2=\frac 1p \int_{p\Omega} \left|\widetilde\p^{\ep,(2)}_{\ell,k} \right|^2 =1.
$$
We observe that in the regime $\eps < 0.4$, our second-order approximations greatly improve the first-order ones. We also observe that the asymptotic regimes described in Theorem~\ref{thm:unif_CV_eigenvalues} seem to be attained around $\eps \approx 1/5$. 

\begin{figure}[h]
    \begin{center}
        \includegraphics[width=0.65\textwidth]{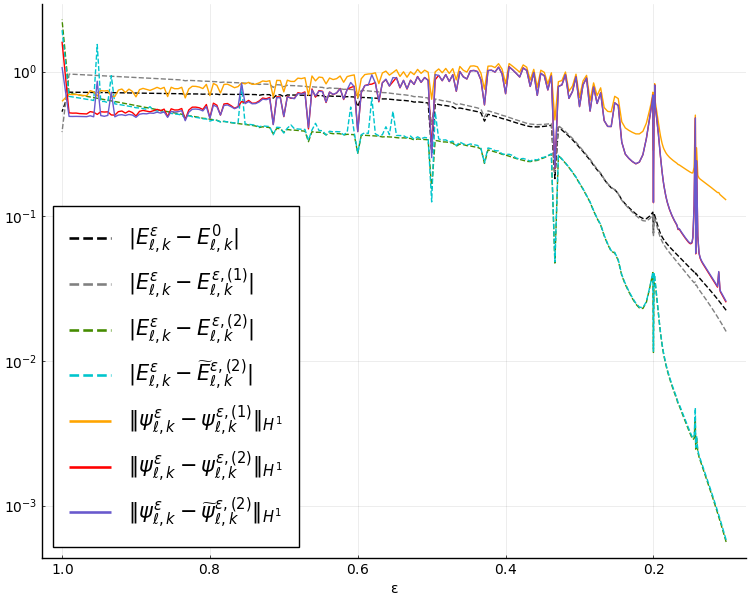} 
        \includegraphics[width=0.3\textwidth]{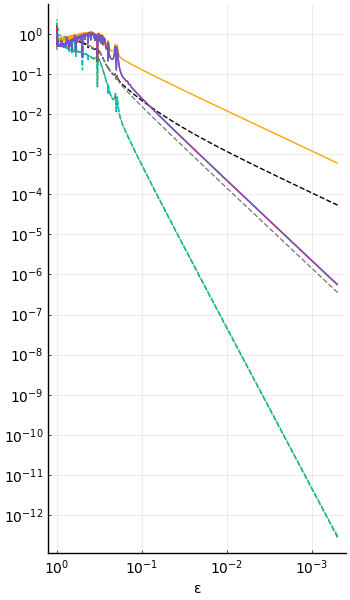} 
        \caption{Quantities \eqref{eq:norms_diff}  (for $\ell=1$ and $k=0$) against $\ep$, for $\ep \in [10^{-1},1] \cap \Q$ (left), and $\ep \in [10^{-3},1] \cap \Q$ (right). 
        }\label{fig:norms_diff}
    \end{center}
\end{figure}


Next, we numerically evaluate the optimal constants $C$ in the estimates \eqref{eq:bounds_order_1}-\eqref{eq:bounds_order_2}. In Figure~\ref{fig:constants}, we plot the quantities
\begin{align}
    & \eps^{-1} \ab{E^\ep_{\ell,k} \hspace{-0.05cm} -\hspace{-0.05cm} E^{0, \eps}_{\ell,k}}, \  
    \eps^{-2} \ab{E^\ep_{\ell,k}\hspace{-0.05cm} - \hspace{-0.05cm}E^{\eps, (1)}_{\ell,k}}, \
    \eps^{-4} \ab{E^\ep_{\ell,k}\hspace{-0.05cm} -\hspace{-0.05cm} E^{\eps, (2)}_{\ell,k}}, \
    \eps^{-4} \ab{E^\ep_{\ell,k}\hspace{-0.05cm} -\hspace{-0.05cm} \widetilde{E}^{\ep,(2)}_{\ell,k}},  \nonumber\\
    &   
    \eps^{-1} \nor{\p^\ep_{\ell,k} - \p^{\eps, (1)}_{\ell,k}}{H^1_\per}, \ 
    \eps^{-2}  \nor{\p^\ep_{\ell,k} -  \p^{\ep,(2)}_{\ell,k}}{H^1_\per}, \  
    \eps^{-2}  \nor{\p^\ep_{\ell,k}  - \widetilde{\p}^{\ep,(2)}_{\ell,k} }{H^1_\per}, \label{eq:constants}
\end{align}
still for $\ell=1$ and $k=0$.
As expected, they are bounded, and their maxima over $\ep \in (0,\ep_0]$ give the optimal prefactors in the bounds~\eqref{eq:bounds_order_1}-\eqref{eq:bounds_order_2} for the example under consideration. We numerically observe that these constants are of orders $1$ to $100$. 
\begin{figure}[!h]
    \begin{center}
        \includegraphics[trim={1.3cm 0 0 0},clip,width=0.99\textwidth]{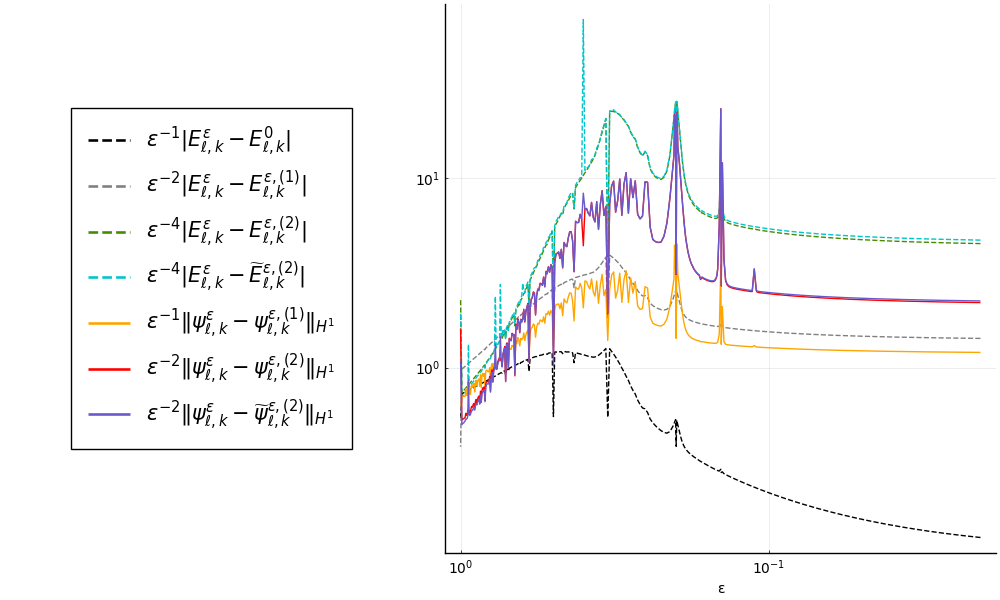} 
    \end{center}
    \caption{Quantities \eqref{eq:constants} (for $\ell=1$ and $k=0$) against $\ep \in \Q$.}\label{fig:constants}
\end{figure}

We remark on $E^{\ep,(2)}_{\ell,k}$ and $\widetilde{E}^{\ep,(2)}_{\ell,k}$ (resp. $\psi^{\ep,(2)}_{\ell,k}$ and $\widetilde{\psi}^{\ep,(2)}_{\ell,k}$) give very similar approximations of $E^\eps_{\ell,k}$ (resp. $\psi^\eps_{\ell,k}$ in $H^1_\per$-norm). Figure~\ref{fig:constants} also suggests that the prefactors are independent of the numerator of $\eps$ for $\eps=\frac pq$, with $p$ and $q$ coprime integers (at least for the quasi-momentum $k = 0$)

\medskip

We do not plot the errors obtained with the non-normalized approximations introduced in Remark~\ref{rem:nonnormalized}, and only mention that the normalization procedure only slightly modifies the numerical results.

\subsection{Localization of $\p^\ep_{\ell,k}$}
Next, we study the effect of the `level of commensurability' of $\eps = p/q \in \Q$ on the localization of the eigenfunctions. When $\eps = p/q$ with $p$ and $q$ coprime integers, both operators $H^0$ and $H^\eps$ are $p\L$-periodic. We display in Figure~\ref{fig:kin} the kinetic energy {\em per unit volume} of $\psi^\eps_{\ell,k}$, that is
\[
    \frac{\dps \int_{p \Omega} | \nabla \psi^\eps_{\ell, k} |^2}{\dps \int_{p \Omega} |\psi^\eps_{\ell, k} |^2}, 
\]
as well as the one of $\psi^{\eps,(2)}_{\ell,k}$, as a function of $\eps$. We see that this gives smooth curves, except for some exceptional values of $\eps$ (often corresponding to low values of $p$ and $q$). 
We also note that the kinetic energy per unit volume of $\psi^\eps_{\ell, k}$ is larger than the one of its approximation $\psi^{\eps,(2)}_{\ell,k}$ for large values of $\eps$ and that the difference between these two quantities goes to zero with $\eps$. 

\begin{figure}[h]
    \begin{center}
        \includegraphics[width=0.99\textwidth]{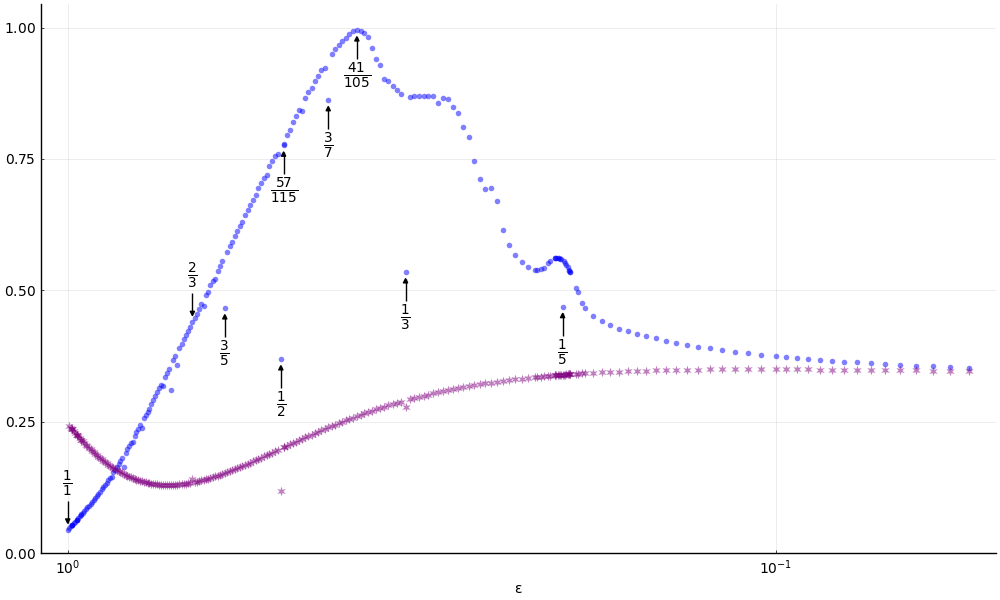} 
    \end{center}
    \caption{Kinetic energy per unit volume of $\psi^\eps_{\ell,k}$ (circles) and $\psi^{\eps,(2)}_{\ell,k}$ (stars) as functions of $\ep = p/q$, for many values of $p$ and $q$. The values of $\ep = p/q$ are indicated for some points.}\label{fig:kin}
\end{figure}

In Figure~\ref{fig:curves}, we plot $| \psi_{\ell, k}^\eps |^2$ for $\eps = 1/2$ and $\eps = 57/115 \approx 1/2$. In the first case, the function is delocalized and has a low kinetic energy per unit volume. In the second case however, the wave function is much more localized, and has a higher kinetic energy per unit volume. This  is reminiscent of Anderson localization~\cite{Anderson_1958}: the local disregistry field seems to play a role similar to disorder and localize the eigenstates. For $\ep$ small enough, more precisely for $\ep \lesssim 1/6$, this localization phenomenon does not happen, i.e. in this case $\p^\ep_{\ell,k}$ is always delocalized.

\begin{figure}[!h]
    \begin{center}
        \includegraphics[trim={0.1cm 0 0.1cm 0},clip,width=0.99\textwidth]{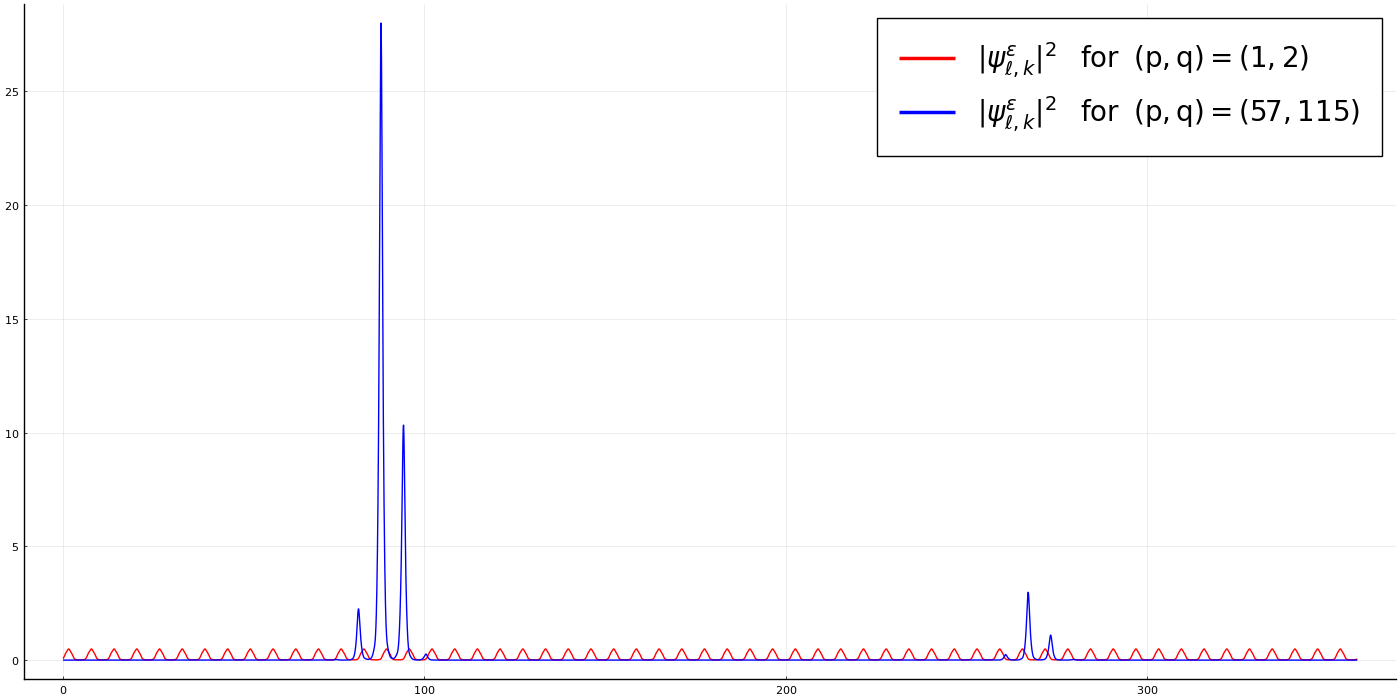} 
            \end{center}
    \caption{Plots of $|\p^\ep_{\ell,k}|^2$ (with $\ell=1$ and $k=0$) for $\eps = \frac 12$ and  $\eps=\frac{57}{115}=\frac 12 - \frac 1{330}$ on their smallest common periodic cell $[0,114\pi]$.}\label{fig:curves}
\end{figure}

Lastly, we plot in Figure~\ref{fig:curves2} the densities and potentials for $\ell=1$, $k=0$ and $\eps=\frac 12$. We observe that even for such a large value of $\eps$, the second-order approximation $|\psi^{\eps,(2)}_{\ell,k}|^2$ of the density $|\psi^{\eps}_{\ell,k}|^2$ has a qualitatively correct profile, while the zeroth- and first-order approximations $|\psi^{0}_{\ell,k}|^2$ and $|\psi^{\eps,(1)}_{\ell,k}|^2$ have not.

\begin{figure}[!h]
    \begin{center}
        \includegraphics[trim={0.1cm 0 0.1cm 0},clip,width=0.99\textwidth]{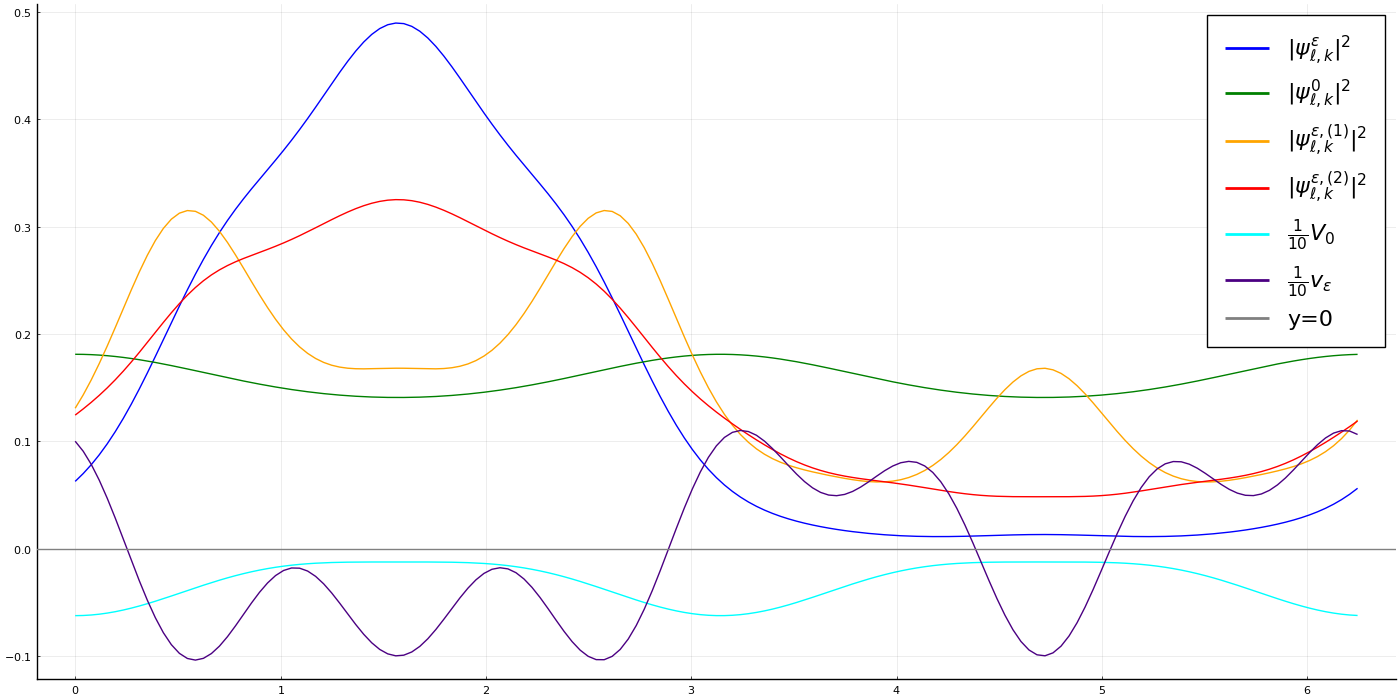} 
    \end{center}
    \caption{Plots of the densities and potentials (for $\ell=1$ and $k=0$) for $\eps=\frac 12$ on $[0,2\pi]$.}\label{fig:curves2}
\end{figure}



\section{Formal two-scale expansions}\label{sec:formal_expansions}

In this section, we perform the formal two-scale expansions for both the eigenvalue problem and the linear equation. 

\subsection{Two-scale expansion of the eigenvalue problem}\label{subsec:TSE_eigenvalue}

We start with the eigenvalue problem
\begin{equation} \label{eq:Schrodinger_eps}
H^\ep_k \p^\eps = E^\eps \p^\ep.
\end{equation}
Our goal is to approximate the eigenmode $(E^\eps_{\ell, k}, \psi^\eps_{\ell,k})$ with simpler quantities, computed from macroscopic objects. Without loss of generality, we may impose the intermediate normalization condition 
\begin{equation} \label{eq:intermediate_norm}
    \langle \p^0_{\ell,k},\p^\eps \rangle_{L^2_{\rm per}}=1.
\end{equation}
We consider the formal expansions
\begin{equation}\label{eq:formal_expansions}
\psi^\ep(x) = \sum_{i=0}^{+\infty} \ep^i \p_i\left( x, \frac x\eps \right), 
\quad \text{and} \quad 
E^\eps = \sum_{i=0}^{+\infty} \ep^i E_i.
\end{equation}
where $\p_i(x,y)$ is $\L$-periodic with respect to both $x$ and $y$, and
insert these expansions in \eqref{eq:Schrodinger_eps}-\eqref{eq:intermediate_norm}. In what follows, we write $f(x)$ if $f$ depends only on the variable $x$, and $f$ if it depends on the two variable $(x,y)$. Regrouping the terms in $\ep^i$ and considering $x$ and $y=\frac x\ep$ as independent variables, we obtain
\begin{multline*}
     \eps^{-2} \left( - \Delta_y \psi_0 \right) + \eps^{-1} \left[- \Delta_y \psi_1 + (v - 2 \na_{x,k} \cdot \na_y ) \psi_0 \right] \\
      +  \sum_{i \ge 0} \ep^i \bigg( - \Delta_y \p_{i+2}  + \pa{v -2 \na_{x,k} \cdot \na_y } \p_{i+1} \\
     + \pa{-\nak^2 + W(x)} \p_i - \pa{E*\p}_i \bigg) = 0,
 \end{multline*}
where we used the relations
\begin{align*}
\na_k \pa{fg} = f \na_k g + g \na f, \qquad \na_k^2 \pa{fg} = g \na_k^2 f + 2 \na_k f \cdot \na g + f \Delta g.
\end{align*}
At the formal level, we interpret the normalization condition~\eqref{eq:intermediate_norm} as
\[
    \fint_{\Omega} \langle \psi_{\ell, k}^0, \psi_0(\cdot, y) \rangle_{L^2_\per} \d y = 1, 
\quad \text{and} \quad
    \forall i \ge 1, \quad \fint_{\Omega} \langle \psi_{\ell, k}^0, \psi_i(\cdot, y) \rangle_{L^2_\per} \d y = 0.
\]

\paragraph{Terms in $\ep^{-2}$.} At leading order in $\eps$, the equation becomes $\Delta_y\p_{0}(x,y)=0$, from which we infer that $\psi_{0}$ is independent of $y$. We write $\psi_0(x,y)=\psi_0(x)$ to simplify the notation. The normalization condition simply reads $\langle \psi^0_{\ell,k},\psi_0 \rangle_{L^2_{\rm per}}=1$.

\paragraph{Terms in $\ep^{-1}$.} Since $\psi_0$ is independent of $y$, the next order simply reads
\begin{equation*}
\Delta_y \p_1(x,y) = v(x,y)\p_0(x).
\end{equation*}
Integrating in $y$ on $\Omega$, we obtain the compatibility condition 
\begin{align*}
\psi_0(x)\fint_\Omega v(x,y)\, \d y=0
\end{align*}
 for all $x$, which justifies the assumption $\fint_\Omega v(\cdot,y)\, \d y=0$. We will see later that $\psi_0$ is a normalized eigenfunction of a Schr\"odinger operator, and therefore does not vanish on a set of positive measure by the unique continuation principle. This also motivates the introduction of the corrector $\chi_1$ defined in \eqref{eq:def_chi}, and satisfies $\Delta_y \chi_1 = v$. We obtain
\begin{equation}\label{eq:psi1}
\p_1(x,y) = \chi_1(x,y) \p_0(x) +U_1(x),
\end{equation} 
where the function $U_1$ only depends on the macroscopic variable $x$. Since $\fint_\Omega \chi_1(x,y) \d y = 0$, the normalization condition reads 
\begin{equation} \label{eq:psi0_U1}
    \langle \psi^0_{\ell,k},U_1 \rangle_{L^2_{\rm per}}=0.
\end{equation}

\paragraph{Terms in $\ep^0$.} We have 
\begin{align*}
	- \Delta_y \p_2 + \pa{v -2 \na_{x,k} \cdot  \na_y} \p_1 +\pa{- \na_{x,k}^2 + W(x) -E_0} \p_0 = 0.
	\end{align*}
	Inserting the formula for $\psi_1$ in~\eqref{eq:psi1} and reordering the terms, we get
	\begin{align}
	0= &  \left[ \pa{-\na_{k}^2 +W - E_0} \p_0  \right] (x) + v \chi_1 \p_0(x) -2 \pa{\nabla_x \cdot  \na_y \chi_1} \psi_0(x) \nonumber \\
& - \Delta_y \p_2 -2 \na_y \chi_1 \cdot \nabla_k \p_0(x)  + vU_1(x). \label{eq:eps0}
\end{align}
Averaging with respect to $y$ on the periodic $\Omega$, the last four terms of the right-hand side vanish and we get
$$
\pa{-\na_{k}^2 +W(x) - E_0} \p_0(x) + \underbrace{\left( \fint_\Omega v(x,y) \chi_1(x,y) \, \d y \right)}_{=V(x)} \p_0(x) =0,
$$
which justifies the definition~\eqref{eq:def_V} of the homogenized potential $V$. The pair $(E_0,\psi_0)$ therefore is an eigenmode of the homogenized operator $H^0_k=-\na_{k}^2 +V+W$ such that $ \langle \psi^0_{\ell,k},\psi_0 \rangle_{L^2_{\rm per}}=1$. Assuming for simplicity that $E^0_{\ell,k}$ is a non-degenerate eigenvalue of $H^0_k$, we obtain that 
\begin{equation}\label{eq:E0_psi0}
E_0=E^0_{\ell,k} \quad \mbox{and} \quad \psi_0=\p^0_{\ell,k}.
\end{equation}

Going back to~\eqref{eq:eps0} and recalling that $V_0 = V + W$, we obtain
\begin{align*}
	\Delta_y \p_2 &=  \left[ v \chi_1 -V_0(x) -2 \pa{\nabla_x \cdot \na_y} \chi_1 \right] \p^0_{\ell,k}  -2 \na_y \chi_1 \cdot \na_{k} \p^0_{\ell,k}(x)  + vU_1(x).
\end{align*}
All quantities have null average in $y \in \Omega$. Recalling the definition of $\chi_2$ and $A_2 $ in~\eqref{eq:def_chi2}-\eqref{eq:def_A}, we integrate this expression to obtain
\begin{align}
	\p_2 = \chi_2 \p^0_{\ell,k}(x) + A_2  \cdot \na_{k} \p^0_{\ell,k}(x) + \chi_1 U_1(x) + U_2(x), \label{eq:psi2}
\end{align}
where the function $U_2$ only depends on the macroscopic variable $x$. In order to have an approximation of $\psi^\eps_{\ell, k}$ up to order $2$, we need to identify the functions $U_1$ and $U_2$ appearing in the expression of $\psi_2$. It turns out that only the expression of $U_1$ matters (the term $U_2$ will give a higher order contribution). To find $U_1$, we consider the next order. 

\paragraph{Terms in $\ep^{1}$.} Using \eqref{eq:E0_psi0}, the term in $\eps^1$ reads
\begin{align*}
	 \Delta_y \p_3 = \pa{v- 2 \na_{x,k} \cdot  \na_y } \p_2 +\pa{- \na_{x,k}^2 +W(x) - E_0 } \p_1 - E_1 \p^0(x).
 \end{align*}
Inserting the expressions found for $E_0$, $\psi_0$, $\psi_1$ and $\psi_2$, this is also
\begin{align}
     \Delta_y \p_3  = & \left( v- 2 \na_{x,k} \cdot  \na_y  \right) \left[ \chi_2 \p^0_{\ell,k}(x) + A_2  \cdot \na_{k} \p^0_{\ell,k}(x) + \chi_1 U_1(x) + U_2(x)  \right]   \nonumber \\
     & \quad + \left(- \na_{x,k}^2 +W(x) - E^0_{\ell, k} \right) \left[\chi_1 \p_0(x) +U_1(x) \right] - E_1 \psi^0_{\ell, k}(x). \label{eq:eps1}
\end{align}
We first average in $y$. Recall that $V_1$ was defined in~\eqref{eq:def_V1} by $V_1(x) := \fint_{\Omega} v(x,y) \chi_2(x,y) \d y$,
and notice that
\begin{align}
    \fint_{\Omega} v(x,y) A_2 (x,y) \d y \hspace{-0.05cm}= \fint_{\Omega} \chi_1 \left( \Delta_y A_2 \right)\hspace{-0.05cm}  = -2 \fint_\Omega \chi_1 \left( \nabla_y \chi_1 \right) \hspace{-0.05cm} = - \fint_\Omega \nabla_y | \chi_1 |^2 = 0.  \label{eq:integral_vA1}
\end{align}
We obtain
 \begin{align*}
 E_1 \p^0_{\ell,k}(x) &= V_1(x) \psi^0_{\ell, k}(x) + V(x) U_1(x) +   \pa{- \na_{k}^2 +W(x) - E^0_{\ell,k} } U_1(x)   \nonumber  \\
 & = \pa{H^0_k - E^0_{\ell,k}} U_1(x)  + V_1(x)  \p^0_{\ell,k}(x). 
  \end{align*}
Taking the inner product with $\psi^0_{\ell, k}$ yields
 \begin{align} \label{eq:def_E1}
	 E_1 = \int_{\Omega} V_1 \ab{\p^0_{\ell,k}}^2,
 \end{align}
 and hence
 \begin{align}\label{cro}
	(H^0_k-E^0_{\ell,k}) U_1 =   (E_1 - V_1)\p^0_{\ell,k} \in (\p^0_{\ell,k})^\perp \quad \mbox{(in $L^2_{\rm per}(\Omega)$)}.
 \end{align}
Still assuming that $E_k^0$ is non-degenerate, and recalling that $\langle \psi^0_{\ell, k}, U_1 \rangle_{L^2_\per} = 0$, see~\eqref{eq:psi0_U1}, we finally get 
\begin{equation}\label{eq:def_U1}
    U_1 = - \pa{H^0_k - E_k^0}_{\perp}^{-1} (V_1 \p^0_{\ell,k}).
\end{equation}
Note that $U_1$ coincides with the function $\phi^1_{\ell,k}$ introduced in Theorem~\ref{thm:phys_quantities1}.
One can continue the computations, and find explicit expressions for $U_2(x)$, $\psi_3(x, y)$, and so on (see Appendix~\ref{sec:detail_ansatz}). For our purpose however, we may stop here.

\paragraph{Conclusion.} Collecting the results in \eqref{eq:psi1}, \eqref{eq:E0_psi0}, \eqref{eq:psi2}, \eqref{eq:def_U1}, we obtain 
\begin{align*}
	\psi^\eps(x) &=  \psi^0_{\ell, k} (x) + \eps \left[ (\chi_1)_\eps  \psi^0_{\ell, k} (x) -  \pa{H^0_k - E_k^0}_{\perp}^{-1} (V_1 \p^0_{\ell,k}) \right] \\
    & \quad + \eps^2 \left[(\chi_2)_\eps \psi^0_{\ell, k} (x) + (A_2 )_\eps \cdot \nabla_k \psi^0_{\ell, k} -  \left( \chi_1 \right)_\eps  \pa{H^0_k - E_k^0}_{\perp}^{-1} (V_1 \p^0_{\ell,k}) \right] \\
    & \quad + \eps^2 U_2(x) + O(\eps^3) \qquad \mbox{(formal two-scale expansion)}.
\end{align*}
At this point, we may expect that the remainder is actually $O(\eps^3)$ for the $L^2_{\rm per}$-norm, but certainly not for the $H^1_\per$-norm because the following term in the expansion, namely $\eps^3 \psi_3 \left(x, \frac{x}{\eps}\right)$, is highly oscillatory. On the other hand, we may expect that the remainder is $O(\eps^2)$ for the $H^1_\per$-norm since the latter only involves first derivatives of quantities oscillating at scale $\eps$. As our goal is to obtain first-order approximations of eigenfunctions of $H^\eps_k$ for the $H^1_\per$-norm, we may get rid of the non-oscillatory term $\eps^2 U_2(x)$. We finally end up with 
\begin{align*}
    \psi^\eps = & \left(1+ \eps  (\chi_1)_\eps  + \eps^2 \left[ (\chi_2)_\eps + (A_2 )_\eps \cdot \nabla_k \right] \right) \psi^0_{\ell, k}  \\
    & \quad - \eps (1+\eps (\chi_1)_\eps)   \pa{H^0_k - E_k^0}_{\perp}^{-1} (V_1 \p^0_{\ell,k}) + O_{H^1_\per}(\eps^2) \qquad \mbox{(expected)}.
\end{align*}
This motivates the introduction of the operator $\Mz$ defined in~\eqref{eq:def_M2z}.

\subsection{Two-scale expansion of the linear equation}\label{subsec:expansion_lineq}
 
We now perform similar computations to solve the linear equation
  \begin{align*}
	 \pa{H^\ep_k -z} u^\eps = f.
 \end{align*}
In what follows, we assume that $z \in \C \setminus \sigma(H^0_k)$, and we denote by $u^0_{k,z, f} := (H^0_k - z)^{-1} f$ the solution of the homogenized equation. We make the Ansatz
\[
    u^\ep(x) = \sum_{i=0}^{+\infty} \ep^i u_i\left( x, \frac x\eps \right).
\]
We obtain the new set of equations
\begin{multline*}
	\sum_{i \ge -2} \ep^i \bigg( - \Delta_y u_{i+2}  +\delta_{i \ge -1} \pa{v -2 \na_{x,k} \cdot \na_y } u_{i+1} \\
	+ \delta_{i \ge 0} \pa{\pa{-\nak^2 + W(x)} u_i - f} \bigg) = 0.
\end{multline*}
Repeating the same arguments as in the previous section (this time, $(H_k^0 - z)$ is invertible), we get (recall that we note $f$ for $f(x,y)$)
 \begin{align*}
	 u_0 =&  u^0_{k,z,f}(x) \\
	 u_1 =& \chi_1 u^0_{k,z,f}(x) - [\pa{H^0_k -z}^{-1} (V_1 u^0_{k,z,f})](x) \\
	 u_2 =& \pa{\chi_2 + A_2  \cdot \na_k}u^0_{k,z,f}(x)   - \chi_1 [\pa{H^0_k - z}^{-1} (V_1 u^0_{k,z,f})](x)  + U_2(x),
 \end{align*}
which suggests that
\[
    u^\eps = \Mz u^0_{k,z,f} + O_{H^1_\per}(\eps^2) \qquad \mbox{(expected)}.
\]

\section{Proofs}
\label{sec:proofs}

We now provide the proofs of the previous results. This section is organized as follows: 
\begin{enumerate}
\item First, we establish in Section~\ref{sec:bounds} bounds on the multiplication operator $\eps^{-1}v_\eps$, from which we deduce that the operator $H^\eps$ is uniformly bounded from below, hence its Bloch fibers $H^\eps_k$ as well (uniformly in $k$); we then derive bounds on $(H^\eps_k-z)^{-1}$ seen as an operator from $H^{-1}_\per(\Omega)$ to $H^1_\per(\Omega)$.
\item In Section~\ref{sec:operator_expansions} we compute expansions of the operators $H^\eps_k\cM^{\eps,(1)}$,  $H^\eps_k\cM^{\eps,(2)}_k$  and 
$H^\eps_k\widetilde \cM^{\eps,(2)}_{z,k}$ leading to equalities such as 
\begin{equation} \label{eq:pre_bounds}
(H_k^\eps - z) \cM^{\eps, (1)} - (H^0_k - z) = \mbox{highly-oscillatory terms}.
\end{equation}
\item We show in Section~\ref{sec:proof_thm_res} that these equalities, together with the bounds on $(H^\eps_k-z)^{-1}$ established in Section~\ref{sec:bounds} and a technical lemma allowing us to control the highly-oscillatory terms, yields the proofs of Theorem~\ref{thm:res} (approximations of the resolvent) and Corollary~\ref{cor:main_LE} (homogenization of the linear equation).
\item In Section~\ref{sec:proof_eig} we prove Theorems~\ref{thm:eigenmode_nondeg} and~\ref{thm:eigenmode_deg} on the homogenization of the eigenvalue problem in the non-degenerate and degenerate cases respectively.
\item We finally prove Theorems~\ref{thm:phys_quantities1} and \ref{thm:phys_quantities2} on the approximations of physical quantities of interest (kinetic/potential energies and integrated density of states) in Section~\ref{sec:QoI}.
\end{enumerate}

\subsection{Preliminary bounds on the operators $\eps^{-1}v_\eps$, $H^\eps$, and $H^\eps_k$}
\label{sec:bounds}

The following lemma collects some useful uniform bounds on $\eps^{-1}v_\eps$, $H^\eps$, and $H^\eps_k$. It shows in particular that $H^\eps$ is uniformly bounded from below.

\begin{lemma}\label{basiclemma}
Let $v$ and $W$ satisfying \eqref{hypo:on_v_W}. 
\begin{enumerate}
	\item \label{lemma:1} There exists a constant $c \in \R_+$ such that for all $a > 0$, $\ep \in (0,1]$, and $u \in H^1(\R^d)$,
	\begin{equation}\label{eq:estim_1}
		\ab{\int_{\R^d} \ep^{-1} v_\eps |u|^2}  \le a \int_{\R^d} \ab{\na u}^2 + c \, (1+a^{-1}) \int_{\R^d} |u|^2 .
\end{equation}

\item \label{lemma:2}  For any $a \in [0,1)$ and $b > 0$, there exist $c_a, C_b \in \R_+$ such that for any $\ep \in (0,1]$, we have
\begin{align} \label{deus}
a (-\Delta) - c_a  & \le H^\ep \le \pa{1+b} (-\Delta) + C_b,
\end{align}
in the sense of quadratic forms.

\item \label{lemma:3} For all $\mu \ge 1-c_{\frac 12}$, there exists a constant $M_\mu \in \R_+$ such that for all 
$\eps \in (0,1]$, we have $H^\ep + \mu \ge 1$ and 
\begin{align}\label{tss}
  \nor{\pa{H^\ep + \mu}^{-1/2}}{L^2_\per \to H^1_\per} =\nor{\pa{H^\ep + \mu}^{-1/2} }{H^{-1}_\per \to L^2_\per} \le M_\mu.
\end{align}

\item \label{lemma:4} 
	There exists a constant $c \in \R_+$ (depending only on $V_0=V+W$) such that for all $k \in \R^d$ and $z \in \C$, we have
\begin{equation}\label{eq:zdep}
 \nor{\pa{1-\nabla_k^2} \pa{H^0_k - z}_\perp^{-1}}{L^2_{\rm per} \to L^2_{\rm per}} \le d_k(z), 
  \end{equation}
  where
 \begin{equation}\label{eq:def_dk}
 d_k(z):=1+  \pa{c+ \ab{z}}  \pa{\dist\pa{z,\sigma\pa{H^0_k}\setminus\{z\}}}^{-1}.
\end{equation}
 \end{enumerate}
\end{lemma}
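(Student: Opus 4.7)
My plan is to treat the four items in their natural logical order, since~(2), (3), (4) are essentially corollaries of~(1). The cornerstone is item~(1), for which I would exploit the representation $v = \Delta_y \cu$ provided by~\eqref{eq:def_chi}: this turns the oscillatory term $\eps^{-1} v_\eps$ into (almost) a divergence. Specifically, applying the chain rule to $(\cu)_\eps$ gives
\[
\Delta (\cu)_\eps = (\Delta_x \cu)_\eps + 2 \eps^{-1} (\nabla_x \cdot \nabla_y \cu)_\eps + \eps^{-2}(\Delta_y \cu)_\eps,
\]
from which I would extract $\eps^{-1} v_\eps = \eps \, \Delta(\cu)_\eps - 2 (\nabla_x \cdot \nabla_y \cu)_\eps - \eps (\Delta_x \cu)_\eps$. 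Testing against $|u|^2$ and integrating the $\eps \Delta(\cu)_\eps$ term by parts produces $-2 \eps \int \nabla (\cu)_\eps \cdot \re (\bar u \nabla u)$; expanding $\nabla(\cu)_\eps = (\nabla_x \cu)_\eps + \eps^{-1}(\nabla_y \cu)_\eps$ would then absorb the singular $\eps^{-1}$ prefactor, leaving only $O(1)$ contributions. Lemma~\ref{lem:regularity_assumption} provides $\cu \in \cC^2_\per(\Omega;\cC^1_\per(\Omega))$, so all derivatives of $\cu$ appearing above are uniformly bounded; a Young inequality applied to each cross term $\int (\na \cu)_\eps \cdot \re(\bar u \na u)$ then splits the result between $a\|\na u\|^2$ and $c(1 + a^{-1})\|u\|^2$, yielding~\eqref{eq:estim_1}.

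Items~(2) and~(3) would follow rapidly from~(1). For~(2), I would expand $\ps{u, H^\eps u} = \|\na u\|^2 + \int \eps^{-1} v_\eps |u|^2 + \int W |u|^2$ and invoke~(1) with $a=b$ for the upper bound and with $a$ replaced by $1 - a'$ for the lower one, absorbing $\|W\|_\infty \|u\|^2$ into the additive constants. For~(3), taking $a = 1/2$ in~(2) would ensure that, for $\mu$ large enough, $H^\eps + \mu \ge \tfrac{1}{2}(-\Delta) + 1 \ge 1$, so that $(H^\eps + \mu)^{1/2}$ is a well-defined self-adjoint operator with $H^1_\per$-domain. Setting $v := (H^\eps + \mu)^{-1/2} u$, the identity
\[
\|u\|_{L^2_\per}^2 = \ps{v, (H^\eps + \mu) v} \ge \tfrac{1}{2} \|\na v\|_{L^2_\per}^2 + \|v\|_{L^2_\per}^2 \ge \tfrac{1}{2} \|v\|_{H^1_\per}^2
\]
would give the $L^2_\per \to H^1_\per$ bound, and the dual $H^{-1}_\per \to L^2_\per$ bound would follow from the self-adjointness of $(H^\eps + \mu)^{-1/2}$ and duality.

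Item~(4) is $\eps$-independent and purely a functional-calculus statement for $H^0_k = -\na_k^2 + V_0$. My plan is to decompose $-\na_k^2 = H^0_k - V_0$ to obtain
\[
(1-\na_k^2)(H^0_k - z)_\perp^{-1} = (1 - V_0)(H^0_k - z)_\perp^{-1} + H^0_k (H^0_k - z)_\perp^{-1},
\]
and apply the spectral theorem to $H^0_k$ to get $\nor{(H^0_k - z)_\perp^{-1}}{L^2_\per \to L^2_\per} \le \dist(z, \sigma(H^0_k) \setminus \{z\})^{-1}$ as well as $\nor{H^0_k (H^0_k - z)_\perp^{-1}}{L^2_\per \to L^2_\per} \le 1 + |z| \dist(z, \sigma(H^0_k) \setminus \{z\})^{-1}$. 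Combined with $\|V_0\|_\infty \le \|V\|_\infty + \|W\|_\infty < \infty$, these would give~\eqref{eq:zdep} with constant $c$ depending only on $V_0$.

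The main obstacle is~(1): the content there is the extraction of an $O(\eps)$ gain from a potential of magnitude $\eps^{-1}$, possible only because $v(x,\cdot)$ has zero mean in its second variable and therefore admits the divergence-type representation $v = \Delta_y \cu$. Keeping track of the two levels of oscillation (in $x$ and in $y = \eps^{-1} x$) while ensuring all remainders stay uniformly bounded is the technical heart of the argument; once~(1) is secured, the remaining items reduce to standard coercivity and spectral-theorem estimates.
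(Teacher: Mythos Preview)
Your proposal is correct and follows essentially the same approach as the paper. For~(1) the paper uses precisely the identity $\eps^{-1}v_\eps = \eps\Delta(\chi_1)_\eps - \eps(\Delta_x\chi_1)_\eps - 2((\nabla_x\cdot\nabla_y)\chi_1)_\eps$, integrates the first term by parts, expands $\nabla(\chi_1)_\eps$, and applies Young's inequality; for~(2) and~(3) it invokes~(1) and the boundedness of $W$ exactly as you describe; and for~(4) it uses the same decomposition $(1-\nabla_k^2)(H^0_k-z)_\perp^{-1} = P^{0,\perp}_{k,z} + (z+1-V_0)(H^0_k-z)_\perp^{-1}$, which is your splitting with $H^0_k(H^0_k-z)_\perp^{-1}$ already expanded.
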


\begin{proof} Since $v$ and $\Delta_x v$ are in  $\cC^0_\per(\Omega;\cC^0_{\per,0}(\Omega))$, we have $\cu$, $\na_x \cu$, $(\na_x \cdot \na_y) \cu$ and  $\Delta_x \cu$ in $\cC^0_\per(\Omega;\cC^0_{\per,0}(\Omega))$. Using the relation 
 \begin{align*}
 \eps^{-1} v_\ep = \ep \Delta (\chi_1)_\eps - \ep \pa{\Delta_x \cu}_\ep - 2  \pa{\na_x \cdot \na_y \cu}_\ep,
 \end{align*}
we obtain by integration by parts that for all $u \in \cC^\infty_{\rm c}(\R^d)$, 
\begin{align*}
& \int_{\R^d} \eps^{-1}v_\ep \ab{u}^2 \\
& \quad = -\ep \int_{\R^d}\hspace{-0.1cm} \hspace{-0.1cm} \na (\chi_1)_\eps \cdot 2 \re \overline{u} \na u  - \ep \int_{\R^d}\hspace{-0.1cm} \pa{\Delta_x \cu}_\ep \ab{u}^2 - 2 \int_{\R^d}\hspace{-0.1cm} \pa{(\na_x \cdot \na_y) \cu}_\ep \ab{u}^2 \\
& \quad = - \int_{\R^d} ( \pa{\na_y \cu}_\ep + \ep \pa{\na_x \cu}_\ep)  \cdot 2 \re \overline{u} \na u  \\
& \qquad\qquad\qquad\qquad - \ep \int_{\R^d} \pa{\Delta_x \cu}_\ep \ab{u}^2 - 2\int_{\R^d} \pa{(\nabla_x \cdot  \na_y) \cu}_\ep \ab{u}^2.
\end{align*}
Using the inequality 
\begin{align*}
\forall \alpha > 0, \quad \ab{2 \overline{u} \na u} \le \alpha \ab{\na u}^2 + \alpha^{-1}  \ab{u}^2,
\end{align*}
we obtain that for all $\eps \in (0,1]$,
\begin{align*}
& \ab{\int_{\R^d} ( \pa{\na_y \cu}_\ep + \ep \pa{\na_x \cu}_\ep)  \cdot 2 \re \overline{u} \na u }  \\
& \quad \le a \int_{\R^d} \ab{\na u}^2 + \f{1}a \pa{\nor{\na_y \cu}{\cC^0_\per(\cC^0_\per)} +  \nor{\na_x \cu}{\cC^0_\per(\cC^0_\per)}} \int_{\R^d} \ab{u}^2,
\end{align*}
hence \eqref{eq:estim_1} with 
\begin{multline*}
c=\max \Big( \nor{\na_y \cu}{\cC^0_\per(\cC^0_\per)} +  \nor{\na_x \cu}{\cC^0_\per(\cC^0_\per)},\\
\|\Delta_x \cu \|_{\cC^0_\per(\cC^0_\per)}+\| (\nabla_x \cdot  \na_y) \cu\|_{\cC^0_\per(\cC^0_\per)} \Big).
\end{multline*}
We conclude by density that this inequality holds true for all $u \in H^1(\R^d)$.

\medskip

To derive \eqref{deus}, we use \eqref{eq:estim_1} and the fact that $W$ is bounded.

\medskip

The operator norms of a bounded operator and its adjoint are equal, and the inequality \eqref{tss} is equivalent to $(1-\Delta) \le M_\mu^2 \pa{H^\ep + \mu}$ in the sense of quadratic forms, which holds by \eqref{deus}.

\medskip

For all $z \in \C$, we have
	\begin{align}\label{eq:dec}
	 \pa{1-\na_k^2}\pa{H^0_k - z}_\perp^{-1} &= \pa{H^0_k -z+z+1 -V_0} \pa{H^0_k-z}_\perp^{-1} \nonumber \\
	 & = P_{k,z}^{0,\perp} + \pa{z+1 -V_0} \pa{H^0_k-z}_\perp^{-1},
 \end{align}
 where $P_{k,z}^{0,\perp}=1-P_{k,z}^0$ is the orthogonal projector on $\Ker \pa{H^0_k - z}^\perp$ (in particular, $P_{k,z}^{0,\perp}=1$ if $z \notin \sigma(H^0_k)$). 
 Since $V_0$ is a bounded potential, we have
 \begin{multline*}
 \nor{\pa{1-\na_k^2}\pa{H^0_k - z}_\perp^{-1}}{L^2_{\rm per} \to L^2_{\rm per}} \\
 \le 1 + \pa{1+\|V_0\|_{L^\infty}+|z|}  \pa{\dist\pa{z,\sigma\pa{H^0_k}\setminus\{z\}}}^{-1}.
 \end{multline*}
 Hence the result.
\end{proof}

\subsection{Operator expansions} \label{sec:operator_expansions}
We will make repeated use of the following relations. First, for $f : \R^d \times \R^d \to \C$, we have
\begin{equation}\label{eq:nabla_eps}
\begin{cases}
    \nabla (f_\ep) = (\nabla_xf)_\ep + \ep^{-1} (\nabla_yf)_\eps, \\
    \Delta (f_\ep) = (\Delta_xf)_\ep + 2 \ep^{-1} (\nabla_x \cdot \nabla_y f)_\ep + \ep^{-2} (\Delta_yf)_\ep.
\end{cases}
\end{equation}
Second, if $f : \R^d \times \R^d \to \C$ and $A : \R^d \times \R^d \to \C^d$ have suitable regularities, we have in the sense of linear operators on $L^2_\per(\Omega)$,
\begin{align} 
	\nabla_k^2 \pa{f_\eps \bullet} = &  (\Delta_xf)_\ep+2\ep^{-1}((\nabla_x\cdot\nabla_y)f)_\eps + \ep^{-2} (\Delta_yf)_\eps \nonumber \\
	& + 2 \left[ (\nabla_xf)_\ep+\ep^{-1}(\nabla_yf)_\eps \right] \cdot \nabla_k + f_\eps \nabla_k^2,	 \label{eq:commut_1} \\
	\nabla_k^2(A_\eps \cdot \nabla_k \bullet)
 =& \left[ (\Delta_x A)_\ep+2\ep^{-1}\pa{(\nabla_x\cdot\nabla_y)A}_\eps + \ep^{-2}(\Delta_yA)_\eps \right] \cdot \nabla_k  \nonumber \\
&+2  \left[ (\d_xA)_\ep+\ep^{-1}(\d_yA)_\eps \right] : \na_k^{\otimes 2} + (A_\ep\cdot\nabla_k) \nabla_k^2,  \label{eq:commut_2}
\end{align}
where we used the following notation:
\begin{itemize}
\item if $\psi \in H^2_\per(\Omega)$, $\nabla_k^{ \otimes 2} \psi$ is the matrix-valued function of $L^2_\per(\Omega;\C^{d \times d})$ with entries $[\nabla_k^{\otimes 2} \psi]_{ij} = (\partial_{x_i} + i k_i) (\partial_{x_j} + i k_j) \psi$. Note that $[\nabla_k^2\psi](x)$ is the trace of the matrix $[\nabla_k^{\otimes 2} \psi](x) \in \C^{d \times d}$;
\item the derivatives $\d_xA$ and $\d_yA$ of the two-scale vector field $A$ are the two-scale matrix fields
$$
\d_xA(x,y) = \left[\frac{\partial A_i}{\partial x_j}(x,y)\right]_{1 \le i,j \le d} 
\quad \mbox{and} \quad 
\d_yA(x,y) = \left[\frac{\partial A_i}{\partial y_j}(x,y)\right]_{1 \le i,j \le d};
$$
\item the doubly contracted product $P:Q$ is defined as $P:Q=\tr(P^TQ)$.
\end{itemize}

\subsubsection{Expansion of $H^\ep_k \cM^{\eps, (1)}$}
 Recall that
\[
    H^\eps_k = - \nabla_k^2 + \eps^{-1} v_\eps + W(x)
    \quad \text{and} \quad
    \cM^{\eps, (1)} = 1 + \eps \left( \chi_1 \right)_\eps.
\]
Using \eqref{eq:commut_1}, we have the operator equality
\begin{align*}
    & H_k^\eps \cM^{\eps, (1)} \\
    & \quad= (-\nabla_k^2+\eps^{-1}v_\eps+W(x)) \pa{1 + \eps (\chi_1)_\eps}  \nonumber \\
    &\quad= -\nabla_k^2+\eps^{-1}v_\eps+W- \eps\nabla_k^2((\chi_1)_\eps \bullet)+v_\eps(\chi_1)_\eps+\eps (\chi_1)_\eps W  \\
    &\quad=  -\nabla_k^2+\eps^{-1}v_\eps+W
    - \eps\big[ (\Delta_x\chi_1)_\ep+2\ep^{-1}((\nabla_x\cdot\nabla_y)\chi_1)_\eps + \ep^{-2} (\Delta_y\chi_1)_\eps  \\
    & \quad \qquad+ 2 \left( (\nabla_x\chi_1)_\ep+\ep^{-1}(\nabla_y\chi_1)_\eps \right) \cdot \nabla_k + (\chi_1)_\eps \nabla_k^2 \big] +v_\eps(\chi_1)_\eps+\eps(\chi_1)_\eps W.
\end{align*}
Together with the identities
\begin{align*}
    &(\Delta_y \chi_1)_\eps = v_\eps, \qquad \qquad v_\eps(\chi_1)_\eps - 2((\nabla_x \cdot \nabla_y)\chi_1)_\eps = V + (\Delta_y \chi_2)_\eps, \\
   &\qquad \qquad -2(\nabla_y\chi_1)_\eps = (\Delta_yA_2 )_\eps, 
\end{align*}
this gives
\begin{align}
    H_k^\eps \cM^{\eps, (1)} & = H^0_k + \pa{\Delta_y \chi_2}_\eps   + \pa{\Delta_y A_2 }_\eps \cdot \na_k  \nonumber \\
    & \qquad + \eps \left[  (\chi_1)_\eps W - \pa{\Delta_x \cu}_\eps - 2\pa{ \na_x \cu}_\eps \cdot \na_k- (\chi_1)_\eps \na_k^2 \right] . \label{eq:commut_3}
\end{align}
Replacing $W$ by $W - z$, we obtain
\begin{equation}\label{eq:expansionH1}
   \boxed{  ( H_k^\eps - z) \cM^{\eps, (1)} - (H^0_k - z) = \cI^{\eps, (0)}_{k} + \eps \cI^{\eps, (1)}_{z, k}, }
\end{equation}
with
\begin{equation} \label{eq:def_J01}
    \begin{cases}
        \cI^{\eps, (0)}_{k} & := \pa{\Delta_y \chi_2}_\eps   + \pa{\Delta_y A_2 }_\eps \cdot \na_k ,  \\
        \cI^{\eps, (1)}_{z, k} & := (\chi_1)_\eps (W - z) - \pa{\Delta_x \cu}_\eps - 2\pa{ \na_x \cu}_\eps \cdot \na_k- (\chi_1)_\eps \na_k^2.
    \end{cases}
\end{equation}


\subsubsection{Expansion of $H^\ep_k \Md$}

We now perform a similar expansion for the operator $H^\ep_k \Md$. The computations are similar, but more tedious. Recall that (see~\eqref{eq:def_M2})
\[
    \Md := \cM^{\eps, (1)}
    + \ep^2 \left[ \pa{\chi_2}_\eps + \Auep \cdot \na_k \right].
\]
Using \eqref{eq:commut_1}, we have
\begin{align}
    & \ep^2 (H^\ep_k  - z) \cdep  \nonumber\\
    & \quad = -\ep^2 \nabla_k^2((\chi_2)_\eps \bullet)+\eps v_\eps (\chi_2)_\eps + \eps^2 (W -z) (\chi_2)_\eps \nonumber \\
    & \quad =  - \eps^2 (\Delta_x\chi_2)_\ep-2\ep( (\nabla_x\cdot\nabla_y) \chi_2)_\eps - (\Delta_y\chi_2)_\eps  \nonumber \\
    & \qquad - 2 \left[ \ep^2(\nabla_x\chi_2)_\ep+\ep(\nabla_y\chi_2)_\eps \right] \cdot \nabla_k  \nonumber \\
    & \qquad - \eps^2(\chi_2)_\eps \nabla_k^2 +\eps v_\eps (\chi_2)_\eps+ \eps^2 (W -z) (\chi_2)_\eps \nonumber \\
    & \quad =  - (\Delta_y\chi_2)_\eps + \eps \left[ (v \chi_2)_\eps  - 2 \pa{(\na_x \cdot \na_y) \cd}_\eps - 2 \pa{\na_y \cd}_\eps \cdot \na_k \right] \nonumber  \\
    &\qquad + \ep^2 \left[ (W - z) \cdep  - \pa{\Delta_x \cd}_\eps - 2  \pa{\na_x \cd}_\eps \cdot \na_k -\cdep \na_k^2  \right]. \label{eq:expansion_eps_2}
\end{align}
Note that the leading order term $-(\Delta_y \chi_2)_\eps$ is the first term of the operator $\cI^{\eps, (0)}_k$ defined in~\eqref{eq:def_J01}. Note also that the term $(v \chi_2)$ is not in $\cC^0_\per(\Omega;\cC^0_{\per,0}(\Omega))$, since its integral w.r.t. $y$ does not vanish  for all $x \in \Omega$. Actually, its average w.r.t. the $y$ variable is the function $V_1$ defined in~\eqref{eq:def_V1}. All the other two-scale functions in the RHS of~\eqref{eq:expansion_eps_2} are in $\cC^0_\per(\Omega;\cC^0_{\per,0}(\Omega))$.

Similarly, using~\eqref{eq:commut_2}, we have
\begin{align}
    & \ep^2 (H^\ep_k - z) ((A_2 )_\eps \cdot \na_k) \nonumber \\
    & \quad =  -\ep^2 \nabla_k^2 (((A_2 )_\eps \cdot \na_k) \bullet) +\eps v_\eps  ((A_2 )_\eps \cdot \na_k)  + \eps^2 (W - z)  (A_2 )_\eps \cdot \na_k  \nonumber   \\
    & \quad =  - \ep^2 (\Delta_xA_2 )_\eps \cdot \nabla_k - 2 \eps \pa{\nabla_x\cdot\nabla_yA_2 }_\eps \cdot \nabla_k - (\Delta_yA_2 )_\eps \cdot \nabla_k   \nonumber  \\
    &\qquad -2   \left[ \eps^2 (\d_xA_2 )_\ep+\eps (\d_yA_2 )_\eps \right] : \nabla_k^{\otimes 2} -\eps^2 ((A_2 )_\ep\cdot\nabla_k) \nabla_k^2  \nonumber \\
    &\qquad + \eps ( vA_2 )_\eps \cdot \na_k  + \eps^2 (W - z)  (A_2 )_\eps \cdot \na_k \nonumber   \\
    & \quad = -\pa{\Delta_y A_2 }_\eps \cdot \na_k + \eps \left( \left( v A_2  -2 (\na_x \cdot \na_y) \Au \right)_\eps \cdot \na_k - 2\pa{\d_y \Au}_\eps : \nabla_k^{\otimes 2}  \right)  \nonumber  \\
    & \quad + \ep^2 \left( \left( -\Delta_x A_2  + (W - z) A_2  \right)_\eps \cdot \na_k - 2 \pa{\d_x\Au}_\eps : \nabla_k^{\otimes 2} -  \pa{\Auep \cdot \na_k} \na_k^2 \right). \label{eq:last_estim}
\end{align}
Again, the leading order term $-(\Delta_y A_2 )_\eps \cdot \nabla_k$ is the second term of the operator $\cI^{\eps, (0)}_k$ defined in~\eqref{eq:def_J01}. All the two-scale functions in the RHS of \eqref{eq:last_estim} (and in particular the function $(v A_2 )$, see \eqref{eq:integral_vA1}) are in $\cC^0_\per(\Omega;\cC^0_{\per,0}(\Omega))$.

Adding $- \eps V_1$ to both sides of \eqref{eq:expansionH1}-\eqref{eq:def_J01}, replacing the terms $-(\Delta_y \chi_2)_\eps$ and $-(\Delta_y A_2 )_\eps \cdot \nabla_k$ by the expressions given by \eqref{eq:expansion_eps_2} and \eqref{eq:last_estim}, and rearranging the terms, we finally obtain 
\begin{align}\label{eq:expansion_md}
    \boxed{\pa{H^\ep_k  - z} \Md - ((H^0_k+\eps V_1)- z)  =  \eps \cJ^{\eps, (1)}_{k, z} + \eps^2 \cJ^{\eps, (2)}_{k, z},}
\end{align}
where
\begin{equation} \label{eq:def_Ieps}
    \begin{cases}
	    \cJ^{\eps, (1)}_{k,z} &:=  (\alpha^{(1)}_{z})_\eps +  (\beta^{(1)})_\eps \cdot \na_k -  (\chi_1)_\eps \na^2_k - 2(\d_yA_2)_\eps : \nabla_k^{\otimes 2}, \\
	    \cJ^{\eps, (2)}_{k,z} &:=  (\alpha^{(2)}_{z})_\eps +  (\beta^{(2)}_{z})_\eps \cdot \na_k  - (\chi_2)_\eps \na^2_k -2 (\d_xA_2)_\eps : \nabla_k^{\otimes 2} \\
				  &\qquad\qquad\qquad\qquad\qquad \qquad \qquad \qquad \qquad  -  \pa{\Auep \cdot \na_k} \na_k^2,
    \end{cases}
\end{equation}
with
\begin{align*}
&\alpha^{(1)}_{z} = \chi_1 (W - z) - \Delta_x \chi_1 + (v \chi_2 - V_1) - 2 (\nabla_x \cdot \nabla_y)\chi_2 , \\
&  \alpha^{(2)}_{z} =  \chi_2 (W - z) - \Delta_x \chi_2,  \\
& \beta^{(1)} =-2 \na_x \chi_1 - 2 \nabla_y\chi_2 + v A_2  -2 (\na_x \cdot \na_y) \Au, \\
&  \beta^{(2)}_{z} = -2 \na_x \chi_2 - \Delta_xA_2+A_2(W-z).
\end{align*}
All the two-scaled functions involved in \eqref{eq:def_Ieps} are in $\cC^0_\per(\Omega;\cC^0_{\per,0}(\Omega))$.

\subsubsection{Expansion of $H^\ep_k \Mz$}
A similar calculation, using in addition the equality $(H^\ep_k -z)(H^\ep_k -z)^{-1}_\perp = 1-P^0_{k,z}$, where $P^0_{k,z}$ is the orthogonal projection onto $\Ker (H^0_k-z)$, leads to 
\begin{align}\label{eq:expansion_mz}
	\boxed{\pa{H^\ep_k -z} \Mz - \pa{H^0_k -z}  -  \ep P_{k,z}^0 V_1 =  \ep \cK^{\ep,(1)}_{k,z} +  \ep^2 \cK^{\ep,(2)}_{k,z},}
\end{align}
with 
\begin{equation} \label{eq:def_Keps}
    \begin{cases}
	\cK^{\ep,(1)}_{k,z} &:= \cJ^{\eps, (1)}_{z,k} - \cI^{\ep,(0)}_{k} \pa{H^0_k-z}_{\perp}^{-1} V_1,  \\
	\cK^{\ep,(2)}_{k,z} &:=  \cJ^{\eps, (2)}_{z,k} - \cI^{\ep,(1)}_{k,z} \pa{H^0_k -z}_\perp^{-1} V_1,
   \end{cases}
\end{equation}
where the operators $\cI^{\ep,(j)}_{k}$ and $\cJ^{\ep,(j)}_{k}$ are defined in \eqref{eq:def_J01} and \eqref{eq:def_Ieps}.


\subsection{Proof of Theorem~\ref{thm:res} and Corollary~\ref{cor:main_LE} (resolvent and linear equation)}
\label{sec:proof_thm_res}

Let us first establish the following lemma about Sobolev norms of multiplication operators by oscillating functions.

\begin{lemma}\label{lem:geps} 
    There exists a constant $C \in \R_+$ such that for all $\eps \in \N^{-1}$,
    \begin{align} 
    &\forall g \in \cC^0_\per(\Omega;\cC^0_{\per}(\Omega)), \; \; \forall \psi \in L^2_{\rm per}(\Omega),  \quad \|g_\eps \psi\|_{L^2_{\rm per}} \le \|g\|_{\cC^0_\per(\cC^0_\per)} \|\psi\|_{L^2_{\rm per}},  \label{eq:osc_L2} \\
        &\forall g \in \cC^1_\per(\Omega;\cC^0_{\per,0}(\Omega)), \; \forall \psi \in H^1_{\rm per}(\Omega),  \; \|g_\eps \psi\|_{H^{-1}_{\rm per}} \le C \eps \|g\|_{\cC^1_\per(\cC^0_\per)} \|\psi\|_{H^1_{\rm per}}.  \label{eq:osc_H1} 
    \end{align}
\end{lemma}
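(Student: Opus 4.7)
The first inequality is essentially trivial: since $g \in \cC^0_\per(\Omega;\cC^0_\per(\Omega))$, the rescaled function satisfies $|g_\eps(x)| \le \|g\|_{\cC^0_\per(\cC^0_\per)}$ pointwise, so multiplication by $g_\eps$ is a bounded operator on $L^2_\per(\Omega)$ with operator norm at most $\|g\|_{\cC^0_\per(\cC^0_\per)}$. The hypothesis $\eps \in \N^{-1}$ is used only to guarantee that $g_\eps$ is itself $\L$-periodic, hence acts on $L^2_\per(\Omega)$.

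The second inequality is the heart of the lemma and captures the familiar principle that highly oscillating functions with zero mean are small in negative Sobolev norms. My plan is to write $g$ as a $y$-divergence, so that one factor of $\eps$ can be extracted through an integration by parts. Since $g(x,\cdot)$ has zero mean for every $x$, Lemma~\ref{lem:regularity_assumption} applied with $m=1$ provides $h_g \in \cC^1_\per(\Omega;\cC^1_\per(\Omega))$ with $-\Delta_y h_g = g$ and $\|h_g\|_{\cC^1_\per(\cC^1_\per)} \le C\|g\|_{\cC^1_\per(\cC^0_\per)}$. Setting $K := -\nabla_y h_g \in \cC^1_\per(\Omega;\cC^0_\per(\Omega))^d$, this gives the representation $\nabla_y \cdot K = g$ in the $y$-variable.

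Next I combine this with the chain-rule identity $\nabla \cdot (\eps K_\eps) = \eps(\nabla_x \cdot K)_\eps + (\nabla_y \cdot K)_\eps$, which rearranges to
\[
g_\eps = \nabla \cdot (\eps K_\eps) - \eps(\nabla_x \cdot K)_\eps.
\]
For any test function $\phi \in H^1_\per(\Omega)$, the duality pairing yields, after integration by parts (valid with no boundary terms by periodicity, since $\overline\psi\phi \in W^{1,1}_\per$ by Cauchy--Schwarz),
\[
\langle g_\eps \psi, \phi\rangle_{L^2_\per} = -\eps \int_\Omega K_\eps \cdot \nabla(\overline\psi\phi)\, dx - \eps \int_\Omega (\nabla_x\cdot K)_\eps \overline\psi\phi\, dx.
\]
Expanding $\nabla(\overline\psi\phi)$ by the product rule and applying Cauchy--Schwarz bounds each term by $C\eps \|K\|_{\cC^1_\per(\cC^0_\per)} \|\psi\|_{H^1_\per}\|\phi\|_{H^1_\per}$, and \eqref{eq:elliptic_reg} gives $\|K\|_{\cC^1_\per(\cC^0_\per)} \le C\|g\|_{\cC^1_\per(\cC^0_\per)}$. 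Taking the supremum over $\phi$ with $\|\phi\|_{H^1_\per} \le 1$ yields \eqref{eq:osc_H1}.

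The only mildly delicate point — and the one I would treat most carefully — is justifying the displayed chain-rule identity for $g_\eps$ when $K$ has only $\cC^0$ regularity in $y$, so that $K_\eps$ is not a priori classically differentiable. This is standard: one first mollifies $g$ in the $y$ variable to obtain $g^\delta \in \cC^\infty_\per(\cC^\infty_\per)$, applies the identity pointwise to the corresponding $K^\delta$, and passes to the limit $\delta \to 0$ in the weak (distributional) form, using that $K^\delta \to K$, $\nabla_x K^\delta \to \nabla_x K$, and $\nabla_y \cdot K^\delta = g^\delta \to g$ uniformly, with the estimates above remaining uniform in $\delta$. Everything else is a direct consequence of Lemma~\ref{lem:regularity_assumption} and the integration by parts described in the previous paragraph.
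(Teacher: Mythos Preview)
Your proof is correct and follows essentially the same approach as the paper: both solve $-\Delta_y h_g = g$ (the paper uses the opposite sign convention $\Delta_y h_g = g$), set the auxiliary field $K = \nabla_y h_g$, use the chain-rule identity $g_\eps = \eps\,\nabla\cdot K_\eps - \eps(\nabla_x\cdot K)_\eps$, and integrate by parts against $\overline\psi\phi$ to extract the factor $\eps$. Your mollification remark to justify the chain rule when $K$ is only $\cC^0$ in $y$ is a nice piece of care that the paper omits.
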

    In other words, the multiplication operator by $g_\eps$, still denoted by $g_\eps$, satisfies
    \begin{equation}\label{eq:norm_geps}
        \| g_\eps \|_{L^2_\per \to L^2_\per}  \le  \| g \|_{\cC^0_\per(\cC^0_\per)}, \qquad \| g_\eps \|_{H^1_\per \to H^{-1}_\per}  \le C \eps \| g \|_{\cC^1_\per(\cC^0_\per)}.
        \end{equation}
\begin{proof}[Proof of Lemma~\ref{lem:geps}]
    The inequality \eqref{eq:osc_L2} is trivial. To establish~\eqref{eq:osc_H1}, we consider $g \in \cC^1_\per(\Omega;\cC^0_{\per,0}(\Omega))$ and denote by $h_g$ the unique function in $\cC^1_\per(\Omega;\cC^1_{\per,0}(\Omega))$ such that $\Delta_yh_g=g$. We have
    \begin{equation} \label{eq:g_to_f}
    \nabla \cdot \left[  (\nabla_y h_g)_\eps \right] = \left( (\nabla_x \cdot \nabla_y) h_g \right)_\eps + \eps^{-1} (\Delta_y h_g) 
    =  \left( (\nabla_x \cdot \nabla_y) h_g \right)_\eps + \eps^{-1} g_\eps.
    \end{equation}
   It follows that
    \begin{align*}
	    \langle g_\eps \psi,\phi\rangle_{H^{-1}_{\rm per},H^1_{\rm per}} &= \int_\Omega g_\eps \overline{\psi}\phi \\
	& = \eps \int_\Omega \left( \nabla \cdot \left[  (\nabla_y h_g)_\eps \right]  \right) \overline{\psi} \phi - \eps \int_{\Omega} \left( (\nabla_x \cdot \nabla_y) h_g \right)_\eps \overline{\psi} \phi \\
        & = - \eps \int_\Omega (\nabla_y h_g)_\eps \cdot \nabla (\overline{\psi} \phi) - \eps \int_{\Omega} \left( (\nabla_x \cdot \nabla_y) h_g \right)_\eps \overline{\psi} \phi,
 \end{align*}
 so that in view of \eqref{eq:elliptic_reg},
 \begin{align*}
	 & \ab{\langle g_\eps \psi,\phi\rangle_{H^{-1}_{\rm per},H^1_{\rm per}}} \\
     & \quad \le 2 \eps \max \left( \| \nabla_y h_g \|_{\cC^0_\per(\cC^0_\per)},  \| (\nabla_x \cdot \nabla_y) h_g \|_{\cC^0_\per(\cC^0_\per)} \right)  \|\psi\|_{H^1_{\rm per}}\|\phi\|_{H^1_{\rm per}} \\
	& \quad \le 2 \eps \nor{h_g}{\cC^1_\per(\cC^1_\per)}  \|\psi\|_{H^1_{\rm per}}\|\phi\|_{H^1_{\rm per}} \\
	&\quad \le C \eps \nor{g}{\cC^1_\per(\cC^0_\per)}  \|\psi\|_{H^1_{\rm per}}\|\phi\|_{H^1_{\rm per}}.
    \end{align*}
Using
    \begin{equation*} \label{eq:def_norm_Hm1}
	    \| F \|_{H^{-1}_\per} = \sup \left\{ \langle F,\phi\rangle_{H^{-1}_{\rm per},H^1_{\rm per}}, \quad \phi \in H^1_{\rm per}(\Omega), \ \| \phi \|_{H^1_\per} = 1 \right\}
    \end{equation*}
    we obtain \eqref{eq:osc_H1}.
\end{proof}

From the expansions established in the previous section and Lemma~\ref{lem:geps} above, we obtain the following.

\begin{lemma} \label{th:key_th}
Let $v$ and $W$ satisfying \eqref{hypo:on_v_W}. There exists a constant $C \in \R_+$ such that for all $k \in \Omega^*$, $\eps \in  (0,1] \cap \N^{-1}$,  and $z \in \C$, 
    \begin{align}
	    \nor{\pa{H_k^\eps - z}  \Mzu - \pa{H^0_k - z}}{H^2_{\rm per} \rightarrow H^{-1}_{\rm per}} &\le C \, \pa{1+ \ab{z}} \, \ep,  \label{eq:op_approx} \\
        \nor{\pa{H_k^\eps - z} \Md   - \pa{(H^0_k +\eps V_1)- z} }{H^3_{\rm per} \rightarrow H^{-1}_{\rm per}} &\le C\, \pa{1+ \ab{z}} \, \ep^2, \label{eq:op_approx2} \\
        \nor{\pa{H_k^\eps - z}  \Mz - \pa{H^0_k - z} -  \ep P_{k,z}^0 V_1}{H^3_{\rm per} \rightarrow H^{-1}_{\rm per}} &\le C_{k,z} \, \pa{1+ \ab{z}} \, \ep^2, \label{eq:op_approx3}
    \end{align}
    with 
    $$
    C_{k,z}= C(1+\dist (z,\sigma(H^0_k)\setminus\{z\})^{-1}).
    $$
\end{lemma}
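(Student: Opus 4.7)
The plan is to reduce each of the three estimates to a bookkeeping exercise based on the operator expansions already established in Section~\ref{sec:operator_expansions} (namely equations \eqref{eq:expansionH1}, \eqref{eq:expansion_md}, and \eqref{eq:expansion_mz}), combined with the multiplier bound of Lemma~\ref{lem:geps}. In each case the left-hand side is an explicit finite sum of elementary operators of the form $g_\eps$, $g_\eps \nabla_k$, $g_\eps \nabla_k^2$, or $g_\eps : \nabla_k^{\otimes 2}$, with $g$ a two-scale coefficient. Whenever $g$ has zero $y$-average and enough regularity, Lemma~\ref{lem:geps} gives $\|g_\eps\|_{H^1_\per \to H^{-1}_\per} \le C\eps$, which is the main source of smallness; the derivative operators $\nabla_k$, $\nabla_k^2$, and $\nabla_k^{\otimes 2}$ are then absorbed by raising the regularity index of the source space, and the $z$-dependence enters only through affine terms $(W - z)$, producing the factor $(1 + |z|)$.

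Concretely, for \eqref{eq:op_approx} I would apply this strategy to $\cI^{\eps,(0)}_k + \eps \cI^{\eps,(1)}_{z,k}$: the two summands of $\cI^{\eps,(0)}_k$ carry the zero-mean coefficients $\Delta_y\chi_2$ and $\Delta_y A_2$, so each yields a norm $O(\eps)$ from $H^2_\per$ into $H^{-1}_\per$ after accounting for a possible $\nabla_k$; the coefficients of $\cI^{\eps,(1)}_{z,k}$ need not have zero mean, but the explicit $\eps$ prefactor suffices, and the worst summand $(\chi_1)_\eps \nabla_k^2$ still maps $H^2_\per$ into $L^2_\per \hookrightarrow H^{-1}_\per$ uniformly in $\eps$. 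For \eqref{eq:op_approx2}, I would apply the same scheme to $\eps \cJ^{\eps,(1)}_{k,z} + \eps^2 \cJ^{\eps,(2)}_{k,z}$, provided I first verify that every coefficient of $\cJ^{\eps,(1)}_{k,z}$ — namely $\alpha^{(1)}_z$, $\beta^{(1)}$, $\chi_1$, and $\d_y A_2$ — has vanishing $y$-average; this is clear for the first three by construction, and holds for $\d_y A_2$ because $A_2$ is $y$-periodic, so $\fint_\Omega \partial_{y_j} A_{2,i}(x,y)\, \d y = 0$. The presence of $\nabla_k^2$ and $\nabla_k^{\otimes 2}$ in the highest-order summands forces the source space to be $H^3_\per$ rather than $H^2_\per$, and the worst summand of $\cJ^{\eps,(2)}_{k,z}$, namely $(A_2)_\eps \cdot \nabla_k\, \nabla_k^2$, loses three derivatives but is still bounded $H^3_\per \to L^2_\per$ with norm $C(1+|z|)$.

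For \eqref{eq:op_approx3}, I would handle the extra terms in \eqref{eq:def_Keps} separately: the operators $\cK^{\eps,(j)}_{k,z}$ differ from $\cJ^{\eps,(j)}_{z,k}$ by the compositions $\cI^{\eps,(0)}_{k}\, (H^0_k - z)^{-1}_\perp V_1$ and $\cI^{\eps,(1)}_{k,z}\, (H^0_k - z)^{-1}_\perp V_1$. Using part~(4) of Lemma~\ref{basiclemma}, the operator $(H^0_k - z)^{-1}_\perp$ maps $L^2_\per$ continuously into $H^2_\per$ with norm at most $d_k(z) \le C(1 + |z|)(1 + \dist(z, \sigma(H^0_k) \setminus \{z\})^{-1})$, while $V_1 \in L^\infty_\per$ is bounded on $L^2_\per$; combining this with the previously obtained bounds on $\cI^{\eps,(0)}_k$ and $\cI^{\eps,(1)}_{z,k}$ yields the announced estimate with the constant $C_{k,z}$ after using the embedding $H^3_\per \hookrightarrow L^2_\per$. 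The main obstacle in the whole proof is not any single delicate inequality but the systematic check that every two-scale coefficient to which I apply Lemma~\ref{lem:geps} truly has zero $y$-average (so that the crucial factor $\eps$ is gained) and that the derivative count inside each summand matches the Sobolev index of the source space. Once these compatibilities are verified, all three estimates follow by a triangle inequality.
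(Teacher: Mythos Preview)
Your proposal is correct and follows essentially the same route as the paper: both arguments start from the operator expansions \eqref{eq:expansionH1}, \eqref{eq:expansion_md}, \eqref{eq:expansion_mz}, apply Lemma~\ref{lem:geps} to each zero-mean two-scale coefficient to gain the factor $\eps$, and count derivatives to fix the Sobolev index of the source space. The only cosmetic difference is in \eqref{eq:op_approx3}: you invoke part~(4) of Lemma~\ref{basiclemma} directly to bound $\|(H^0_k-z)^{-1}_\perp\|_{L^2_\per\to H^2_\per}$, whereas the paper factors through $(H^0_k+\mu)^{-1}(H^0_k-z)^{-1}_\perp(H^0_k+\mu)$; both yield the same $C_{k,z}$-type constant.
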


\begin{proof} From \eqref{eq:expansionH1}, \eqref{eq:def_J01}, and \eqref{eq:norm_geps}, we get
\begin{multline*}
 \nor{\pa{H_k^\eps - z}  \Mzu - \pa{H^0_k - z}}{H^2_{\rm per} \rightarrow H^{-1}_{\rm per}} \\
 \le  \| \cI^{\eps,(0)}_k \|_{H^2_{\rm per} \rightarrow H^{-1}_{\rm per}} 
 +  \eps \| \cI^{\eps,(1)}_{z,k} \|_{H^2_{\rm per} \rightarrow H^{-1}_{\rm per}} ,
\end{multline*}
with 
\begin{align*}
 \| \cI^{\eps,(0)}_k \|_{H^2_{\rm per} \rightarrow H^{-1}_{\rm per}} &\le   \| (\Delta_y\chi_2)_\eps \|_{H^2_{\rm per} \rightarrow H^{-1}_{\rm per}} +  \| (\Delta_y A_2)_\eps \cdot \nabla_k \|_{H^2_{\rm per} \rightarrow H^{-1}_{\rm per}}  \\ &\le   \| (\Delta_y\chi_2)_\eps \|_{H^1_{\rm per} \rightarrow H^{-1}_{\rm per}} + C  \| (\Delta_y A_2)_\eps \|_{H^1_{\rm per} \rightarrow H^{-1}_{\rm per}}  \\ 
 &\le C \left(  \|\Delta_y\chi_2\|_{\cC^1_\per(\cC^0_\per)} + \|\Delta_y A_2\|_{\cC^1_\per(\cC^0_\per)} \right) \eps \\
 %
  \| \cI^{\eps,(1)}_{z,k} \|_{H^2_{\rm per} \rightarrow H^{-1}_{\rm per}} & \le \| (\chi_1(W-z)-\Delta_x\chi_1)_\eps \|_{H^2_{\rm per} \rightarrow H^{-1}_{\rm per}}  \\
& \quad   + C \|(\nabla_x\chi_1)_\eps\|_{H^1_{\rm per} \rightarrow H^{-1}_{\rm per}} + C \|(\chi_1)_\eps\|_{L^2_{\rm per} \rightarrow H^{-1}_{\rm per}} \\
  & \le C \left( \| \chi_1(W-z)-\Delta_x\chi_1\|_{\cC^0_\per(\cC^0_\per)} +   2 \|\chi_1\|_{\cC^1_\per(\cC^0_\per)} \right).
\end{align*}
Hence \eqref{eq:op_approx}. We obtain \eqref{eq:op_approx2} by proceeding in the same way with \eqref{eq:expansion_md}-\eqref{eq:def_Ieps}. Note that we need to work with the space $H^3_{\rm per}(\Omega)$ in order to lead with the last term in $\cJ^{\eps,(2)}_{z,k}$, in which the input function is differentiated three times before being multiplied by a highly-oscillating function. Likewise, we deduce \eqref{eq:op_approx3}  from \eqref{eq:expansion_mz}-\eqref{eq:def_Keps} using in addition the fact that
  \begin{multline*}
   \| \cI^{\ep,(j)}_{k} \pa{H^0_k-z}_{\perp}^{-1} V_1\|_{H^3_{\rm per} \rightarrow H^{-1}_{\rm per}} \\
   \le  \| \cI^{\ep,(j)}_{k} \|_{H^2_{\rm per} \rightarrow H^{-1}_{\rm per}} 
   \|\pa{H^0_k-z}_{\perp}^{-1} V_1\|_{H^3_{\rm per} \rightarrow H^{2}_{\rm per}},
    \end{multline*}
with
\begin{align*}
&\|\pa{H^0_k-z}_{\perp}^{-1} V_1\|_{H^3_{\rm per} \rightarrow H^{2}_{\rm per}} \\
& \hspace{0.2cm} = 
\| \pa{H^0_k+\mu}^{-1}  \pa{H^0_k+\mu} \pa{H^0_k-z}_{\perp}^{-1}  V_1\|_{H^3_{\rm per} \rightarrow H^{2}_{\rm per}} \\
 & \hspace{0.2cm}= \|\pa{H^0_k+\mu}^{-1}  \pa{H^0_k-z}_{\perp}^{-1} \pa{H^0_k+\mu}  V_1\|_{H^3_{\rm per} \rightarrow H^{2}_{\rm per}} \\
&\hspace{0.2cm}\le  \|\pa{H^0_k+\mu}^{-1}\|_{L^2_{\rm per} \rightarrow H^{2}_{\rm per}} \| \pa{H^0_k-z}_{\perp}^{-1} \pa{H^0_k+\mu}\|_{L^2_{\rm per} \rightarrow L^{2}_{\rm per}} 
\|V_1\|_{H^3_{\rm per} \rightarrow L^{2}_{\rm per}} \\
&\hspace{0.2cm} \le  \|\pa{H^0_k+\mu}^{-1}\|_{L^2_{\rm per} \rightarrow H^{2}_{\rm per}} \| \pa{H^0_k-z}_{\perp}^{-1} \pa{H^0_k+\mu}\|_{L^2_{\rm per} \rightarrow L^{2}_{\rm per}} 
\|V_1\|_{\cC^0_\per(\cC^0_\per)} \\
& \hspace{0.2cm} \le C (1+\dist (z,\sigma(H^0_k)\setminus\{z\})^{-1}),
\end{align*}
for a constant $C \in \R_+$ independent of $k \in \Omega^*$ and $z$ ($\mu$ is here a fixed real number chosen as in Lemma~\ref{basiclemma}).
\end{proof}

To transform \eqref{eq:op_approx}-\eqref{eq:op_approx3} into resolvent estimates, we need uniform bounds on $(H^\eps_k-z)^{-1}$ provided by the following lemma.

\begin{lemma}\label{basiclemma2}
Let $v$ and $W$ satisfying \eqref{hypo:on_v_W}.  
\begin{enumerate}
\item Let $\mu \in \R$ be such that $H^\eps+\mu \ge 1$ for all $\eps \in (0,1]$ (see Lemma~\ref{basiclemma}, assertion 3). Then, there exists $C \in \R_+$ such that for all $k \in \Omega^*$ and $\eps \in \N^{-1}$,
\begin{equation} \label{eq:1st_bound_resolvent}
 \|(H^\eps_k+\mu)^{-1}-(H^0_k+\mu)^{-1}\|_{L^2_{\rm per} \to L^2_{\rm per}} \le C \eps.
\end{equation}
\item Let $\omega^*$ be a compact subset of $\overline{\Omega^*}$, and $\sC$ a compact subset of $\C$ such that
$$
    \dist \left( \sC, \overline{\bigcup_{k \in \omega^*}\sigma(H^0_k)}\right) > 0.
$$
Then, it holds
\begin{equation}\label{eq:CV_spectrum}
\dist\left( \sC, \overline{\bigcup_{k \in \omega^*}\sigma(H^\eps_k)}\right) \mathop{\longrightarrow}_{\substack{\ep \in \N^{-1} \\ \eps \to 0}} 
\dist \left( \sC, \overline{\bigcup_{k \in \omega^*}\sigma(H^0_k)}\right).
\end{equation}
In addition, there exists $\eps_0 > 0$ and $C \in \R_+$ such that
 \begin{equation}\label{eq:res_Hm1_H1}
  \forall k \in \omega^*, \quad \forall z \in \sC, \quad
 \forall \eps \in  (0, \eps_0] \cap \N^{-1}, \quad    \left\| (H_k^{\eps} -z)^{-1} \right\|_{H^{-1}_\per \to H^1_\per} \le C.
    \end{equation}
 \end{enumerate}
\end{lemma}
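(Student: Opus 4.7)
The plan for Part 1 is to set $z = -\mu$ in Lemma~\ref{th:key_th}, yielding the operator identity $(H_k^\eps + \mu)\cM^{\eps,(1)} - (H_k^0+\mu) = R_k^\eps$ with $\|R_k^\eps\|_{H^2_\per \to H^{-1}_\per} \le C\eps$ uniformly in $k \in \Omega^*$ (note that the hypothesis $H^\eps + \mu \ge 1$ combined with \eqref{deus} forces $\mu$ large enough so that $H_k^0 + \mu \ge 1$ as well, hence $-\mu$ lies in the resolvent set of every $H_k^0$). Left-multiplying by $(H_k^\eps+\mu)^{-1}$ and right-multiplying by $(H_k^0+\mu)^{-1}$ yields
\[
(H_k^0+\mu)^{-1} - (H_k^\eps+\mu)^{-1} = - \bigl(\cM^{\eps,(1)} - 1\bigr)(H_k^0+\mu)^{-1} + (H_k^\eps+\mu)^{-1} R_k^\eps (H_k^0+\mu)^{-1}.
\]
Both terms on the right are $O(\eps)$ as operators on $L^2_\per$: the first because $\cM^{\eps,(1)} - 1 = \eps(\chi_1)_\eps$ has $L^2_\per$-norm at most $\eps\|\chi_1\|_\infty$; the second by composing $\|(H_k^\eps+\mu)^{-1}\|_{H^{-1}_\per \to L^2_\per} \le M_\mu^2$ (from~\eqref{tss} used twice), $\|R_k^\eps\|_{H^2_\per \to H^{-1}_\per} \le C\eps$, and the uniform-in-$k$ elliptic regularity bound on $\|(H_k^0+\mu)^{-1}\|_{L^2_\per \to H^2_\per}$. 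This proves~\eqref{eq:1st_bound_resolvent}.

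For Part 2, I first upgrade Part 1 to uniform spectral convergence. Both $(H_k^\eps+\mu)^{-1}$ and $(H_k^0+\mu)^{-1}$ are bounded self-adjoint operators with spectra in $(0,1]$, so the Hausdorff distance between their spectra is bounded by their operator-norm difference, which is $\le C\eps$ uniformly in $k$ by Part 1. Transferring through the bi-Lipschitz map $\lambda \mapsto \lambda^{-1}-\mu$ (restricted to any compact subinterval of $(0,1]$) yields uniform Hausdorff convergence of $\sigma(H_k^\eps) \cap (-\infty, M]$ to $\sigma(H_k^0) \cap (-\infty, M]$ for every fixed $M \in \R$. Since $\sC$ is compact and lies in some strip $\{\re z \le M_0\}$, spectral values above $M_0+1$ remain at distance $\ge 1$ from $\sC$, so only the low-lying portion contributes to $\dist(\sC, \sigma(H_k^\eps))$; this gives~\eqref{eq:CV_spectrum} and ensures $z \notin \sigma(H_k^\eps)$ for all $z \in \sC$, $k \in \omega^*$, and $\eps \in (0,\eps_0] \cap \N^{-1}$.

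Finally, for~\eqref{eq:res_Hm1_H1}, I apply the functional calculus of the self-adjoint operator $H_k^\eps$: with $g_z(\lambda) := (\lambda+\mu)/(\lambda-z) = 1 + (\mu+z)/(\lambda-z)$, the commutation identity
\[
(H_k^\eps - z)^{-1} = (H_k^\eps+\mu)^{-1/2}\,g_z(H_k^\eps)\,(H_k^\eps+\mu)^{-1/2}
\]
factors the resolvent through $L^2_\per$. The outer factors are bounded respectively as operators $H^{-1}_\per \to L^2_\per$ and $L^2_\per \to H^1_\per$ by $M_\mu$ (using~\eqref{tss}), while $\|g_z(H_k^\eps)\|_{L^2_\per \to L^2_\per} = \sup_{\lambda \in \sigma(H_k^\eps)} |g_z(\lambda)|$ stays uniformly bounded since $|\lambda-z| \ge \dist(\sC, \sigma(H_k^\eps)) \ge c > 0$ for $\eps$ small (by the previous step) and $|\mu+z|$ is bounded on $\sC$. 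The main obstacle is precisely this passage from the single-point resolvent bound of Part 1 to the uniform-over-$(k,z,\eps)$ bound needed in Part 2; the key tool is the Lipschitz dependence of the spectrum of a bounded self-adjoint operator on the operator itself in Hausdorff distance, which converts the operator-norm smallness of Part 1 into uniform spectral-gap information for the family $\{H_k^\eps\}$.
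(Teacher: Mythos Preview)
Your proof is correct and follows essentially the same route as the paper's. The only cosmetic differences are that for the spectral convergence you invoke Hausdorff stability of the spectrum of bounded self-adjoint operators where the paper uses the Courant--Fischer min-max principle, and for \eqref{eq:res_Hm1_H1} your one-line functional-calculus factorization $(H_k^\eps-z)^{-1}=(H_k^\eps+\mu)^{-1/2}g_z(H_k^\eps)(H_k^\eps+\mu)^{-1/2}$ is exactly what the paper obtains after applying the resolvent identity twice.
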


\begin{proof} Taking $z=\mu$ in \eqref{eq:op_approx}, we obtain that there exists $C \in \R_+$ such that 
\begin{align*}
\forall k \in \Omega^*, \quad \forall \eps \in \N^{-1}, \quad  \nor{\pa{H_k^\eps +\mu}  \Mzu - \pa{H^0_k +\mu}}{H^2_{\rm per} \rightarrow H^{-1}_{\rm per}} \le C  \, \ep.
\end{align*}
Since
\begin{align*}
    & \Mzu \pa{H^0_k +\mu}^{-1} -  \pa{H_k^\eps +\mu} ^{-1} \\
    & \quad = \pa{H_k^\eps +\mu}^{-1}  \left(\pa{H_k^\eps +\mu}  \Mzu - \pa{H^0_k +\mu}\right)  \pa{H_k^0 +\mu}^{-1},
\end{align*}
we obtain that
\begin{multline*}
  \nor{ \Mzu \pa{H^0_k +\mu}^{-1} -  \pa{H_k^\eps +\mu} ^{-1}}{L^2_{\rm per} \rightarrow L^{2}_{\rm per}} \\
 \le \nor{\pa{H_k^0 +\mu}^{-1}}{L^2_{\rm per} \rightarrow H^{2}_{\rm per}} \nor{\pa{H_k^\eps +\mu}^{-1}}{H^{-1}_{\rm per} \rightarrow L^{2}_{\rm per}} C \eps.
 \end{multline*}
 Let us bound these two norms.
 From \eqref{eq:zdep}-\eqref{eq:def_dk}, the first norm is bounded uniformly in $k \in \Omega^*$. The second norm is also bounded uniformly in $k \in \Omega^*$ and $\eps \in \N^{-1}$ in view of  \eqref{tss}:
\begin{align*}
&\nor{\pa{H_k^\eps +\mu}^{-1}}{H^{-1}_{\rm per} \rightarrow L^{2}_{\rm per}} \\
&\qquad \le \nor{\pa{H_k^\eps +\mu}^{-1/2}}{L^{2}_{\rm per} \rightarrow L^{2}_{\rm per}}  
\nor{\pa{H_k^\eps +\mu}^{-1/2}(1-\nabla_k^2)^{1/2}}{L^{2}_{\rm per} \rightarrow L^{2}_{\rm per}} \\ 
& \qquad\qquad\qquad\qquad\qquad\qquad\qquad\qquad\qquad\qquad\quad \times\nor{(1-\nabla_k^2)^{-1/2}}{H^{-1}_{\rm per} \rightarrow L^{2}_{\rm per}} \\
&\qquad \le M_\mu \nor{(1-\nabla_k^2)^{-1/2}}{H^{-1}_{\rm per} \rightarrow L^{2}_{\rm per}}.
\end{align*}
Therefore, we have 
 \begin{align*}
 \nor{ \Mzu \pa{H^0_k +\mu}^{-1} -  \pa{H_k^\eps +\mu} ^{-1}}{L^2_{\rm per} \rightarrow L^{2}_{\rm per}} & \le C \eps,
\end{align*}
for a constant $C \in \R_+$ independent of $k \in \Omega^*$ and $\eps \in \N^{-1}$. Using \eqref{eq:osc_L2} and the fact that $\nor{\pa{H^0_k +\mu}^{-1}}{L^2_{\rm per}} \le 1$, we also get
\begin{align*}
 & \nor{ \pa{H^0_k +\mu}^{-1} -  \pa{H_k^\eps +\mu} ^{-1}}{L^2_{\rm per} \rightarrow L^{2}_{\rm per}} \\
 & \qquad\quad \le 
  \nor{ \Mzu \pa{H^0_k +\mu}^{-1} -  \pa{H_k^\eps +\mu} ^{-1}}{L^2_{\rm per} \rightarrow L^{2}_{\rm per}} \\
 &\qquad\qquad\qquad\qquad\qquad\qquad\qquad\qquad+\eps \nor{(\chi_1)_\eps \pa{H^0_k +\mu}^{-1}}{L^2_{\rm per} \rightarrow L^{2}_{\rm per}} \\
  & \qquad\quad \le C\eps + \eps \|\chi_1\|_{\cC^0_\per(\cC^0_\per)}.
 \end{align*}
This proves \eqref{eq:1st_bound_resolvent}.

\medskip

By the Courant-Fisher min-max principle, the bound \eqref{eq:1st_bound_resolvent} implies that for all $j \in \N^*$, the distance between the $j^{\rm th}$-th eigenvalues of the positive compact self-adjoint operators $(H_k^\eps +\mu)^{-1}$ and $(H_k^0 +\mu)^{-1}$ (the eigenvalues being ranked here in non-increasing order and counting multiplicities) is not larger than $C\eps$. Since the spectra of $H^\eps_k$ and $H^0_k$ are obtained from the spectra of $(H_k^\eps +\mu)^{-1}$ and $(H_k^0 +\mu)^{-1}$ by the transform $(0,1] \ni \lambda \mapsto \lambda^{-1}-\mu \in \R$, we obtain~\eqref{eq:CV_spectrum}.

\medskip

	Using the resolvent formula, we have for all $z$ in the resolvent set of $H^\eps_k$,
	\begin{align*}
	(H^\eps_k-z)^{-1} &= (H^\eps_k+\mu)^{-1} + (z+\mu) (H^\eps_k+\mu)^{-1} (H^\eps_k-z)^{-1} \\
	&= (H^\eps_k+\mu)^{-1} + (z+\mu)  (H^\eps_k+\mu)^{-1/2}  (H^\eps_k-z)^{-1} (H^\eps_k+\mu)^{-1/2}.
	\end{align*}
	We therefore deduce from the uniform bounds~\eqref{tss} that
\begin{align}
	& \| (H^\eps_k-z)^{-1}  \|_{H^{-1}_\per \to H^1_\per} \nonumber \\
	& \qquad \le  \| (H^\eps_k+\mu)^{-1}  \|_{H^{-1}_\per \to H^1_\per}  \nonumber  \\ 
    & \qquad\qquad  +(|z|+|\mu|)  \|(H^\eps_k+\mu)^{-1/2}\|_{L^2_\per \to H^1_\per} \|(H^\eps_k-z)^{-1}\|_{L^2_\per \to L^2_\per} \nonumber\\
    & \qquad \qquad\qquad \qquad\qquad \qquad\qquad \qquad\qquad \qquad \times \|(H^\eps_k+\mu)^{-1/2}\|_{H^{-1}_\per \to L^2_\per} \nonumber \\
	& \qquad \le C (1+(1+|z|) \dist(z,\sigma(H^\eps_k))^{-1}), \label{eq:pre_DOS}
\end{align}
	which, together with~\eqref{eq:CV_spectrum}, gives \eqref{eq:res_Hm1_H1}.
\end{proof}

We are now in position to complete the proof of Theorem~\ref{thm:res}.

\begin{proof}[Proof of Theorem~\ref{thm:res}] The fact that there exists $\eps_0 > 0$ such that, for all $\eps \in (0, \eps_0] \cap \N^{-1}$, $k \in \omega^*$, and  $z \in \sC$, $z \notin \sigma(H^\eps_k) \cup \sigma(H^0_k+\eps V_1)$, results from \eqref{eq:1st_bound_resolvent} and Kato's perturbation theory (in its simplest form since the function $V_1$ is in $\cC^0_\per(\Omega)$ and therefore gives rise to a bounded multiplication operator). We therefore have for all $\eps \in (0, \eps_0] \cap \N^{-1}$ and  $z \in \sC$, 
\begin{align*}
    & \left\| \cM^{\eps, (1)} (H^0_k - z)^{-1} - (H^\eps_k - z)^{-1} \right\|_{L^2_\per \to H^1_\per} \\
    & \qquad\quad \le
    \left\|  (H^\eps_k - z)^{-1} \right\|_{H^{-1}_\per \to H^1_\per} 
    \left\| ( H_k^\eps - z) \cM^{\eps, (1)} - (H^0_k - z)  \right\|_{H^2_\per \to H^{-1}_\per} \\
    &\qquad\qquad\qquad\qquad\qquad\qquad\qquad\qquad\qquad\qquad\times \left\|  (H^0_k - z)^{-1} \right\|_{L^2_\per \to H^2_\per}.
\end{align*}
As $\omega^*$ and $\sC$ are compact, $\dist(\sC,\overline{\bigcup_{k \in \omega^*} \sigma(H_k^0)}) > 0$, and the domain of $H^0$ is equal to $H^2(\R^d)$, the term $ \left\|  (H^0_k - z)^{-1} \right\|_{L^2_\per \to H^2_\per}$ is bounded uniformly in $k \in \omega^*$ and $z \in \sC$. The resolvent estimate \eqref{eq:resest1} then immediately follows from \eqref{eq:op_approx} and \eqref{eq:res_Hm1_H1}.  

\medskip

We deduce \eqref{eq:res_L2} from \eqref{eq:res_Hm1_H1} by observing that for all $k \in \omega^*$, $z \in \sC$ and $\eps \in (0,\eps_0]\cap \N^{-1}$,
\begin{align*}
 & \left\| (H^0_k - z)^{-1} - (H^\eps_k - z)^{-1} \right\|_{L^2_\per \to L^2_\per} \\
 & \qquad \le  \left\| \cM^{\eps, (1)} (H^0_k - z)^{-1} - (H^\eps_k - z)^{-1} \right\|_{L^2_\per \to L^2_\per} \\
 &\qquad\qquad\qquad\qquad\qquad\qquad\qquad+ \eps \left\| (\chi_1)_\eps (H^0_k - z)^{-1}  \right\|_{L^2_\per \to L^2_\per} \\
 & \qquad \le \left\| \cM^{\eps, (1)} (H^0_k - z)^{-1} - (H^\eps_k - z)^{-1} \right\|_{L^2_\per \to H^1_\per} \\
 &\qquad\qquad\qquad\qquad\qquad\qquad\qquad+  \eps \left\| \chi_1\right\|_{\cC^0_\per(\cC^0_\per)} \left\| (H^0_k - z)^{-1}  \right\|_{L^2_\per \to L^2_\per} \\ 
  & \qquad \le C\eps,
\end{align*}
for a constant $C \in \R_+$ independent of $k$, $z$ and $\eps$. 

\medskip

Let us now prove \eqref{eq:resest3}. We have
\begin{align*}
    & \left\| \widetilde\cM^{\eps, (2)} (H^0_k - z)^{-1} - (H^\eps_k - z)^{-1} \right\|_{H^1_\per \to H^1_\per} 
     \le
    \left\|  (H^\eps_k - z)^{-1} \right\|_{H^{-1}_\per \to H^1_\per} \\
    & \qquad \times    \left\| ( H_k^\eps - z) \widetilde \cM^{\eps, (2)} - (H^0_k - z)  \right\|_{H^3_\per \to H^{-1}_\per} 
    \left\|  (H^0_k - z)^{-1} \right\|_{H^1_\per \to H^3_\per}.
\end{align*}
As $V_0 \in \cC^2_\per(\Omega)$ and $W \in W^{1,\infty}_\per(\Omega)$, the boundedness of 
\begin{align*}
\left\|  (H^0_k - z)^{-1} \right\|_{H^1_\per \to H^3_\per},
\end{align*}
uniformly in $k \in \omega^*$ and $z \in \sC$, follows from elementary elliptic regularity results. Together with \eqref{eq:op_approx3} and  \eqref{eq:res_Hm1_H1}, the above inequality therefore leads to~\eqref{eq:resest3}.

\medskip

Lastly, we have 
\begin{align*}
    & \left\| \cM^{\eps, (2)} ((H^0_k+\eps V_1) - z)^{-1} - (H^\eps_k - z)^{-1} \right\|_{H^1_\per \to H^1_\per} \\
    &\qquad \le \left\| ( H_k^\eps - z) \cM^{\eps, (2)} - ((H^0_k +\eps V_1) - z)  \right\|_{H^3_\per \to H^{-1}_\per}\\
    & \quad\qquad\qquad\qquad \times   \left\|  (H^\eps_k - z)^{-1} \right\|_{H^{-1}_\per \to H^1_\per}    \left\|  ((H^0_k+\eps V_1) - z)^{-1} \right\|_{H^1_\per \to H^3_\per},
\end{align*}
and
\begin{align*}
 &\left\|  ((H^0_k+\eps V_1) - z)^{-1} \right\|_{H^1_\per \to H^3_\per} \\
 &\qquad\qquad\le  \left\|  (H^0_k- z)^{-1} \right\|_{H^1_\per \to H^3_\per}  \left\|  (1+\eps V_1(H^0_k- z)^{-1})^{-1} \right\|_{H^1_\per \to H^1_\per} \\
 &\qquad\qquad\le \frac{\dps \left\|  (H^0_k- z)^{-1} \right\|_{H^1_\per \to H^3_\per}}{\dps 1-\eps  \left\|   V_1(H^0_k- z)^{-1} \right\|_{H^1_\per \to H^1_\per}}.
\end{align*}
Using \eqref{eq:op_approx2} and  \eqref{eq:res_Hm1_H1}, we finally obtain \eqref{eq:resest2}.
\end{proof}

%


\subsection{Proofs of Theorems~\ref{thm:unif_CV_eigenvalues}, \ref{thm:eigenmode_nondeg} and ~\ref{thm:eigenmode_deg} (eigenmodes)} 
\label{sec:proof_eig}

We first prove Theorem~\ref{thm:unif_CV_eigenvalues} as a corollary of the previous results. We then focus on Theorem~\ref{thm:eigenmode_deg} (degenerate eigenvalues). 

\subsubsection{Proof of Theorem~\ref{thm:unif_CV_eigenvalues}}
\label{sec:proof_unif_CV_eigenvalues}

From \eqref{deus}, there exists two real constants $c_-$ and $C_+$ such that 
$$
\forall \ell \in \N^*, \quad \forall k \in \Omega^*, \quad \frac 12 \lambda_\ell(-\nabla_k^2) + c_- \le E^\eps_{\ell,k} \le \frac 32 \lambda_\ell(-\nabla_k^2) + C_+,
$$
where $\lambda_\ell(-\nabla_k^2)$ is the $\ell^{\rm th}$ lowest eigenvalue of $-\nabla_k^2$. Since 
\begin{align*}
\min_{k \in \Omega^*} \lambda_\ell(-\nabla_k^2) \underset{\ell \to +\infty}{\longrightarrow} +\infty \qquad \mbox{(as $\ell^{2/d}$)},
\end{align*}
we have \eqref{eq:asympt_eigenvalues}.

\medskip

From \eqref{eq:1st_bound_resolvent} and the Courant-Fischer min-max theorem, there exists a constant $C \in \R_+$ such that for all $\ell \in \N^*$, $\forall k \in \Omega^*$ and  $\eps \in \N^{-1}$,
$$
|(E^\eps_{\ell,k}+\mu)^{-1}-(E^0_{\ell,k}+\mu)^{-1}| \le C \eps,
$$
and thus
$$
|E^\eps_{\ell,k} - E^0_{\ell,k}| \le  C\,  |E^0_{\ell,k}+\mu| \, |E^\eps_{\ell,k}+\mu| \; \eps \le C_\ell \eps,
$$
with $C_\ell = C \max_{k \in \Omega^*} \left( |E^0_{\ell,k}+\mu| \left( \frac 32 \lambda_\ell(-\nabla_k^2) + C_+ \right) \right) < +\infty$.

\subsubsection{Proof of Theorem~\ref{thm:eigenmode_deg}}
\label{sec:proof_eigemode_deg}
Let $E_{\ell,k}^0=\cdots=E_{\ell+m-1,k}^0$ be an eigenvalue of $H^0_k$ of multiplicity $m$. Let $\gamma := \min(E_{\ell,k}^0-E_{\ell-1,k}^0,E_{\ell+m,k}^0-E_{\ell,k}^0) > 0$ be the spectral gap around $E_{\ell,k}^0$ and $\sC$ be the circle in the complex plane with center $E_{\ell,k}^0$ and radius $\delta:=\gamma/2$. Thanks to \eqref{eq:unif_eigenvalues}, we can find $\eps_0 > 0$ such that for all $\eps \in (0,\eps_0] \cap \N^{-1}$, $H^\eps_k$ has exactly $m$ eigenvalues (counting multiplicities) in the  interval $I_\delta := [E_{\ell,k}^0-\delta,E_{\ell,k}^0+\delta]$ and neither $(E_{\ell,k}^0-\delta)$ nor $(E_{\ell,k}^0+\delta)$ are in $\sigma(H^\eps_k)$. The same holds for $(H^0_k+\eps V_1)$ by virtue of Kato's perturbation theory. We therefore have
\begin{align}
& P^0 := \1_{\{E_{\ell,k}^0\}}(H^0_k) = \1_{I_\delta}(H^0_k) = \frac{1}{2i\pi}  \oint_\sC (z-H_k^0)^{-1} \, \d z, \label{eq:contour_int_1}  \\
& P^\eps := \1_{I_\delta}(H^\eps_k) =\frac{1}{2i\pi}   \oint_\sC (z-H_k^\eps)^{-1} \, \d z, \label{eq:contour_int_2}   \\ 
& P^{0, \eps} := \1_{I_\delta}(H^0_k+\eps V_1) = \frac{1}{2i\pi}  \oint_\sC (z-(H_k^0+\eps V_1))^{-1} \, \d z. \label{eq:contour_int_3}
\end{align}
Using \eqref{eq:res_L2}-\eqref{eq:resest2}, there exist $\eps_0 > 0$ and $C \in \R_+$ such that for all $\eps \in (0,\eps_0] \cap \N^{-1}$,
\begin{align*}
& \left\| P^\eps -  P^0  \right\|_{L^2_\per \to L^2_\per} \nonumber \\
&\qquad \le   \left\| \frac{1}{2i\pi}  \oint_\sC (z-H_k^\eps)^{-1} \, \d z -  \frac{1}{2i\pi}  \oint_\sC (z-H_k^0)^{-1} \, \d z  \right\|_{L^2_\per \to L^2_\per} \nonumber \\
& \qquad \le \frac{1}{2\pi}   \oint_\sC \left\|  (z-H_k^\eps)^{-1} -  (z-H_k^0)^{-1}   \right\|_{L^2_\per \to L^2_\per}  \, \d z 
\le C \eps.
\end{align*}
Similarly,
\begin{align*}
& \left\| P^\eps - \cM^{\eps,(1)} P^0 \right\|_{L^2_\per \to H^1_\per} \nonumber \\
& \qquad \le \frac{1}{2\pi}   \oint_\sC \left\|  (z-H_k^\eps)^{-1} - \cM^{\eps,(1)} (z-H_k^0)^{-1}   \right\|_{L^2_\per \to H^1_\per}  \, \d z
 \le C \eps, 
\end{align*}
proving \eqref{eq:deg_evec00}, and
\begin{align*}
& \left\| P^\eps - \cM_k^{\eps,(2)} P^{0, \eps}  \right\|_{H^1_\per \to H^1_\per} \nonumber\\
& \quad \le \frac{1}{2\pi}   \oint_\sC \left\|  (z-H_k^\eps)^{-1} - \cM_k^{\eps,(2)} (z-(H_k^0+\eps V_1))^{-1}   \right\|_{H^1_\per \to H^1_\per}  \, \d z 
\le C \eps^2,
\end{align*}
proving \eqref{eq:deg_evec}. Let us now prove \eqref{eq:deg_evec2}. Using \eqref{eq:def_M2z}, \eqref{eq:contour_int_1}, \eqref{eq:contour_int_2}, we first have
\begin{align*}
& \qquad P^\eps - \widetilde \cM_{k,E^0_{\ell,k}}^{\eps,(2)} P^0   =   \frac{1}{2i\pi}   \oint_\sC \left(  (z-H_k^\eps)^{-1} -  \widetilde \cM_{k,z}^{\eps,(2)}  (z-H_k^0)^{-1} \right) \, \d z \\
&  + \eps (1+\eps(\chi_1)_\eps) \hspace{-0.05cm}\left( \hspace{-0.05cm} (H^0_k-E^0_{\ell,k})^{-1}_\perp V_1 P^0 -  \frac{1}{2i\pi}   \oint_\sC (H_k^0-z)^{-1}V_1 (z-H_k^0)^{-1} \, \d z \hspace{-0.05cm}\right)\hspace{-0.05cm}.
\end{align*}
Since
\begin{multline*}
 \frac{1}{2i\pi}  \oint_\sC (z-H_k^0)^{-1} V_1 (z-H_k^0)^{-1} \, \d z \\
 =   (H^0_k-E^0_{\ell,k})^{-1}_\perp V_1 P^0 + P^0 V_1  (H^0_k-E^0_{\ell,k})^{-1}_\perp,
\end{multline*}
we get
\begin{align*}
P^\eps - \widetilde \cM_{k,E^0_{\ell,k}}^{\eps,(2)}P^0  
& =   \frac{1}{2i\pi}   \oint_\sC \left(  (z-H_k^\eps)^{-1} -  \widetilde \cM_{k,z}^{\eps,(2)}  (z-H_k^0)^{-1} \right) \, \d z \\
& \qquad - \eps (1+\eps(\chi_1)_\eps) P^0 V_1  (H^0_k-E^0_{\ell,k})^{-1}_\perp.
\end{align*}
Multiplying both sides by $P^0$ on the right and using $(H^0_k-E^0_{\ell,k})^{-1}_\perp P^0=0$, we obtain
\begin{multline*}
 \left(P^\eps - \widetilde \cM_{k,E^0_{\ell,k}}^{\eps,(2)}P^0 \right)  P^0   \\
 =  \left(  \frac{1}{2i\pi}   \oint_\sC \left(  (z-H_k^\eps)^{-1} -  \widetilde \cM_{k,z}^{\eps,(2)}  (z-H_k^0)^{-1} \right) \, \d z \right) P^0,
\end{multline*}
and from \eqref{eq:resest3} we obtain
\begin{align}\label{eq:bound_projs_2}
\nor{\left(P^\eps - \widetilde \cM_{k,E^0_{\ell,k}}^{\eps,(2)}P^0 \right)  P^0}{H^1_\per \rightarrow H^1_\per} \le C \ep^2.
\end{align}
Finally, we use that $\| P^0 \|_{L^2_\per \rightarrow H^1_\per}$ is finite, and write $P^0 = P^0 P^0$, to show \eqref{eq:deg_evec2}.

\medskip

Let us finally prove \eqref{eq:deg_eval}. First, by a direct application of Kato's perturbation theory for degenerate eigenvalues, we have
\begin{equation} \label{eq:Kato_deg}
E^{0,\eps}_{\ell+j,k} = E^0_{\ell,k} + \eps \eta_j + O(\eps^2),
\end{equation}
where we recall that $\eta_0 \le \eta_1 \le \cdots \le \eta_{m-1}$ are the eigenvalues of the (diagonal) $m \times m$ matrix with entries $\langle \psi^{0}_{\ell+j,k}, V_1 \psi^{0}_{\ell+j',k}\rangle$. The eigenvalues $E^\eps_{\ell, k} \le \cdots \le E^\eps_{\ell + m - 1, k}$ are also the eigenvalues of the $m \times m$ matrix $\bH^\eps$ with coefficients
\[
    \bH^\eps_{j,j'} := \langle \psi^\eps_{\ell+j, k}, H^\eps_k \psi^\eps_{\ell+j', k} \rangle
     = \langle \psi^\eps_{\ell+j, k}, P^\eps H^\eps_k P^\eps \psi^\eps_{\ell+j', k} \rangle.
\]
Since $H^\eps_k$ defines a bounded quadratic form on $H^1_\per \times H^1_\per$, see~\eqref{deus}, together with~\eqref{eq:deg_evec}, we obtain
\[
    \bH^\eps_{j,j'} =  \langle \psi^\eps_{\ell+j, k}, P^{0, \eps} \cM^{\eps, (2)}_k H^\eps_k \cM^{\eps, (2)}_k P^{0, \eps} \psi^\eps_{\ell+j', k} \rangle + O(\eps^2).
\]
Introducing the (positive) $m \times m$ Gram matrix $\bS$ with coefficients
\[
    \bS_{j,j'} := \langle \psi_{\ell + j, k}^{0, \eps}, \psi_{\ell + j', k}^{\eps} \rangle,
\]
the previous equalities also read
\[
    \bH^\eps = \bS^* \bH^{0, \eps} \bS + O(\eps^2),
\]
with $\bH^{0, \eps}$ the $m \times m$ matrix with coefficients
\[
    \bH^{0, \eps}_{j,j'} := \langle \psi_{\ell + j, k}^{0, \eps}, \cM^{\eps, (2)}_k H^\eps_k \cM^{\eps, (2)}_k \psi_{\ell + j', k}^{0, \eps} \rangle.
\]
We now expand the coefficients of $\bH^{0, \eps}$ using \eqref{eq:def_M2} and \eqref{eq:expansion_md} and the fact that the $\psi^{0, \eps}_{\ell+j,k}$'s are uniformly bounded in $H^3_{\rm per}(\Omega)$, we get
\begin{align*}
\bH_{j,j'}^{0, \eps} & = \left\langle \cM^{\eps,(2)}_k \psi^{0, \eps}_{\ell+j,k} \big| H^\eps_k  \cM^{\eps,(2)}_k \big| \psi^{0, \eps}_{\ell+j',k}\right\rangle \\
&= \left\langle \left( 1 + \eps (\chi_1)_\eps \right)
\psi^{0, \eps}_{\ell+j,k} \big| \left( H^0_k+\eps V_1 +  \eps \cJ^{\eps, (1)}_{k, z} \right) \psi^{0, \eps}_{\ell+j',k}\right\rangle + O(\eps^2)\\
&=E^{0, \eps}_{\ell+j,k}\delta_{jj'}+ \eps \left( E^{0, \eps}_{\ell+j,k} \left\langle (\chi_1)_\eps \psi^{0, \eps}_{\ell+j,k} \big|  \psi^{0, \eps}_{\ell+j',k}\right\rangle
+ \left\langle  \psi^{0, \eps}_{\ell+j,k} \big| \cJ^{\eps,(1)}_{k,z} \psi^{0, \eps}_{\ell+j',k}\right\rangle \right) \\
& \qquad\qquad\qquad + O(\eps^2).
\end{align*}
Using \eqref{eq:def_Ieps}, \eqref{eq:osc_H1}, and again the fact that the $\psi^{0, \eps}_{\ell+j,k}$'s are bounded in $H^3_{\rm per}(\Omega)$, we obtain
\begin{align}
&\left| \left\langle (\chi_1)_\eps \psi^{0, \eps}_{\ell+j,k} \big|  \psi^{0, \eps}_{\ell+j',k}\right\rangle \right| \le 
\left\|  (\chi_1)_\eps \psi^{0, \eps}_{\ell+j,k} \right\|_{H^{-1}_{\rm per}} \left\| \psi^{0, \eps}_{\ell+j',k} \right\|_{H^1_{\rm per}} \le C\eps, \label{eq:chi1_psi0eps_psi0eps} \\
&\left| \left\langle  \psi^{0, \eps}_{\ell+j,k} \big| \cJ^{\eps,(1)}_{k,z} \psi^{0, \eps}_{\ell+j',k}\right\rangle  \right| \le C \eps.  \nonumber
\end{align}
So, introducing the diagonal matrix $\bH_1 = {\rm diag}( \eta_0, \eta_1, \cdots, \eta_{m-1})$, we have
\[
    \bH^{0, \eps} = E^0_{\ell,k} \mathbb{I}_m + \eps \bH_1 + O(\eps^2).
\]
On the other hand, from~\eqref{eq:deg_evec00}, we see that $\bS = \mathbb{I}_m + O(\eps)$. In addition, from~\eqref{eq:deg_evec}, we obtain
\[
    \left\| P^{0, \eps} P^\eps P^{0, \eps} -  P^{0, \eps} \cM^{\eps, (2)}_k P^{0, \eps} \right\| \le C \eps^2,
\]
which gives, using again~\eqref{eq:chi1_psi0eps_psi0eps},
\begin{align}
    \left( \bS \bS^* \right)_{j,j'} & = \langle \psi_{\ell + j, k}^{0, \eps}, \cM^{\eps, (2)}_k \psi_{\ell + j', k}^{0, \eps} \rangle + O(\eps^2) \nonumber \\
    & =  \langle \psi_{\ell + j, k}^{0, \eps},  \left( 1 + \eps (\chi_1)_\eps \right)    \psi^{0, \eps}_{\ell+j',k} \rangle + O(\eps^2) \nonumber \\
    & = \delta_{jj'} + O(\eps^2), \label{eq:SS*}
\end{align}
So $\bS \bS^* = \mathbb{I}_m + O(\eps^2)$ and $\bS^* \bS = \mathbb{I}_m + O(\eps^2)$ as well. This proves that
\begin{align*}
    \bH^\eps & = \bS^* \left(E^0_{\ell,k} \mathbb{I}_m + \eps \bH_1 \right) \bS + O (\eps^2)  \\
	     &= E^0_{\ell,k}\bS^* \bS + \eps \left( \mathbb{I}_m + O(\eps) \right) \bH_1   \left( \mathbb{I}_m + O(\eps) \right) + O(\eps^2) \\
    & =  E^0_{\ell,k}\mathbb{I}_m + \eps \bH_1 + O(\eps^2),
\end{align*}
and the result follows.

 \subsubsection{Proof of Theorem~\ref{thm:eigenmode_nondeg}}
 
It is easily seen that under the assumptions of Theorem~\ref{thm:eigenmode_nondeg}, the bounds \eqref{eq:deg_evec00}-\eqref{eq:deg_evec2} established in the previous section hold uniformly in $k \in \omega^*$. Applying these results in the special case when $E^0_{\ell,k}$ is non-degenerate ($m=1$), we obtain that for $\eps \in \N^{-1}$ small enough, $E^\eps_{\ell,k}$ and $E^{0, \eps}_{\ell,k}$ are non-degenerate for all $k \in \omega^*$, and both converge to $E^0_{\ell,k}$ when $\eps \to 0$, uniformly in $k \in \omega^*$.  We deduce from \eqref{eq:deg_evec00}-\eqref{eq:deg_evec2} that for $k \in \omega^*$ and $\eps \in (0,\eps_0] \cap \N^{-1}$, 
 \begin{align}
 \left\| \langle \psi_{\ell,k}^\eps, \psi^0_{\ell,k} \rangle_{L^2_{\rm per}} \psi_{\ell,k}^\eps - \psi^0_{\ell,k} \right\|_{L^2_{\rm per}} \le C \eps, \label{eq:estim10}  \\
 \left\| \langle \psi_{\ell,k}^\eps, \psi^0_{\ell,k} \rangle_{L^2_{\rm per}} \psi_{\ell,k}^\eps - \cM^{\eps,(1)} \psi^0_{\ell,k} \right\|_{H^1_{\rm per}} \le C \eps,  \label{eq:estim11}  \\
 \left\| \langle \psi_{\ell,k}^\eps, \psi^{0, \eps}_{\ell,k} \rangle_{L^2_{\rm per}} \psi_{\ell,k}^\eps -  \cM_k^{\eps,(2)} \psi^{0, \eps}_{\ell,k}  \right\|_{H^1_{\rm per}} \le C \eps^2,  \label{eq:estim12} \\
 \left\| \langle \psi_{\ell,k}^\eps, \psi^0_{\ell,k} \rangle_{L^2_{\rm per}} \psi_{\ell,k}^\eps -\widetilde \cM_{E^0_{\ell,k},k}^{\eps,(2)} \psi^0_{\ell,k} \right\|_{H^1_{\rm per}} \le C \eps^2. \label{eq:estim13}
  \end{align}
  From \eqref{eq:estim10} and the orientation condition $\langle \psi_{\ell,k}^\eps, \psi^0_{\ell,k} \rangle_{L^2_{\rm per}} \in \R_+$, we get
  \begin{align*}
 \frac 12 \|\psi_{\ell,k}^\eps -  \psi^0_{\ell,k}\|_{L^2_{\rm per}}^2 & = 1- \langle \psi_{\ell,k}^\eps, \psi^0_{\ell,k} \rangle_{L^2_{\rm per}}   \le  1- \langle \psi_{\ell,k}^\eps, \psi^0_{\ell,k} \rangle_{L^2_{\rm per}}^2 \\
 & =\left\| \langle \psi_{\ell,k}^\eps, \psi^0_{\ell,k} \rangle_{L^2_{\rm per}} \psi_{\ell,k}^\eps - \psi^0_{\ell,k} \right\|_{L^2_{\rm per}}^2 \le C^2\eps^2.
  \end{align*}
  Combining with \eqref{eq:estim11} and  \eqref{eq:estim13}, we get
  \begin{equation}\label{eq:bound_unnorm}
  \left\|  \psi_{\ell,k}^\eps - \cM^{\eps,(1)} \psi^0_{\ell,k} \right\|_{H^1_{\rm per}} \le C \eps \quad \mbox{and} \quad 
   \left\|  \psi_{\ell,k}^\eps - \widetilde \cM_{E^0_{\ell,k},k}^{\eps,(2)} \psi^0_{\ell,k}  \right\|_{H^1_{\rm per}} \le C \eps^2.
  \end{equation}
  In particular, we have
  \[
    \left\| \cM^{\eps,(1)} \psi^0_{\ell,k} \right\|_{L^2_\per} = 1 + O(\eps), \quad \text{and} \quad
    \left\|  \widetilde \cM_{E^0_{\ell,k},k}^{\eps,(2)} \psi^0_{\ell,k}  \right\|_{L^2_\per} = 1 + O(\eps^2),
  \]
  from which we get
  $$
  \left\|  \psi_{\ell,k}^\eps - \psi^{\eps,(1)}_{\ell,k} \right\|_{H^1_{\rm per}} \le C \eps \quad \mbox{and} \quad 
   \left\|  \psi_{\ell,k}^\eps -\widetilde \psi^{\eps,(2)}_{\ell,k}   \right\|_{H^1_{\rm per}} \le C \eps^2,
  $$
  where $\psi^{\eps,(1)}_{\ell,k} $ and $\widetilde \psi^{\eps,(2)}_{\ell,k}$ are the $L^2_{\rm per}(\Omega)$-normalizations of $\cM^{\eps,(1)} \psi^0_{\ell,k}$ and $\widetilde \cM_{E^0_{\ell,k},k}^{\eps,(2)} \psi^0_{\ell,k}$ respectively introduced in \eqref{eq:first_order_app_eig} and \eqref{eq:second_order_app_eig2}.

\medskip

From~\eqref{eq:SS*} in the case where the eigenvalue is non-degenerate, we get
\[
    \ab{ \langle \psi_{\ell,k}^\eps, \psi^{0, \eps}_{\ell,k} \rangle_{L^2_{\rm per}} }^2 = 1 + O(\eps^2).
\]
Taking square-roots and using the orientation condition $\langle \psi_{\ell,k}^\eps, \psi^{0, \eps}_{\ell,k} \rangle_{L^2_{\rm per}} \in \R_+$, we get
\[
     \langle \psi_{\ell,k}^\eps, \psi^{0, \eps}_{\ell,k} \rangle_{L^2_{\rm per}}  = 1 + O(\eps^2).
\]
Together with \eqref{eq:estim12}, this yields $\left\|  \psi_{\ell,k}^\eps  - \psi^{\eps,(2)}_{\ell,k} \right\|_{H^1_{\rm per}} \le C \eps^2$, where $\psi^{\eps,(2)}_{\ell,k}$ is defined in \eqref{eq:second_order_app_eig}.

\medskip

The bounds on the eigenvalues in \eqref{eq:bounds_order_1}-\eqref{eq:bounds_order_2} are straightforward consequences of the following well-known result.

\begin{lemma}[Approximations of eigenvalues]\label{lem:approx_eigenvalues}
	Let $H$ be an operator on $\ld(\Omega)$ with domain $H^2_{\rm per}(\Omega)$ and form domain $H^1_\per(\Omega)$ such that 
	\begin{equation}\label{eq:2side_bound}
	c_-  \le H  \le C_+ (1-\Delta)
	\end{equation}
	in the sense of quadratic forms, for some constants $c_- \in \R$ and $C_+ \in \R_+$. Let $E \in \R$ be an eigenvalue of $H$ and $\p \in H^2_\per$ an associated normalized eigenvector, i.e.  $\nor{\p}{L^2_\per} = 1$ and $H \p = E \p$. Then, for all $\vp \in H^1_\per$ such that $\nor{\vp}{L^2_\per} = 1$,
 \begin{align} \label{eq:bound_eig_ell}
	 \ab{E - \langle \vp | H| \vp \rangle} \le \max \big( | E-c_-|,C_++\max(-E,0) \big) \nor{\p - \vp}{H^1_\per}^2.
 \end{align}
\end{lemma}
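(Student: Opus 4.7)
The plan is to derive the bound from the single algebraic identity
\[
\langle \varphi | H | \varphi \rangle - E = \langle u | (H-E) | u \rangle, \qquad u := \varphi - \psi,
\]
and then to apply the two-sided operator inequality $c_- \le H \le C_+(1-\Delta)$ to the right-hand side. I would first observe that $\|\psi\|_{L^2_\per} = \|\varphi\|_{L^2_\per} = 1$ lets one rewrite $\langle \varphi | H | \varphi \rangle - E$ as $\langle \varphi | (H - E) | \varphi \rangle$. Expanding $\varphi = \psi + u$ in this quadratic form produces a diagonal term in $\psi$ that vanishes by $H\psi = E\psi$, a cross term $2\,{\rm Re}\,\langle \psi | (H-E) | u \rangle$ that vanishes by self-adjointness together with $(H-E)\psi = 0$ in $L^2_\per$ (legitimate since the eigenvector $\psi$ belongs to the operator domain $H^2_\per$), and the diagonal term in $u$, which is exactly the right-hand side above.

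Next I would bound $\langle u | (H-E) | u \rangle$ from both sides using the quadratic-form hypothesis. The lower bound $\langle u | (H-E) | u \rangle \ge (c_- - E) \|u\|_{L^2_\per}^2$, combined with the continuous embedding $\|u\|_{L^2_\per} \le \|u\|_{H^1_\per}$, yields $\langle u | (H-E) | u \rangle \ge -|E - c_-|\, \|u\|_{H^1_\per}^2$ in both cases $E \ge c_-$ and $E < c_-$. For the upper bound, using the identity $\langle u | (1 - \Delta) | u \rangle = \|u\|_{H^1_\per}^2$, one gets $\langle u | (H-E) | u \rangle \le C_+ \|u\|_{H^1_\per}^2 - E \|u\|_{L^2_\per}^2$; a short case split on the sign of $E$ (and again the embedding $H^1_\per \hookrightarrow L^2_\per$) reduces this to $\big(C_+ + \max(-E, 0)\big) \|u\|_{H^1_\per}^2$.

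Combining the two one-sided bounds into an absolute value gives exactly
\[
|\langle u | (H-E) | u \rangle| \le \max\big(|E-c_-|,\, C_+ + \max(-E,0)\big)\, \|\psi - \varphi\|_{H^1_\per}^2,
\]
which is the claimed inequality. I do not expect any genuine obstacle: the only slightly delicate point is rigorously justifying the vanishing of the cross term as a quadratic form on $H^1_\per$, which is immediate because $\psi \in H^2_\per$ so that the pairing equals $\langle (H-E)\psi, u \rangle_{L^2_\per} = 0$. The asymmetric constant simply reflects the fact that the lower bound on $H$ is a pointwise bound in the $L^2_\per$ sense, whereas the upper bound involves the full $H^1_\per$ norm through $(1-\Delta)$.
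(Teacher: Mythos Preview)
Your proof is correct and follows essentially the same approach as the paper: both establish the identity $\langle \varphi | H | \varphi \rangle - E = \langle \psi-\varphi | (H-E) | \psi-\varphi \rangle$ (you by expanding $\varphi=\psi+u$ and killing the cross term via $(H-E)\psi=0$, the paper by expanding $\langle\varphi|H|\varphi\rangle$ and using the normalization identity $2\,\mathrm{Re}\,\langle\psi,\psi-\varphi\rangle=\|\psi-\varphi\|_{L^2}^2$), and then both apply the two-sided form bound $c_-\le H\le C_+(1-\Delta)$ together with $\|\cdot\|_{L^2_\per}\le\|\cdot\|_{H^1_\per}$ exactly as you do.
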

This shows that if we have an $H^1_\per$-approximation $\vp$ of $\p$ of order $\ep^{\alpha}$, then $\ps{\vp| H | \vp}$ (i.e. the Rayleigh quotient since $\vp$ is normalized) provides an approximation of $E$ of order $\ep^{2\alpha}$.
 \begin{proof}
	 We have
 \begin{align*}
	 2 \re \ps{\p, \p - \vp}_{L^2_{\rm per}} = 2 - 2 \re \ps{\p,\vp}_{L^2_{\rm per}} = \nor{\p - \vp}{\ld}^2,
 \end{align*}
 hence
 \begin{align*}
	  \langle \vp | H| \vp \rangle - E &=  \langle \p -\vp | H| \p- \vp \rangle  - 2 \re  \langle \p  | H| \p- \vp \rangle   \\
	 & = \langle \p -\vp | H| \p- \vp \rangle - 2 E \re \ps{\p, \pa{\p - \vp}}_{L^2_{\rm per}} \\
	 & = \langle \p -\vp | H| \p- \vp \rangle - E \nor{\p - \vp}{\ld}^2 \\
	 & = \langle \p -\vp | H-c_- | \p- \vp \rangle - \pa{E -c_-}\nor{\p - \vp}{\ld}^2 .
 \end{align*}
 We therefore have 
 $$
  - (E-c_-) \nor{\p - \vp}{\ld}^2   \le  \langle \vp | H| \vp \rangle - E \le C_+ \nor{\p - \vp}{H^1_\per}^2 - E \nor{\p - \vp}{\ld}^2
 $$
 hence \eqref{eq:bound_eig_ell}.
  \end{proof}
  
Let us finally prove \eqref{eq:second_order_app}. For this purpose, we need the following lemma on oscillatory integrals.

\begin{lemma} For $f \in L^2_{\rm per}(\Omega)$, $g \in \cC^0_{\rm per}(\Omega;\cC^0_\per(\Omega))$,  and $\eps \in \N^{-1}$, we set
\begin{align*}
I_\eps(f,g) &:= \int_\Omega g_\eps f = \int_\Omega g(x,\eps^{-1}x)  f(x) \, \d x, \\
I_0(f,g) &:= \int_\Omega \left( \fint_\Omega g(x,y) \, \d y \right) f(x) \, \d x.
\end{align*}
Then, for all $m \in \N$, there exists $C_m \in \R_+$ such that for all $f \in H^{m}_{\rm per}(\Omega)$ and $g \in C^{m}_{\rm per}(\Omega; \cC^0_{\rm per}(\Omega))$,
\begin{equation}\label{eq:osc_int}
\left| I_\eps(f,g) - I_0(f,g) \right| \le C_m \; \eps^{m} \, \|f\|_{H^{m}_{\rm per}} 
\|g\|_{C^{m}_{\rm per}(\cC^0_{\rm per})}.
\end{equation}
\end{lemma}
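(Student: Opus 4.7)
The plan is to reduce to the zero-$y$-mean case and then iterate the two-scale integration-by-parts identity already used in the proof of Lemma~\ref{lem:geps}. For $m = 0$, the estimate is immediate from $|I_\eps(f,g)| + |I_0(f,g)| \le 2 |\Omega|^{1/2} \|g\|_{\cC^0_\per(\cC^0_\per)} \|f\|_{L^2_\per}$.

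For $m \ge 1$, I would write $I_\eps(f,g) - I_0(f,g) = \int_\Omega (g - M_g)_\eps \, f \, \d x$. By Lemma~\ref{lem:regularity_assumption}, there exists $h = h_g \in \cC^m_\per(\Omega ; \cC^1_\per(\Omega))$ with $-\Delta_y h = g - M_g$ and $\|h\|_{\cC^m_\per(\cC^1_\per)} \le C_m \|g\|_{\cC^m_\per(\cC^0_\per)}$. The chain rule (already used in~\eqref{eq:g_to_f}) then gives
\[
    (g - M_g)_\eps = \eps \, \big((\nabla_x \cdot \nabla_y) h\big)_\eps - \eps \, \nabla \cdot (\nabla_y h)_\eps,
\]
and integrating by parts against $f$ (all boundary terms vanish by $\L$-periodicity) yields
\[
    I_\eps(f,g) - I_0(f,g) = \eps \int_\Omega \big((\nabla_x \cdot \nabla_y) h\big)_\eps f \, \d x + \eps \int_\Omega (\nabla_y h)_\eps \cdot \nabla f \, \d x,
\]
which proves the case $m = 1$ via Cauchy--Schwarz and~\eqref{eq:elliptic_reg}.

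For $m \ge 2$, I would iterate, using that $(\nabla_x \cdot \nabla_y) h$ and each component of $\nabla_y h$ are two-scale functions with vanishing $y$-mean (they are $y$-derivatives of a $y$-periodic function). Formally, one proves by induction on $k \in \{0, \dots, m\}$ that
\[
    I_\eps(f,g) - I_0(f,g) = \eps^k \sum_{|\alpha| \le k} \int_\Omega \big(B^{(k)}_\alpha\big)_\eps \partial^\alpha f \, \d x,
\]
where each $B^{(k)}_\alpha \in \cC^{m-k}_\per(\Omega ; \cC^0_{\per,0}(\Omega))$ satisfies $\|B^{(k)}_\alpha\|_{\cC^{m-k}_\per(\cC^0_\per)} \le C_{k,m} \|g\|_{\cC^m_\per(\cC^0_\per)}$. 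The inductive step applies the identity above to each $B^{(k)}_\alpha$ in place of $g - M_g$, producing new two-scale factors $(\nabla_x \cdot \nabla_y) h[B^{(k)}_\alpha]$ and $\nabla_y h[B^{(k)}_\alpha]$ of regularities $\cC^{m-k-1}_\per(\cC^0_{\per,0})$ and $\cC^{m-k}_\per(\cC^0_{\per,0})$ respectively. Stopping at $k = m$, each summand is controlled by
\[
    \left|\eps^m \int_\Omega \big(B^{(m)}_\alpha\big)_\eps \partial^\alpha f \, \d x\right| \le \eps^m |\Omega|^{1/2} \|B^{(m)}_\alpha\|_{\cC^0_\per(\cC^0_\per)} \|\partial^\alpha f\|_{L^2_\per} \le C_m \eps^m \|g\|_{\cC^m_\per(\cC^0_\per)} \|f\|_{H^m_\per},
\]
which gives the claim.

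The main obstacle is the regularity bookkeeping in the induction: each iteration loses at most one $x$-derivative of the intermediate two-scale function (through the $(\nabla_x \cdot \nabla_y) h$ term), so starting from $\cC^m_\per$ the procedure closes after exactly $m$ rounds, matching the exponent on the right-hand side. Verifying at every step that each newly generated factor has vanishing $y$-mean (so that Lemma~\ref{lem:regularity_assumption} can be reapplied) is routine by $y$-periodicity.
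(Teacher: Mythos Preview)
Your proposal is correct and follows essentially the same approach as the paper: the paper derives the identity
\[
I_\eps(f,g)=I_0(f,g)-\eps I_\eps(\nabla f,\nabla_y h_g)-\eps I_\eps\big(f,(\nabla_x\cdot\nabla_y)h_g\big)
\]
(up to the sign convention on $h_g$) and then simply says ``we conclude using an elementary recursion argument,'' which is precisely the induction you wrote out explicitly with the functions $B^{(k)}_\alpha$. Your regularity bookkeeping and the observation that each new factor has vanishing $y$-mean are exactly the ingredients that make that recursion close after $m$ steps.
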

  
\begin{proof}
For $m=0$, the inequality \eqref{eq:osc_int} (with $C_0=2|\Omega|^{1/2}$) is a straightforward consequence of the Cauchy-Schwarz inequality. Let us first prove \eqref{eq:osc_int} for $m=1$. Setting
$G(x):=\fint_\Omega g(x,y) \, \d y$, and $h_g$ the unique solution to $\Delta_yh(x,y)=g(x,y)-G(x)$ in $C^{1}_{\per}(\Omega;\cC^1_{\per,0}(\Omega))$, we have 
$$
 \nabla \cdot \left[  (\nabla_y h_g)_\eps \right] = \left( (\nabla_x \cdot \nabla_y) h_g \right)_\eps + \eps^{-1}(\Delta_y h_g)_\eps 
    =  \left( (\nabla_x \cdot \nabla_y) h_g \right)_\eps + \eps^{-1} (g_\eps-G) 
$$
and therefore, as 
\begin{align*}
\int_\Omega \nabla \cdot \left[  (\nabla_y h_g)_\eps \right]  f=- \int_\Omega (\nabla_yh_g)_\eps \cdot \nabla f=- I_\eps (\nabla f, \nabla_yh_g),
\end{align*}
then
\begin{align}
I_\eps(f,g)&=  I_0(f,g) -\eps I_\eps(\nabla f, \nabla_yh_g) -\eps I_\eps(f,(\nabla_x \cdot \nabla_y)h_g).  
\end{align}
Thus, using the inequality \eqref{eq:osc_int} for $m=0$, we get using \eqref{eq:elliptic_reg},
\begin{align*}
& \left| I_\eps(f,g) - I_0(f,g) \right|  \le \eps \left| I_\eps(\nabla f, \nabla_yh_g) + I_\eps(f,(\nabla_x \cdot \nabla_y)h_g) \right| \\
&\qquad  \le C_0 \eps \left( \|\nabla f\|_{L^2_\per} \|\nabla_yh_g\|_{\cC^0_\per(\cC^0_\per)} +  \| f \|_{L^2_\per} \|(\nabla_x \cdot \nabla_y)h_g\|_{\cC^0_\per(\cC^0_\per)} \right) \\
&\qquad \le C_1 \eps \|f\|_{H^1_\per} \|g\|_{\cC^1_\per(\cC^0_\per)}.
\end{align*}
This proves the result for $m=1$. We conclude using an elementary recursion argument.
\end{proof}
  
We now apply the above lemma to expand in powers of $\eps$ the quantity 
$$
\widetilde E^{\eps,(2)}_{\ell,k}:=\langle \widetilde \psi^{\eps,(2)}_{\ell,k} | H^\eps_k | \widetilde \psi^{\eps,(2)}_{\ell,k}\rangle
= \frac{\langle \widetilde\cM^{\eps,(2)}_{E^0_{\ell,k},k}  \psi^{0}_{\ell,k} | H^\eps_k | \widetilde \cM^{\eps,(2)}_{E^0_{\ell,k},k}  \psi^{0}_{\ell,k}\rangle}{\|\widetilde\cM^{\eps,(2)}_{E^0_{\ell,k},k}  \psi^{0}_{\ell,k} \|_{L^2_{\rm per}}^2} , 
$$
which, according to \eqref{eq:bounds_order_2}, is equal to $E^\eps_{\ell,k} + O(\eps^4)$. We have 
\begin{multline}\label{eq:tildeMpsi}
\widetilde\cM^{\eps,(2)}_{E^0_{\ell,k},k}  \psi^{0}_{\ell,k}  \\
= \psi^{0}_{\ell,k} + \eps \phi^{1}_{\ell,k} + \eps (\chi_1)_\eps (\psi^{0}_{\ell,k} + \eps \phi^{1}_{\ell,k})
+ \eps^2 \left( (\chi_2)_\eps \psi^0_{\ell,k} + (A_2)_\eps \cdot \nabla_k \psi^0_{\ell,k} \right),
\end{multline}
where 
$$
\phi^1_{\ell,k} := - \left( H^0_k-E^0_{\ell,k} \right)_\perp^{-1} (V_1\psi^0_{\ell,k}),
$$
and, using \eqref{eq:expansion_mz} with $z=E^0_{\ell,k}$,
\begin{align*}
& H^\eps_k \widetilde \cM^{\eps,(2)}_{E^0_{\ell,k},k}  \psi^{0}_{\ell,k} =
E^0_{\ell,k} \widetilde \cM^{\eps,(2)}_{E^0_{\ell,k},k}  \psi^{0}_{\ell,k}  + \eps \langle \psi^0_{\ell,k} | V_1 |  \psi^0_{\ell,k} \rangle  \psi^0_{\ell,k} \\
& \qquad\qquad\quad + \eps \left( \cJ^{\eps,(1)}_{k,E^0_{\ell,k}} \psi^0_{\ell,k} + \cI_k^{\eps,(0)} \phi^1_{\ell,k} \right)+ \eps^2 \left( \cJ^{\eps,(2)}_{k,E^0_{\ell,k}} \psi^0_{\ell,k} + \cI_{k,E^0_{\ell,k}}^{\eps,(1)} \phi^1_{\ell,k} \right).
\end{align*}
We therefore obtain, using the fact that $\phi^1_{\ell,k} \in (\psi^0_{\ell,k})^\perp$, 
\begin{multline*}
 \|\widetilde\cM^{\eps,(2)}_{E^0_{\ell,k},k}  \psi^{0}_{\ell,k} \|_{L^2_{\rm per}}^2 
 = 
1 + \eps^2 \|\phi^1_{\ell,k}\|_{L^2_{\rm per}}^2 + \eps^2  \|(\chi_1)_\eps \psi^0_{\ell,k}\|_{L^2_{\rm per}}^2  \\
 + 2 \eps^3 {\rm Re} \left( \int_\Omega (\chi_1^2)_\eps \overline{\psi^0_{\ell,k}} \phi^1_{\ell,k} + \int_\Omega (\chi_1\chi_2)_\eps |\psi^0_{\ell,k}|^2 
+ \int_\Omega \overline{\psi^0_{\ell,k}} (\chi_1A_2)_\eps \cdot\nabla \psi^0_{\ell,k} \right) \\
 + O(\eps^4),
\end{multline*}
and 
  \begin{align*}
& \langle \widetilde\cM^{\eps,(2)}_{E^0_{\ell,k},k}  \psi^{0}_{\ell,k} | H^\eps_k |  \widetilde\cM^{\eps,(2)}_{E^0_{\ell,k},k}  \psi^{0}_{\ell,k}\rangle \\
& \qquad = E^0_{\ell,k}
\|\widetilde\cM^{\eps,(2)}_{E^0_{\ell,k},k}  \psi^{0}_{\ell,k} \|_{L^2_{\rm per}}^2 + \eps \langle \psi^0_{\ell,k} | V_1 |  \psi^0_{\ell,k} \rangle \\
& \qquad  + \eps^2 \left\langle (\chi_1)_\eps \psi^0_{\ell,k} \bigg| \cJ^{\eps,(1)}_{k,E^0_{\ell,k}} \psi^0_{\ell,k} + \cI_k^{\eps,(0)} \phi^1_{\ell,k}  \right\rangle  \\
& \qquad + \eps^3 \left\langle (\chi_1)_\eps \psi^0_{\ell,k} \bigg|  \cJ^{\eps,(2)}_{k,E^0_{\ell,k}} \psi^0_{\ell,k} + \cI_{k,E^0_{\ell,k}}^{\eps,(1)} \phi^1_{\ell,k}  \right\rangle \\
&\qquad + \ep^3 \left\langle (\chi_1)_\eps \phi^1_{\ell,k} + (\chi_2)_\eps \psi^0_{\ell,k} + (A_2)_\eps \cdot \nabla_k \psi^0_{\ell,k} \bigg| \cJ^{\eps,(1)}_{k,E^0_{\ell,k}} \psi^0_{\ell,k} + \cI_k^{\eps,(0)} \phi^1_{\ell,k}  \right\rangle  \\
& \qquad + O(\eps^4).
\end{align*}
The terms in the above expansions of 
\begin{align*}
\|\widetilde\cM^{\eps,(2)}_{E^0_{\ell,k},k}  \psi^{0}_{\ell,k} \|_{L^2_{\rm per}}^2 \qquad \text{and} \qquad \langle \widetilde\cM^{\eps,(2)}_{E^0_{\ell,k},k}  \psi^{0}_{\ell,k} | H^\eps_k |  \widetilde\cM^{\eps,(2)}_{E^0_{\ell,k},k}  \psi^{0}_{\ell,k}\rangle
\end{align*}
are sums of elementary terms of the form $I_\eps(f,g)=\int_\Omega g_\eps f$. Using \eqref{eq:osc_int} with $m=4$, we obtain that each of these elementary terms are of the form $C+O(\eps^4)$ where $C \in \C$ is a constant. We conclude that $\widetilde E^{\eps,(2)}_{\ell,k}$ can be expanded in powers of $\eps$:
$$
\widetilde E^{\eps,(2)}_{\ell,k} = E^0_{\ell,k} + \ep \langle \psi^0_{\ell,k} | V_1 |  \psi^0_{\ell,k} \rangle + \eps^2 \eta_{\ell,k}^{(2)} + \eps ^3 \eta_{\ell,k}^{(2)}  + O(\ep^4).
$$
Hence \eqref{eq:second_order_app} since we already know that $E^\eps_{\ell,k} =  \widetilde E^{\eps,(2)}_{\ell,k} + O(\eps^4)$.

\subsection{Proof of Theorems~\ref{thm:phys_quantities1} and~\ref{thm:phys_quantities2} (quantities of interest)}

\label{sec:QoI}

\subsubsection{Proof of Theorem~\ref{thm:phys_quantities1} (kinetic and potential energies)} 
\label{sec:proof_phys_quantities1}

In this section we use the notation $\p^\eps := \pE$, $\p^0 := \pz$, and $\phi^1:=\phi^{1}_{\ell,k}$  to alleviate the reading. From \eqref{eq:bound_unnorm}, and setting
\begin{align*}
\vp^\ep := \psi^0+\eps\phi^1+\eps(\chi_1)_\eps  (\psi^0+\eps\phi^1) + \eps^2 \left( (\chi_2)_\eps \psi^0+(A_2)_\eps \cdot\nabla_k\psi^0 \right)
\end{align*}
we get $\psi^\eps= \vp^\ep+ O_{H^1_{\rm per}}(\eps^2)$, so that, 
\begin{align*}
 T^\eps_{\ell,k} =  \int_\Omega \ab{ \na_k \vp^\ep }^2 + O(\eps^2)\qquad
V^\eps_{\ell,k} = \int_\Omega (\eps^{-1} v_\eps+W) \ab{\vp^\ep}^2 + O(\eps^2).
\end{align*}
Using \eqref{eq:osc_int} and the equalities
\begin{align*}
&\na_k \left( \psi^0+\eps\phi^1+\eps(\chi_1)_\eps  (\psi^0+\eps\phi^1) + \eps^2 \left( (\chi_2)_\eps \psi^0+(A_2)_\eps \cdot\nabla_k\psi^0 \right) \right) \\
&\qquad = \na_k\psi^0+\eps\na_k\phi^1+\eps(\chi_1)_\eps  \na_k\psi^0+ \eps (\nabla_x\chi_1)_\eps  \psi^0  + (\nabla_y\chi_1)_\eps (\psi^0+\eps\phi^1)   \\
& \qquad \qquad +\eps \left( (\nabla_y\chi_2)_\eps \psi^0+(\d_yA_2)_\eps \cdot\nabla_k\psi^0 \right) +O_{L^2_{\rm per}}(\eps^2),
\end{align*}
\begin{align*}
{\rm Re} \langle \na_k\psi^0|\na_k\phi^1 \rangle &= {\rm Re}  \langle -\na_k^2\psi^0|\phi^1 \rangle = - \int_\Omega (V+W-E^0_{\ell,k})  {\rm Re}(\overline{\psi^0}\phi^1) \\
&=  - \int_\Omega (V+W)  {\rm Re}(\overline{\psi^0}\phi^1) ,
\end{align*}
and
\begin{align*}
& \int_\Omega |\nabla_y\chi_1(x,y)|^2 \, \d y = - V(x), \quad \int_\Omega \chi_1(x,y) \nabla_y\chi_1(x,y) \, \d y = 0, \\
& \int_\Omega \nabla_x \chi_1(x,y) \nabla_y\chi_1(x,y) \, \d y = -F(x), \\
&\int_\Omega \nabla_x \chi_1(x,y) \nabla_y\chi_2(x,y) \, \d y = 2F(x)-G(x), \\
& \int_\Omega (\nabla_x \chi_1(x,y))^T \d_yA_2(x,y) \, \d y = 0, \\
&\int_\Omega v(x,y) \chi_1(x,y)^2 \, \d y = G(x), \quad \int_\Omega v(x,y) \chi_2(x,y) \, \d y = G(x) - 2F(x), \\
&  \int_\Omega v(x,y) A_2(x,y) \, \d y = 0,
\end{align*}
we obtain
\begin{align*}
T^\eps_{\ell,k}& =  \| \na_k\psi^0+\eps\na_k\phi^1  \|_{L^2_{\rm per}}^2+   \| (\nabla_y\chi_1)_\eps(\psi^0+\eps\phi^1)  \|_{L^2_{\rm per}}^2 \\
	       & \qquad + 2 \eps \, {\rm Re} \left( \langle (\nabla_y\chi_1)_\eps \psi^0 | (\chi_1)_\eps  \na_k\psi^0+  (\nabla_x\chi_1)_\eps  \psi^0  + (\nabla_y\chi_2)_\eps \psi^0\rangle \right)  \\
	       & \qquad + 2 \eps \, {\rm Re} \left( \langle (\nabla_y\chi_1)_\eps \psi^0 | (\d_yA_2)_\eps \cdot\nabla_k\psi^0 \rangle \right) + O(\eps^2) \\
& = T^0_{\ell,k} + 2\eps  \, {\rm Re}(\langle \na_k\psi^0|\na_k\phi^1 \rangle) + \int_\Omega (|\nabla_y\chi_1|^2)_\eps \left(|\psi^0|^2+ 2\eps \, {\rm Re}(\overline{\psi^0}\phi^1)\right)  \\ 
& \qquad + 2\eps  \, {\rm Re} \left( \int_\Omega  \overline{\psi^0} (\chi_1 \nabla_y\chi_1)_\eps \cdot  \nabla_k \psi^0 \right) + 2\eps  \int_\Omega (\nabla_x\chi_1 \cdot \nabla_y\chi_1)_\eps  |\psi^0|^2 \\
&\qquad+  2 \eps \int_\Omega (\nabla_y\chi_1 \cdot \nabla_y\chi_2)_\eps  |\psi^0|^2 + 2\eps  \, {\rm Re} \left( \int_\Omega  \overline{\psi^0} ((\nabla_y\chi_1)^T \d_yA_2)_\eps \nabla_k \psi^0 \right) \\
& \qquad\qquad\qquad\qquad+ O(\eps^2) \\
&= T^0_{\ell,k} - V_{\ell,k} \\
& + \eps \left(2 {\rm Re}(\langle \na_k\psi^0|\na_k\phi^1 \rangle) - 2 \int_\Omega V {\rm Re}(\overline{\psi^0}\phi^1) + 2 \int_\Omega (F-G) |\psi^0|^2  \right)  + O(\eps^2)\\
&= T^0_{\ell,k} - V_{\ell,k} + \eps \left(2 \int_\Omega (F-G) |\psi^0|^2 - 2 \int_\Omega (2V+W)   {\rm Re}(\overline{\psi^0}\phi^1)   \right) + O(\eps^2)
\end{align*}
and
\begin{align*}
V^\eps_{\ell,k}&= \int_\Omega W \left|\psi^0+\eps\phi^1 \right|^2 +  \eps \int_\Omega (v\chi_1^2)_\eps |\psi^0|^2 + 2 \int_\Omega (v\chi_1)_\eps \left|\psi^0+\eps\phi^1 \right|^2 
\\ & \qquad + 2 \eps \int_\Omega (v\chi_2)_\eps |\psi^0|^2 + 2 \eps \int_\Omega {\rm Re}(\overline{\psi^0} (vA_2)_\eps \cdot \nabla_k\psi^0)  + O(\eps^2) \\
&= \int_\Omega W \left|\psi^0 \right|^2 + 2 \eps \int_\Omega W \, {\rm Re}(\overline{\psi^0}\phi^1) + 2 \int_\Omega V  \left|\psi^0 \right|^2 + 4 \eps \int_\Omega V \, {\rm Re}(\overline{\psi^0}\phi^1) \\ 
& \qquad + \eps \int_\Omega G|\psi^0|^2 + \eps \int_\Omega (2G-4F)|\psi^0|^2 + O(\eps^2) \\
&= V^0_{\ell,k} + V_{\ell,k} + \eps \left( \int_\Omega (3G-4F) |\psi^0|^2 + 2 \int_\Omega (2V+W) \, {\rm Re}(\overline{\psi^0}\phi^1)  \right) \\
   &\qquad\qquad\qquad\qquad+ O(\eps^2).
\end{align*}

To show \eqref{eq:H2ep_approx}, we can write the Schrödinger equation satisfied by $\p^\ep_{\ell,k}$ and multiply it by $\ep$, yielding
\begin{align*}
\ep \Delta \p^\ep_{\ell,k} = v_\ep \p^\ep_{\ell,k} -\ep2ik \cdot \na \p^\ep_{\ell,k} + \ep \pa{\ab{k}^2 + W - E^\ep_{\ell,k}}  \p^\ep_{\ell,k}.
\end{align*}
Taking the square and integrating, we obtain
\begin{align*}
\ep^2 \int_\Omega \ab{\Delta \p^\ep_{\ell,k} }^2 = \int_\Omega v_\ep^2 \ab{\p^\ep_{\ell,k} }^2 + O(\ep),
\end{align*}
and finally we use the approximation $\p_{\ell,k}^\ep = (1+\eps(\chi_1)_\eps)\p^0_{\ell,k} + O_{H^1_{\per}}(\ep)$.

\subsubsection{Proof of Theorem~\ref{thm:phys_quantities2} (integrated density of states)} 
\label{sec:QoI2}

If $E \notin \sigma(H^0)$, we deduce from Theorem~\ref{thm:unif_CV_eigenvalues} that for $\eps \in \N^{-1}$ small enough, $\cN^\eps(E)=\cN^0(E)=\ell_*$, where $\ell_*=0$ if $E < \min \sigma(H^0)$ and $\ell_*$ is the unique integer such that 
\begin{align*}
\max_{k \in \Omega^*}E^0_{\ell_*,k} < E < \min_{k \in \Omega^*}E^0_{\ell_*+1,k}
\end{align*}
if $E > \min \sigma(H^0)$.

\medskip

Let us now assume that $E$ is a non-degenerate point of the spectrum of $H^0$. Define
\begin{align*}
&S^{0,E}_{\ell}:=\{k \in \Omega^* \, | \, E^0_{\ell,k} = E\}, \quad S^{0,E}:=\bigcup_{\ell \in \N^*} S^{0,E}_\ell, \\
&S^{\eps,E}_{\ell}:=\{k \in \Omega^* \, | \, E^\eps_{\ell,k} = E\}, \quad S^{\eps,E}:=\bigcup_{\ell \in \N^*} S^{\eps,E}_\ell.
\end{align*}
Since $\min_{k \in \Omega^*} E^0_{\ell,k} \to +\infty$ when $\ell \to \infty$, only a finite number of sets $S^{0,E}_\ell$ are non-empty. Let $1 \le \ell_- \le \ell_+ < \infty$ be such that 
$S^{0,E}_\ell \neq \emptyset$ if and only if $\ell_- \le \ell \le \ell_+$. Relying again on Theorem~\ref{thm:unif_CV_eigenvalues}, we also have for $\eps > 0$ small enough, $S^{\eps,E}_\ell \neq \emptyset$ if and only if $\ell_- \le \ell \le \ell_+$.

\medskip

Recall that for each $\ell \in \N^*$, the function $\R^d \ni k \mapsto E^0_{\ell,k} \in \R$ is $\L^*$-periodic, Lipschitz continuous, and real-analytic at each point $k$ such that $E^0_{\ell-1,k} < E^0_{\ell,k} < E^0_{\ell+1,k}$.
Let $\ell_- \le \ell \le \ell_+$. In view of the non-degeneracy conditions (C2) and (C3), the set $S^{0,E}_\ell$, seen as a subset of the torus $\R^d/\L^*$, is a compact ${(d-1)}$-dimensional submanifold of $\R^d/\L^*$, and there exists $\delta_0 > 0$, such that for all $0 < \delta \le \delta_0$ and all $k \in \Omega^*$, $\sigma(H^0_k) \cap [E-\delta,E+\delta]$ either is empty or consists of one non-degenerate eigenvalue of $H^0_k$. 
We define 
\begin{align*}
\omega_\delta^*:=\{k \in \overline{\Omega^*} \, | \, \sigma(H^0_k) \cap [E-\delta,E+\delta] \neq \emptyset\}.
\end{align*}
Note that in view of (C3), $\omega_\delta^*$ is a compact neighborhood of $S^{0,E}$ for each $\delta > 0$. Besides, from (C3) and the smoothness of $k \mapsto E_{\ell,k}$ away from eigenvalue crossings, there exists a positive constant $c_\nabla > 0$ such that
$$
\forall 0 < \delta \le \delta_0, \quad  \forall k \in \omega_\delta^*, \quad \forall \ell \mbox{ s.t. } E^0_{\ell,k} \in [E-\delta,E+\delta], \quad |\partial_k E^0_{\ell,k}| \ge c_\nabla.
$$
This implies that there exists a constant $M \in \R_+$ depending on $c_\nabla$, the Haussdorff measure of $S^{0,E}$, and its curvature, such that, up to reducing the value of $\delta_0$,
$$
\forall \; 0 < \delta \le \delta_0, \quad |\omega_\delta^*| \le M \delta.
$$
In view of Theorem~\ref{thm:unif_CV_eigenvalues}, there exists $\eps_0 > 0$ such that for all $0 < \eps \le \eps_0$, 
$$
\max_{\ell_- \le \ell \le \ell_+} \max_{k \in \Omega^*} |E^\eps_{\ell,k}-E^0_{\ell,k}| \le C_* \eps \le \delta_0 ,
$$
with $C_* := \max_{\ell_- \le \ell \le \ell_+} C_\ell$. It follows that for all $0 < \eps \le \eps_0$, 
\begin{align*}
\cN^\eps(E) &= \cN^0(E) + \sum_{\ell=\ell_-}^{\ell_+} \fint_{\Omega^*} \left(\1(E^\eps_{\ell,k} \le E)- \1(E^0_{\ell,k} \le E) \right) \, \d k \\
 &= \cN^0(E) + \sum_{\ell=\ell_-}^{\ell_+} \fint_{\Omega^*} \1(E^\eps_{\ell,k} \le E < E^0_{\ell,k}) \, \d k \\
 & \qquad \qquad\qquad\qquad\qquad\qquad - \sum_{\ell=\ell_-}^{\ell_+} \fint_{\Omega^*} \1(E^0_{\ell,k} \le E < E^\eps_{\ell,k}) \, \d k,
\end{align*}
with 
\begin{align*}
	\bigcup_{\ell=\ell_-}^{\ell_+} \{ k \in \Omega^* \, | \ E^\eps_{\ell,k} \le E < E^0_{\ell,k}\} &\subset \omega_{C_*\eps}^* \\
	\bigcup_{\ell=\ell_-}^{\ell_+} \{ k \in \Omega^* \, | \, E^0_{\ell,k} \le E < E^\eps_{\ell,k}\} &\subset \omega_{C_*\eps}^*,
\end{align*}
so that
\begin{align*}
\forall 0 < \eps \le \eps_0, \quad | \cN^\eps(E) - \cN^0(E) | \le \frac{|\omega_{C_*\eps}^* |}{|\Omega^*|} \le \frac{MC_*}{|\Omega^*|} \eps.
\end{align*}
Hence, $\cN^\eps(E)=\cN^0(E)+O(\eps)$. Likewise, using the results of Theorem~\ref{thm:eigenmode_nondeg} on degenerate eigenvalues with $\omega^*=\omega^*_{\delta_0}$, we have that there exists $\eps_0 > 0$ and $C \in \R_+$ such that for all $k \in \omega^*_{\delta_*}$, $0 < \eps \le \eps_0$, and $\ell_- \le \ell \le \ell_+$ such that $E^0_{\ell,k} \in [E-\delta_0,E+\delta_0]$, 
\begin{align*}
&\left| E^\eps_k - \left( E^0_{\ell,k} + \eps \int_\Omega V_1 |\psi^0_{\ell,k}|^2 \right) \right| \le C\eps^2, \\
& \qquad \left| E^\eps_k - E^{\eps,(2)}_{\ell,k} \right| \le C\eps^4, \quad 
\left| E^\eps_k - \widetilde E^{\eps,(2)}_{\ell,k} \right| \le C\eps^4.
\end{align*}
Observing that
\begin{multline*}
\cN^\eps(E) = \sum_{\ell \in \N^*} \fint_{\Omega^*} \1(E^{\eps,(2)}_{\ell,k} \le E) \, \d k + \sum_{\ell=\ell_-}^{\ell_+} \fint_{\Omega^*} \1(E^\eps_{\ell,k} \le E < E^{\eps,(2)}_{\ell,k}) \, \d k \\
- \sum_{\ell=\ell_-}^{\ell_+} \fint_{\Omega^*} \1(E^{\eps,(2)}_{\ell,k} \le E < E^\eps_{\ell,k}) \, \d k
\end{multline*}
and reasoning as above, we obtain 
$$
\cN^\eps(E) = \sum_{\ell \in \N^*} \fint_{\Omega^*} \1(E^{\eps,(2)}_{\ell,k} \le E) \, \d k + O(\eps^4),
$$
and similarly the remaining two estimates of $\cN^\eps(E)$.

\appendix

 \section{Other scalings} 
 \label{sec:appendix_other_scalings}

In this Appendix, we study operators of the form 
\begin{align*}
	H^{\ep,\zeta}_k := -\nabla_k^2 + \ep^{-\zeta} v\pa{x,\eps^{-1} x} + W(x) \quad \mbox{ for } \ep \in \N^{-1},
\end{align*}
and prove that only the scaling $\zeta = 1$ provides interesting features. 
We denote by $E^{\ep,\zeta}_{\ell,k}$ the $\ell^{\textup{th}}$ eigenvalue of $H^{\ep,\zeta}_k$.

\begin{proposition}
    \label{prop:other_scalings} Let $k \in \Omega^*$ and $\ell \in \N^*$.
    \begin{description}
\item[Case $\zeta < 1$] Assume that the $\ell^{\textup{th}}$ eigenmode $\big( \lambda^{0}_{\ell,k}, \vp^{0}_{\ell,k} \big)$ of $-\nabla_k^2 + W$, is non-degenerate. Then, for $\eps \in \N^{-1}$ small enough, $E^{\ep,\zeta}_{\ell,k}$ is non-degenerate, and if $\p^{\ep,\zeta}_{\ell,k}$ is an associated $\ell^{\textup{th}}$ $L^2_{\rm per}$-normalized eigenfunction of $H^{\ep,\zeta}_k$ such that $\big\langle \p^{\ep,\zeta}_{\ell,k},\vp^{0}_{\ell,k}\big\rangle_{L^2_\per} \ge 0$, then 
    \begin{align}\label{eq:relxi}
        \nor{\p^{\ep,\zeta}_{\ell,k} - \vp^{0}_{\ell,k}}{H^1_\per} \le c \ep^{1 - \zeta}, \qquad \ab{E^{\ep,\zeta}_{\ell,k} - \lambda^0_{\ell,k}} \le c \ep^{1-\zeta}.
    \end{align}
\item[Case $\zeta > 1$] Assume that $\na_y v \neq 0$. Then, for all $\ell \in \N$ and $k \in \Omega^*$, we have $E^{\ep,\zeta}_{\ell,k} \rightarrow -\infty$ when $\eps \rightarrow 0$.
\end{description}
\end{proposition}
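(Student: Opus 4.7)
For the case $\zeta < 1$, I would adapt the strategy of Section~\ref{sec:proof_thm_res}, treating the oscillating term $\ep^{-\zeta}v_\ep$ as a small perturbation of zero. Applying Lemma~\ref{lem:geps} to $g = v$ gives $\|\ep^{-\zeta}v_\ep\|_{H^1_\per \to H^{-1}_\per} \le C\ep^{1-\zeta}$, while an argument analogous to Lemma~\ref{basiclemma} shows that $H^{\ep,\zeta}_k + \mu$ remains uniformly bounded below in $\ep \in \N^{-1}$ for $\mu$ large enough, providing a uniform $H^{-1}_\per \to H^1_\per$ bound on the resolvent. The second resolvent identity then yields
\[
\bigl\|(H^{\ep,\zeta}_k - z)^{-1} - (-\nabla_k^2 + W - z)^{-1}\bigr\|_{L^2_\per \to H^1_\per} \le C\ep^{1-\zeta}.
\]
A Cauchy residue argument exactly as in Section~\ref{sec:proof_eigemode_deg} then gives the non-degeneracy of $E^{\ep,\zeta}_{\ell,k}$ for $\ep$ small as well as the eigenvector bound in \eqref{eq:relxi}; the eigenvalue bound follows from Lemma~\ref{lem:approx_eigenvalues}, combined with the fact that the shift $\ep^{-\zeta}\int_\Omega v_\ep |\vp^0_{\ell,k}|^2$ of the Rayleigh quotient is $O(\ep^m)$ for every $m \in \N$, by the oscillatory integral estimate~\eqref{eq:osc_int}.

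For the case $\zeta > 1$, I would apply the Courant--Fischer min-max principle to a family of $\ell$ trial functions whose Rayleigh quotients all tend to $-\infty$. The construction is guided by a one-cell auxiliary problem: pick $x_0 \in \Omega$ such that $\nabla_y v(x_0, \cdot) \not\equiv 0$, and let $\mu_\ep \in \R$, together with an $L^2_\per$-normalized associated eigenfunction $\chi_\ep$, denote the ground-state eigenmode of the periodic operator $L_\ep := -\Delta_y + \ep^{2-\zeta} v(x_0, \cdot)$ on $L^2_\per(\Omega)$. Since $v(x_0, \cdot)$ has zero mean but is not constant, the constant-function Rayleigh quotient yields $\mu_\ep < 0$ strictly; moreover, Rayleigh--Schrödinger perturbation theory in the regime $1 < \zeta < 2$ (giving $\mu_\ep \sim \ep^{2(2-\zeta)} \mu^{(2)}$ with $\mu^{(2)} < 0$), the $\ep$-independence of $L_\ep$ at $\zeta = 2$, and semiclassical asymptotics for $\zeta > 2$ (giving $\mu_\ep \sim \ep^{2-\zeta} \min v(x_0, \cdot)$ with $\min v(x_0, \cdot) < 0$) together imply $\ep^{-2}\mu_\ep \to -\infty$ in all cases.

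Given $\ell \in \N^*$, I would then choose $L^2(\Omega)$-normalized smooth bumps $u_1,\dots,u_\ell$ with pairwise disjoint supports in a small neighborhood of $x_0$, and form the trial functions $\psi^{\ep,(j)}(x) := u_j(x)\, \chi_\ep(\ep^{-1}x)$. The disjointness of the supports makes both the off-diagonal Gram entries $\langle \psi^{\ep,(i)}, \psi^{\ep,(j)}\rangle$ and the off-diagonal quadratic-form entries $\langle \psi^{\ep,(i)}, H^{\ep,\zeta}_k \psi^{\ep,(j)}\rangle$ vanish identically. An explicit computation using the eigenvalue equation $L_\ep \chi_\ep = \mu_\ep \chi_\ep$, the oscillatory integral estimate~\eqref{eq:osc_int}, and a Taylor expansion of $v(x,y)$ around $x=x_0$ inside the integrand, yields
\[
\frac{\langle \psi^{\ep,(j)}, H^{\ep,\zeta}_k \psi^{\ep,(j)}\rangle}{\|\psi^{\ep,(j)}\|_{L^2_\per}^2} = \ep^{-2}\mu_\ep + O(1) \underset{\ep \to 0}{\longrightarrow} -\infty,
\]
and the min-max principle yields $E^{\ep,\zeta}_{\ell,k} \to -\infty$.

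The main technical obstacle resides in the second case: one must justify the claimed asymptotic behaviour of $\mu_\ep$ separately in the three regimes $1 < \zeta < 2$, $\zeta = 2$, and $\zeta > 2$ (in the last of which $\chi_\ep$ concentrates at the minima of $v(x_0, \cdot)$ and its derivatives blow up as $\ep \to 0$, so that $\chi_\ep$ is no longer uniformly smooth), and then verify in each case that the divergent contribution $\ep^{-2}\mu_\ep$ survives in the full Rayleigh quotient modulo bounded corrections coming from $W$, from $k$, from the $x$-dependence of $v$ near $x_0$, and from the kinetic energy of the macroscopic bumps $u_j$.
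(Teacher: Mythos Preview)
Your treatment of the case $\zeta < 1$ is correct and essentially parallel to the paper's, only slightly more direct: the paper inserts a corrector $1+\ep^{2-\zeta}(\chi_1)_\eps$ before deriving the resolvent bound, whereas you observe (correctly) that Lemma~\ref{lem:geps} already gives $\|\ep^{-\zeta}v_\eps\|_{H^1_\per\to H^{-1}_\per}\le C\ep^{1-\zeta}$, so the second resolvent identity applies without any corrector. Both routes lead to the same resolvent estimate and the same Cauchy-integral conclusion.

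For $\zeta>1$ your construction is genuinely different from the paper's and is the source of all the technical difficulties you flag. The paper avoids the one-cell auxiliary problem entirely: it takes as trial functions the eigenvectors $\psi^\ep_{1,k},\ldots,\psi^\ep_{\ell,k}$ of the \emph{already analysed} operator $H^\ep_k$ (i.e.\ the case $\zeta=1$), and writes
\[
\big\langle \psi^\ep_{j,k},\,H^{\ep,\zeta}_k\psi^\ep_{j,k}\big\rangle
= E^\ep_{j,k} + \bigl(\ep^{1-\zeta}-1\bigr)\int_\Omega \ep^{-1}v_\eps\,|\psi^\ep_{j,k}|^2.
\]
The last integral converges to $2\int_\Omega V\,|\psi^0_{j,k}|^2<0$ by Theorem~\ref{thm:phys_quantities1} (see~\eqref{eq:lim_kinetic_eps}) together with unique continuation, so each Rayleigh quotient tends to $-\infty$; the min-max principle then gives the result. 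This bypasses all three regimes $1<\zeta<2$, $\zeta=2$, $\zeta>2$, requires no semiclassical analysis, no control of the smoothness of $\chi_\ep$, and no Taylor expansion of $v$ in $x$. Your approach could in principle be made to work, but the Taylor remainder $\ep^{-\zeta}\bigl(v(x,\cdot)-v(x_0,\cdot)\bigr)$ is of the same order as, or larger than, your main term $\ep^{-2}\mu_\ep$ when $1<\zeta<2$ unless you also shrink the supports of the $u_j$ with $\ep$, which then inflates their kinetic energy; balancing these competing effects is delicate and unnecessary given the paper's shortcut.
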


 \begin{proof} $\,$
 
 \noindent
 {\em Case $\zeta < 1$.} Let $h^0_{k} := -\na_k^2 + W$. It holds
\begin{align*}
    \pa{H^{\ep,\zeta}_k -z} \pa{1 + \ep^{2-\zeta} (\chi_1)_\eps} = \pa{h^0_k -z} + \ep^{1-\zeta} \cS_\eps + \ep^{2\pa{1-\zeta}} V_0
\end{align*}
where
\begin{multline*}
    \cS_\eps := -2 \pa{(\na_x \cdot \na_y) \cu}_\eps  + \ep^{1-\zeta} \pa{\pa{\Delta_y \eta}_\eps -2 \pa{\na_y \cu}_\eps \cdot \na_k} \\
    + \eps \pa{-\pa{\Delta_x \cu}_\eps + (\chi_1)_\eps h^0_k} + \ep^{2-\zeta} \pa{-\pa{\Delta_x \cu}_\eps - 2 \pa{\na_x \cu}_\eps \cdot \na_k}
\end{multline*}
is a highly-oscillating operator. We deduce that
\begin{multline*}
    \pa{H^{\ep,\zeta}_k -z}^{-1} - \pa{1 + \ep^{2-\zeta} (\chi_1)_\eps} \pa{h^0_k -z}^{-1} \\
    = - \ep^{1-\zeta} \pa{H^{\ep,\zeta}_k -z}^{-1} \pa{\cS_\eps + \ep^{1-\zeta}V_0} \pa{h^0_k -z}^{-1}.
\end{multline*}
With a similar proof as in the $\zeta = 1$ case,
\begin{align*}
    \nor{\pa{H^{\ep,\zeta}_k -z}^{-1} - \pa{1 + \ep^{2-\zeta} (\chi_1)_\eps} \pa{h^0_k -z}^{-1}}{L^2_\per \rightarrow H^1_\per} \le c \ep^{1-\zeta},
\end{align*}
hence we also have
\begin{align*}
    \nor{\pa{H^{\ep,\zeta}_k -z}^{-1} - \pa{h^0_k -z}^{-1}}{L^2_\per \rightarrow H^1_\per} \le c \ep^{1-\zeta},
\end{align*}
where $c$ does not depend on $\eps$ or $k$. In particular, by using the Cauchy formula, we obtain \eqref{eq:relxi}.

\medskip

 \noindent
 {\em Case $\zeta > 1$.} We recall that $\p^\ep_{\ell,k}$ denotes a normalized eigenvector of $H^\ep_k$ (that is for $\zeta = 1$) associated with the $\ell^{\rm th}$ eigenvalue $E^\eps_{\ell,k}$. We have
\begin{align*}
    \ps{\p^\ep_{\ell,k}, H^{\ep,\zeta}_k \p^\ep_{\ell,k}} &= E^\ep_{\ell,k} + \f{1-\ep^{\zeta -1}}{\ep^{\zeta -1}} \int_\Omega \f{v_\ep}\eps \ab{\p^\ep_{\ell,k}}^2  \\
							  &= E^\ep_{\ell,k} + \f{1-\ep^{\zeta -1}}{\ep^{\zeta -1}} \pa{2 \int_\Omega V_0 \ab{\p^0_{\ell,k}}^2 + O(\ep)},
\end{align*}
where we used \eqref{eq:osc_int} to obtain the second equality. We have $V_0 \le 0$, and $V_0 \neq 0$ since $\na_y v \neq 0$, so by unique continuation \cite{JerKen85}, we know that $\ab{\acs{x \in \Omega \; \bigr\vert \; \p^0_{\ell,k}(x) = 0}} = 0$ hence $\int_\Omega V_0 |\p^0_{\ell,k}|^2 < 0$ and $\big\langle\p^\ep_{\ell,k}, H^{\ep,\zeta}_k \p^\ep_{\ell,k}\big\rangle_{L^2_\per} \rightarrow -\infty$. By the min-max characterization of eigenvalues, the $\ell^{\textup{th}}$ eigenvalue of $H^{\ep,\zeta}_k$ goes to $-\infty$ when $\eps \rightarrow 0$.
	  \end{proof}

\section{Higher-order formal two-scale expansion}
\label{sec:detail_ansatz}

In this Appendix, we formally derive the next order terms in the expansion. We go back to~\eqref{eq:eps1}, and continue the computations. After some manipulations, using in particular the identity
\begin{align*}
    & \left(- \na_{x,k}^2 +W(x) - E^0_{\ell, k} \right) \left[\chi_1 \p_0(x) +U_1(x) \right]
    \\
    & \quad 
    =  \left( H^0_k - E^0_{\ell, k} \right)  \left[\chi_1 \p_0(x) +U_1(x) \right] - V(x) \chi_1 \psi_0(x) - V(x) U_1(x) \\
    & \quad = (\nabla_{x}^2 \chi_1) \psi_0(x) + 2 \nabla_x \chi_1 \cdot \nabla_{k} \psi_0(x) + (E_1 - V_1) \psi_{\ell, k}^0 - V(x) \chi_1 \psi_0(x)\\
    & \qquad\qquad\qquad\qquad\qquad\qquad\qquad\qquad\qquad\qquad\qquad\qquad\qquad - V(x) U_1(x),
\end{align*}
we get
\begin{align*}
    \Delta_y \p_3 = &\pa{-\na_{x,k}^2+W(x)- E^0_{\ell,k}} \pa{\chi_1 \p^0_{\ell,k}(x)} - V_1(x) \p^0_{\ell,k}(x) - V(x)U_1(x) \\
    & + \pa{v-2 \na_{x,k} \cdot  \na_y } \p_2,
\end{align*}
and thus, using the expression \eqref{eq:psi2} of $\psi_2$, and introducing the functions 
$\beta,\delta,\xi \in L^\infty_{\rm per,0}(\Omega \times \Omega)$ 
and the field $B_1 \in L^\infty_{\rm per,0}(\Omega \times \Omega)^d$, uniquely defined by 
\begin{align*}
    &   \Delta_y \beta = 4\au, \qquad  \Delta_y \delta = \eta , \qquad  \Delta_y \xi= v\chi_2 - V_1, \qquad  \Delta_y B_1 = vA_2 ,
\end{align*}
we obtain
\begin{multline*}
	\Delta_y \p_3 = \pa{-\na_{x,k}^2 +W- E^0_{\ell,k}} \pa{\p^0_{\ell,k} \Delta_y \au} + \p^0_{\ell,k} \Delta_y \xi + U_1 \Delta_y \eta  \\
		      + \pa{\Delta_y B_1}\cdot \na_{k} \p^0_{\ell,k} + U_2 \Delta_y \chi_1  - 2 (\na_{x,k} \cdot  \na_y)  \pa{\p^0_{\ell,k} \Delta_y \delta} \\
		      +  (\na_{x,k} \cdot  \na_y) (\na_{x,k} \cdot  \na_y) \pa{\p^0_{\ell,k} \Delta_y \beta} -2 (\na_{x,k} \cdot  \na_y) \pa{U_1 \Delta_y \au}.
\end{multline*}
Therefore,
\begin{multline*}
    \p_3 = \pa{-\na_{x,k}^2 +W- E^0_{\ell,k}} \pa{\p^0_{\ell,k}  \au} + \xi \p^0_{\ell,k}  +B_1 \cdot \na_{k} \p^0_{\ell,k}   - 2 (\na_{x,k} \cdot  \na_y)  \pa{\p^0_{\ell,k} \delta}\\
    +  (\na_{x,k} \cdot  \na_y)(\na_{x,k} \cdot  \na_y) \pa{\p^0_{\ell,k} \beta} -2 (\na_{x,k} \cdot  \na_y) \pa{U_1  \au}  + U_1  \eta+ U_2 \chi_1+ U_3,
\end{multline*}
where the function $U_3$ only depends on the macroscopic variable $x$.
Rearranging the term
\begin{align*}
    &\pa{-\na_{x,k}^2+W - E^0_{\ell,k}} \pa{\au \p^0_{\ell,k}} \\
    & \qquad = \au \pa{-\na_{k}^2 +W- E^0_{\ell,k}} \p^0_{\ell,k} - 2 \na_x \au\cdot \na_{k} \p^0_{\ell,k}  - \p^0_{\ell,k} \Delta_x \au \\
    & \qquad = - \pa{ V \au + \Delta_x \au} \p^0_{\ell,k}- 2  \na_x \au\cdot \na_{k} \p^0_{\ell,k},
\end{align*}
and observing that for all $f:\R^d \times \R^d \to \C$ and $\phi:\R^d \to \C$ regular enough,
\begin{align*}
    (\na_{x,k} \cdot  \na_y) \pa{ f\phi} &= \na_y f \cdot \na_{k} \phi +( (\nabla_x \cdot  \na_y) f)  \phi, \\
    \pa{\na_{x,k} \cdot \na_y}^2 \pa{f\phi} &=  \pa{\na_y^{\otimes 2} f} : \pa{\na_k^{\otimes 2}\phi}  + 2 \pa{\na_y(\na_x\cdot\na_y)f} \cdot \na_k\phi\\
					    &\qquad\qquad\qquad\qquad\qquad\qquad\qquad\qquad + \pa{\pa{\na_{x} \cdot \na_y}^2 f}  \phi,
\end{align*}
we finally get
\begin{multline*}
    \p_3 =  \chi_3 \p^0_{\ell,k}(x)  + A_3 \cdot \nabla_k\p^0_{\ell,k}(x) + M_3 : \pa{\na_k^{\otimes2} \p^0_{\ell,k}}(x) \\
     + \chi_2 U_1(x)  +A_2  \cdot \na_{k} U_1(x) + \chi_1 U_2(x)  + U_3(x),
\end{multline*}
with 
\begin{align*}
    \chi_3&:=-V \au - \Delta_x \au + \xi  - 2 (\nabla_x \cdot  \na_y)  \delta  +  \pa{\na_{x} \cdot  \na_y}^2 \beta , \\
    A_3&:= -2\na_x \au +B_1 - 2\na_y \delta +2 \na_y \pa{\na_{x} \cdot  \na_y} \beta, \\
    M_3&:= \na_y^{\otimes 2} \beta.
\end{align*}

\medskip

\noindent
\textbf{Terms in $\ep^{2}$.} Using \eqref{eq:E0_psi0}, the equation satisfied at order $\ep^2$ is
\begin{multline*}
    -\Delta_y \p_4 + \pa{v-2 \na_{x,k} \cdot  \na_y} \p_3 +\pa{- \na_{x,k}^2 +W - E^0_{\ell,k}} \p_2 - E_2 \p^0_{\ell,k} - E_1 \p_1 \\
    = 0,
\end{multline*}
and averaging in $y$ over $\Omega$ yields
\begin{align}
    & \qquad E_2 \p^0_{\ell,k}(x) = -E_1 U_1(x) + \pa{-\na_{k}^2 - E^0_{\ell,k}} U_2(x)  + \fint_\Omega v(x,y) \p_3(x,y) \, \d y  \nonumber \\
    & = \hspace{-0.1cm} \left[ \pa{V_1 -E_1} U_1 + \pa{H^0_k - E^0_{\ell,k}} U_2  + K \p^0_{\ell,k} +  L \cdot \na_{k} \p^0_{\ell,k} +  \na^{\otimes 2} \p^0_{\ell,k} :  M\right](x) \label{eq:eq_on_U2}
\end{align}
where the macroscopic scalar field $K$, vector field $L$, and matrix field $M$, are defined by
\begin{align}
    K(x) &:= \fint_\Omega v(x,y) \chi_3(x,y) \, \d y ,  \label{eq:def_K}\\
    L(x) & :=  \fint_\Omega v(x,y)A_3(x,y) \, \d y,  \label{eq:def_L}\\
    M(x) & :=  \fint_\Omega v(x,y) \na_x^{\otimes 2} \beta(x,y) \, \d y \nonumber \\
	 &= 4 \pa{\fint_\Omega \pa{\partial_{y_j} \na \gamma}(x,y) \cdot (\partial_{y_m} \na \gamma)(x,y) \, \d y }_{1 \le j,m \le d}. \label{eq:def_M}
\end{align}
Note that $M(x)$ is symmetric and semidefinite positive for a.e. $x$. Using \eqref{cro}, we have 
\begin{align*}
    \ps{\p^0_{\ell,k},\pa{V_1 -E_1} U_1} = - \ps{U_1,\pa{H^0_k-E^0_{\ell,k}}U_1},
\end{align*}
and integrating \eqref{eq:eq_on_U2} against $\overline{\p^0_{\ell,k}}$ on $\Omega$,
\begin{align}\label{eq:E2}
    E_2 \hspace{-0.05cm}= \hspace{-0.05cm} - \ps{U_1,\pa{H^0_k -E^0_{\ell,k}}U_1}\hspace{-0.05cm} +\hspace{-0.05cm} \int_\Omega \ab{\p^0_{\ell,k}}^2 \pa{K - \nabla_k \cdot L} +  \int_\Omega \overline{\p^0_{\ell,k}} M \hspace{-0.05cm}:\hspace{-0.05cm} \na^{\otimes 2} \p^0_{\ell,k}.
\end{align}
The last term is real because $M$ is a symmetric matrix and hence
\begin{align*}
    &\int_\Omega \overline{\p^0_{\ell,k}} M : \na^{\otimes 2} \p^0_{\ell,k} \\
    & \qquad = - \sum_{j,m=1}^d \left( \int_\Omega  M_{jm} \pa{\partial_{x_j} \overline{\p^0_{\ell,k}}} \partial_{x_m} \p^0_{\ell,k} +  \int_\Omega  \overline{\p^0_{\ell,k}} \pa{\partial_{x_m} \p^0_{\ell,k}} \pa{\partial_{x_j} M_{jm}} \right) \\
    & \qquad = - \re \left( \sum_{j,m=1}^d \int_\Omega M_{jm} \pa{\partial_{x_j} \overline{\p^0_{\ell,k}}} \partial_{x_m} \p^0_{\ell,k} + \f{1}{2} \int_\Omega \ab{\p^0_{\ell,k}}^2 \partial_{x_j,x_m}^2 M_{jm} \right).
\end{align*}
We deduce from \eqref{eq:eq_on_U2} that 
\begin{align*}
& (H^0_k - E^0_{\ell,k}) U_2 \\
& \qquad =  \Big( \pa{E_2 -K} \p^0_{\ell,k} - L\cdot \na_{k} \p^0_{\ell,k} 
- M: \na^{\otimes 2} \p^0_{\ell,k} - \pa{V_1 -E_1} U_1\Big),
\end{align*}
which is orthogonal to $\p^0_{\ell,k}$ in $L^2_{\rm per}(\Omega)$, and using \eqref{eq:intermediate_norm}, we get $\big\langle \psi^0_{\ell,k},U_2 \big\rangle_{L^2_{\rm per}}=0$. Thus, 
$$
U_2 = - (H^0_k - E^0_{\ell,k})_\perp^{-1} \left(K \p^0_{\ell,k} + L\cdot \na_{k} \p^0_{\ell,k} 
+ M: \na^{\otimes 2} \p^0_{\ell,k} + \pa{V_1 -E_1} U_1\ \right).
$$

\begin{remark}
    Using the hypotheses and notation of Theorem~\ref{thm:unif_CV_eigenvalues}, the previous formal computations lead us to conjecture that
    \begin{align*}
\nor{\p^\ep_{\ell,k} -  \widetilde{\cM}^{\eps, (3)}_{E_{\ell, k}^0, k} \p^0_{\ell,k}}{L^2_{\rm per}} \le c \ep^3,
    \end{align*}
    with
    \begin{multline*}
       \widetilde{\cM}^{\eps, (3)}_{z, k}  := \Mz  \\
	- \ep^2 \pa{H^0_k - z}^{-1}_{\perp} \Big( K + L\cdot \na_{k}  
        + M:\na^{\otimes 2} - \pa{V_1 -E_1}\pa{H^0_k - z}^{-1}_{\perp}\pa{V_1 -E_1} \Big),
    \end{multline*}
    and $E_1$, $E_2$, $K$, $L$, $M$ respectively defined by \eqref{eq:def_E1}, \eqref{eq:E2}, \eqref{eq:def_K}, \eqref{eq:def_L}, and \eqref{eq:def_M}. Since we are mainly interested in $H^1_{\rm per}$ (energy) estimates, we will not further investigate this conjecture for the sake of brevity.
\end{remark}

\bibliographystyle{my-alpha}
\bibliography{cances_garrigue_gontier}
\end{document}